\PassOptionsToPackage{unicode}{hyperref}
\PassOptionsToPackage{hyphens}{url}
\PassOptionsToPackage{dvipsnames,svgnames,x11names}{xcolor}
\documentclass[12pt]{article}

\usepackage{amsmath,amssymb}
\usepackage{iftex}
\usepackage{bm}
\usepackage{tikz}
\usepackage[ruled,vlined]{algorithm2e}
\usepackage{booktabs}   
\usepackage{longtable}  
\usepackage{array}      
\usepackage{multirow}   
\usepackage{colortbl}   
\usepackage{xcolor}     
\usepackage{caption}    
\usepackage{upquote}
\usepackage{natbib}
\usepackage{amsthm}
\usepackage{mathrsfs}
\usepackage{amsmath,amssymb}
\usepackage{iftex}

\IfFileExists{upquote.sty}{\usepackage{upquote}}{}
\IfFileExists{microtype.sty}{
  \usepackage[]{microtype}
  \UseMicrotypeSet[protrusion]{basicmath} 
}{}
\makeatletter
\@ifundefined{KOMAClassName}{
  \IfFileExists{parskip.sty}{%
    \usepackage{parskip}
  }{
    \setlength{\parindent}{0pt}
    \setlength{\parskip}{6pt plus 2pt minus 1pt}}
}{
  \KOMAoptions{parskip=half}}
\makeatother
\usepackage{xcolor}
\makeatletter
\ifx\paragraph\undefined\else
  \let\oldparagraph\paragraph
  \renewcommand{\paragraph}{
    \@ifstar
      \xxxParagraphStar
      \xxxParagraphNoStar
  }
  \newcommand{\xxxParagraphStar}[1]{\oldparagraph*{#1}\mbox{}}
  \newcommand{\xxxParagraphNoStar}[1]{\oldparagraph{#1}\mbox{}}
\fi
\ifx\subparagraph\undefined\else
  \let\oldsubparagraph\subparagraph
  \renewcommand{\subparagraph}{
    \@ifstar
      \xxxSubParagraphStar
      \xxxSubParagraphNoStar
  }
  \newcommand{\xxxSubParagraphStar}[1]{\oldsubparagraph*{#1}\mbox{}}
  \newcommand{\xxxSubParagraphNoStar}[1]{\oldsubparagraph{#1}\mbox{}}
\fi
\makeatother

\usepackage{longtable,booktabs,array}
\usepackage{calc} 
\usepackage{etoolbox}
\makeatletter
\patchcmd\longtable{\par}{\if@noskipsec\mbox{}\fi\par}{}{}
\makeatother
\IfFileExists{footnotehyper.sty}{\usepackage{footnotehyper}}{\usepackage{footnote}}
\makesavenoteenv{longtable}
\usepackage{graphicx}
\makeatletter
\def\maxwidth{\ifdim\Gin@nat@width>\linewidth\linewidth\else\Gin@nat@width\fi}
\def\maxheight{\ifdim\Gin@nat@height>\textheight\textheight\else\Gin@nat@height\fi}
\makeatother
\setkeys{Gin}{width=\maxwidth,height=\maxheight,keepaspectratio}
\makeatletter
\def\fps@figure{htbp}
\makeatother

\makeatletter
\@ifpackageloaded{caption}{}{\usepackage{caption}}
\AtBeginDocument{%
\ifdefined\contentsname
  \renewcommand*\contentsname{Table of contents}
\else
  \newcommand\contentsname{Table of contents}
\fi
\ifdefined\listfigurename
  \renewcommand*\listfigurename{List of Figures}
\else
  \newcommand\listfigurename{List of Figures}
\fi
\ifdefined\listtablename
  \renewcommand*\listtablename{List of Tables}
\else
  \newcommand\listtablename{List of Tables}
\fi
\ifdefined\figurename
  \renewcommand*\figurename{Figure}
\else
  \newcommand\figurename{Figure}
\fi
\ifdefined\tablename
  \renewcommand*\tablename{Table}
\else
  \newcommand\tablename{Table}
\fi
}
\@ifpackageloaded{float}{}{\usepackage{float}}
\floatstyle{ruled}
\@ifundefined{c@chapter}{\newfloat{codelisting}{h}{lop}}{\newfloat{codelisting}{h}{lop}[chapter]}
\floatname{codelisting}{Listing}

\makeatother
\makeatletter
\@ifpackageloaded{caption}{}{\usepackage{caption}}
\@ifpackageloaded{subcaption}{}{\usepackage{subcaption}}
\makeatother
\usepackage{bookmark}

\IfFileExists{xurl.sty}{\usepackage{xurl}}{} 
\hypersetup{
  pdftitle={Title},
  pdfauthor={Author 1; Author 2},
  pdfkeywords={3 to 6 keywords, that do not appear in the title},
  colorlinks=true,
  linkcolor={blue},
  filecolor={Maroon},
  citecolor={Blue},
  urlcolor={Blue},
  pdfcreator={LaTeX via pandoc}}

\newcommand{\anon}{1}

\newtheorem{proposition}{Proposition}
\newtheorem{theorem}{Theorem}
\newcommand{\iid}{\overset{iid}{\sim}}
\newcommand{\R}{\mathbb{R}}
\newcommand{\bd}{d}
\newcommand{\KL}{\operatorname{KL}}

\newcommand{\DenX}[1]{\mathscr{D}(#1)}

\begin{document}

\def\spacingset#1{\renewcommand{\baselinestretch}%
{#1}\small\normalsize} \spacingset{1}


\if1\anon
{
  \title{\bf NP-LEAP: Nonparametric Latent Exchangeability Prior for Model-Lean Borrowing from Historical Data}
  %
  \author{Ethan M. Alt, \hspace{.2cm}\\
    Quantitative Sciences Innovation, GSK
    \\ \\
    Miheer Dewaskar, \hspace{.2cm} \\ Department of Mathematics and Statistics, 
    University of New Mexico 
    \\ \\
    Jacob M. Maronge, Yuelin Lu, and Matthew A. Psioda, \hspace{.2cm} \\
    Quantitative Sciences Innovation, GSK
  }
  \maketitle
} \fi

\if0\anon
{
  \bigskip
  \bigskip
  \bigskip
  \begin{center}
    {\LARGE\bf NP-LEAP: Nonparametric Latent Exchangeability Prior for Model-Lean Borrowing from Historical Data}
\end{center}
  \medskip
} \fi

\bigskip
\begin{abstract}
Bayesian dynamic borrowing (BDB) methods leverage historical data to reduce treatment effect uncertainty, yet existing approaches rely on parametric outcome models susceptible to misspecification. We propose the nonparametric latent exchangeability prior (NP-LEAP), an outcome-agnostic, assumption-lean framework to borrow information from historical data. The NP-LEAP performs individual-level exchangeability assessment, inducing Bayesian model averaging over all possible partitions of the historical data into exchangeable and nonexchangeable subsets. Although applicable to a variety of data types with choice of appropriate kernel, the NP-LEAP is particularly well-suited for studies with time-to-event outcomes, where parametric BDB is potentially triply misspecified — imposing a parametric baseline hazard, the proportional hazards structure, and blanket exchangeability. We establish posterior consistency under mild regularity conditions. Simulation studies demonstrate favorable operating characteristics relative to parametric borrowing methods and nonborrowing semiparametric frequentist methods. We illustrate the method by augmenting the control arm in a randomized trial of patients with non-small cell lung cancer.
\end{abstract}

\noindent%
{\textit Keywords:} Bayesian nonparametrics; Dynamic borrowing; Historical data; Exchangeability; Density estimation
\vfill

\newpage
\spacingset{1.8} 


\section{Introduction}
\label{sec:intro}

Bayesian dynamic borrowing (BDB) methods leverage historical data to reduce
treatment effect uncertainty in current studies, yet existing approaches almost
universally rely on parametric outcome models.
This reliance is problematic across all outcome types — continuous, binary, longitudinal, and time-to-event — because borrowing under a misspecified model can yield biased inference even when populations are genuinely exchangeable. The ideal solution is therefore outcome-agnostic: a borrowing framework that
accommodates any outcome type with minimal modeling assumptions. This paper introduces the \textit{nonparametric latent exchangeability prior}
(NP-LEAP), which achieves this by operating directly in the space of outcome
densities via Bayesian nonparametric modeling, with the choice of kernel
determining applicability to continuous, discrete, or censored outcomes.

Borrowing information in a misspecified model is perhaps most acute for time-to-event outcomes, where parametric BDB is
potentially \textit{triply} misspecified relative to standard semiparametric
practice: it imposes a parametric baseline hazard, a proportional hazards assumption, and assumes blanket exchangeability of all historical
observations, when in reality only a subset may be exchangeable with the current data. Each assumption is unverifiable in practice, and violation of any one induces systematic bias.

Most BDB methods are motivated by group-level exchangeability arguments.
Given data for $J$ groups $\{y_{ij} : j = 1, \ldots, J,\ i = 1, \ldots, n_j\}
\subseteq \R^d$, a standard model assumes $y_{ij} \sim F(\cdot | \bm{\theta}_j)$
with $\bm{\theta}_j \sim G(\cdot | \bm{\eta})$, where $F(\cdot|\bm{\theta}_j)$
is a parametric distribution for group $j$ and $G(\cdot|\bm{\eta})$ is a
hyperprior governing the degree of shrinkage (which may be degenerate).
This family includes power priors \citep{ibrahim2000power}, commensurate priors
\citep{hobbs2012commensurate}, and robust mixture priors
\citep{schmidli2014robust}.
Three limitations are shared across these approaches.
First, when $J$ is small---as in the common clinical trial setting of one
current study $D$ and one historical study $D_0$---the hyperprior for
$\bm{\eta}$ must be strongly informative to reduce uncertainty, limiting
objectivity.
Second, these methods apply blanket discounting: all individuals in $D_0$ are
treated as equally relevant or irrelevant, with no individual-level assessment.
Third, parametric assumptions are retained throughout, and borrowing properties
under model misspecification have received limited attention.

\cite{alt2024leap} address the first two limitations through the latent
exchangeability prior (LEAP), which introduces individual-level latent
indicators $\epsilon_{0j} \in \{0,1\}$ to classify each historical observation
as exchangeable with the current data or not.
Formally, the LEAP assumes $\{y_i\}_{i=1}^n \overset{i.i.d.}{\sim}
F_1(\cdot|\bm{\theta}_1)$ and $\{y_{0j}\}_{j=1}^{n_0} \overset{i.i.d.}{\sim}
\sum_{k=1}^K \omega_k F_k(\cdot|\bm{\theta}_k)$ with prior $\pi(\bm{\theta}_1, \ldots, \bm{\theta}_K, \bm{\omega})$, inducing Bayesian model
averaging over all partitions of $D_0$ into exchangeable and non-exchangeable
subsets, with partition weights determined by marginal likelihoods.
Because inference is driven by sample size rather than number of datasets,
prior elicitation is more objective than in group-level approaches.
However, the LEAP retains parametric assumptions for the sampling model and
requires pre-specification of the number of mixture components $K$.
The NP-LEAP removes both constraints.

The NP-LEAP specifies a Dirichlet process mixture model (DPMM) for the exchangeable and nonexchangeable density components, with individual-level latent Bernoulli indicators classifying each historical observation as exchangeable with the current data or not (Section~\ref{ss:np-leap}). This yields Bayesian model averaging over all $2^{n_0}$ partitions of the historical data, with partition weights driven by marginal likelihoods rather than subjective tuning (Proposition~\ref{prop:leap_bma}). When covariates are available, the DPMM can be replaced by a dependent Dirichlet process, as in the time-to-event application of Section~\ref{sec:analysis}.

We establish posterior consistency of the NP-LEAP under mild regularity
conditions (Theorem~\ref{thm:consistency}).
Crucially, while we illustrate the framework with time-to-event outcomes---where
the tension between parametric borrowing and semiparametric practice is
sharpest---the NP-LEAP applies equally to continuous, discrete, or longitudinal
outcomes through appropriate specification of the kernels $f(\cdot|\bm{\theta})$
and $g(\cdot|\bm{\lambda})$.

\subsection{Related Work}
\label{ss:related}

The normalized power prior \citep[NPP;][]{ibrahim2000power,duan2006evaluating},
robust meta-analytic predictive prior \citep[RMAPP;][]{schmidli2014robust}, and
commensurate prior \citep[CP;][]{hobbs2012commensurate} are canonical BDB
approaches, each sharing the parametric modeling limitations described above.
Recent work has strengthened these priors through hyperparameter selection
\citep{shen2023optimal,demartino2025eliciting,egidi2022avoiding} and propensity
score adjustment
\citep{wang2019propensity,wang2022propensity,Psioda2025IPWDynamicBorrowing},
but parametric assumptions are retained throughout.

In time-to-event settings specifically, several frequentist approaches borrow
under the PH model \citep{liu2014estimating,huang2016efficient,li2023accommodating},
inheriting its structural constraints; since the Cox estimator converges to a
censoring-distribution-dependent quantity under misspecification
\citep{lin1989robust}, borrowing under misspecified PH models additionally
requires that censoring distributions be identical across studies.
\cite{gao2025doubly} propose a doubly robust procedure for the restricted mean
survival time, but this is estimand-specific; \citet{kwiatkowski2024case} offer
a Bayesian piecewise constant hazards approach but require pre-specification of
the time axis partition.

Regulatory agencies have been open to augmenting trial data with external
controls \citep{fda2023externally}, and a recent FDA draft guidance on
Bayesian methodology in clinical trials \citep{fda2026bayesian} explicitly
endorses dynamic discounting over static approaches on the basis of superior
operating characteristics under prior-data conflict, while identifying the
primary challenge of dynamic methods as the need to specify additional
parameters governing the similarity measure and rate at which borrowing
declines.
The NP-LEAP directly addresses this challenge: its nonparametric prior adapts
automatically to the degree of exchangeability without requiring
pre-specification of mixture components or a discounting rate, and its
individual-level exchangeability indicators provide precisely the robustness
to prior-data conflict that the guidance prioritizes.

\subsection{Contributions}
Our approach offers three key advantages over existing methods. First, nonparametric modeling yields assumption-lean inference, estimating the outcome density flexibly rather than restricting it to a parametric family. Second, modeling the non-exchangeable historical data component via a DPMM removes the need to prespecify the number of mixture components, as the Dirichlet process adapts automatically. Third, by modeling the full outcome density, NP-LEAP is estimand-agnostic: median survival, RMST, time-averaged hazard ratios, and milestone probabilities are all obtainable within a single unified model.

The remainder of the paper is organized as follows. Section~\ref{sec:methodology} introduces the NP-LEAP, covering background on the LEAP and DPMMs, the main model, its Bayesian model averaging interpretation (Proposition~\ref{prop:leap_bma}), its clustering scheme (Proposition~\ref{prop:cluster}), a posterior consistency theorem (Theorem~\ref{thm:consistency}) with proof sketch, and the dependent Dirichlet process (DDP; \citet{maceachern2000dependent}) when controlling for covariates. Section~\ref{sec:sim} presents a simulation study motivated by the real data application. Section~\ref{sec:analysis} applies NP-LEAP to trial data, including propensity score matching, survival analysis, functional summaries, and a tipping point analysis. Section~\ref{sec:discussion} concludes.

\section{Data and Scientific Questions}
\label{sec:data}
We consider two clinical studies of patients with non-small cell lung cancer (NSCLC), the most prevalent form of lung cancer and a central focus of ongoing immunotherapy drug development.

A Phase 2 (i.e., the current data) randomized controlled trial evaluated chemotherapy combined with an investigational immunotherapy agent versus chemotherapy combined with placebo. A total of $104$ patients were randomized in a 2:1 ratio ($70$ treatment, $34$ control). The primary estimand is the difference in overall survival probabilities at 12 months. 

An earlier Phase 3 trial (i.e., the historical data) enrolled 343 patients to a similar chemotherapy control regimen. The substantially larger historical control arm offers the possibility of augmenting the current control arm through Bayesian dynamic borrowing, potentially reducing uncertainty about the OS treatment effect estimate — a common and practically important challenge in Phase 2 oncology development, where control arms are routinely undersized.

This application poses three scientific questions that the analysis in Section~\ref{sec:analysis} answers.
\begin{enumerate}
    \item Does the time-specific survival difference between the active and control arms, $\Delta(t) = S_1(t) - S_0(t)$, meaningfully differ at clinically relevant milestone time points?
    \item Does borrowing from the historical control arm increase the precision of treatment effect estimates sufficiently to alter the inferential conclusion?
    \item How sensitive is any conclusion to the degree of assumed exchangeability between the two control populations?
\end{enumerate}

Two features of the data make this a particularly demanding test case for any borrowing method. First, as shown in Section~\ref{sec:ps_matching}, propensity-score-matched Kaplan–Meier curves suggest potential non-exchangeability between the concurrent and historical control groups at event times under one year. Second, late separation of the treatment and control survival functions in the current trial renders the proportional hazards assumption implausible. Since the Cox estimator converges to a censoring-distribution-dependent quantity under misspecification \citep{lin1989robust}, applying either standard frequentist analysis or existing parametric Bayesian borrowing methods here requires assumptions that the data contradict.

\section{Methodology}
\label{sec:methodology}

\subsection{Latent exchangeability prior}
The LEAP is a class of priors that performs individualized discounting of the historical data. It is predicated on the assumption that the historical data follow a finite mixture model, and the current data follow one component of the mixture (without loss of generality, the first component). Mathematically, we can express this assumption as
\begin{align}
    &y_i \sim F_1(\cdot | \bm{\theta}_1), \ \ i = 1, \ldots, n, 
    \quad y_{0j} \sim \sum_{k=1}^K \omega_k F_k(\cdot | \bm{\theta}_k), \ \ j = 1, \ldots, n_0,
    \label{eq:leap_assumption}
\end{align}
where $\{ F_1(\cdot | \bm{\theta}_k) \}_{k=1}^K,$ are parametric distributions functions, $ \bm{\omega} = (\omega_1,  \ldots, \omega_K)$ are component weights with $\omega_k > 0$ and $\sum_{k=1}^K \omega_k = 1$, and an appropriate prior $\pi(\bm{\theta}_1, \ldots, \bm{\theta}_K, \bm{\omega})$. Mathematically, analysis under the LEAP is equivalent to model averaging across all possible partitions of the historical data into exchangeable and nonexchangeable subsets (see also Proposition \ref{prop:leap_bma}).

Sections~\ref{ss:np-leap-abstract}--\ref{sec:cluster} develop methodology that removes the need to specify either
$K$ or a parametric family. Section~\ref{sec:prelim_dpmm} first provides the necessary background.

\subsection{Dirichlet process mixture models}
\label{sec:prelim_dpmm}

The Dirichlet process mixture model (DPMM) for data $\bm{y} = \{y_i\}_{i=1}^n$
is represented by
\begin{align}
    y_i \mid \bm{\theta}'_i \sim f(\cdot \mid \bm{\theta}'_i),
    \qquad
    \bm{\theta}'_i \mid \mu \sim \mu,
    \qquad
    \mu \mid \alpha, H_0 \sim \mathrm{DP}(\alpha, H_0),
    \label{eq:dpm_hier}
\end{align}
where $f(\cdot \mid \bm{\theta})$ is a kernel depending on $\bm{\theta} \in \Theta$
and $\mu \sim \mathrm{DP}(\alpha, H_0)$ \citep{ferguson1973bayesian} is a random
probability measure on $\Theta$ with concentration parameter $\alpha > 0$ and
base measure $H_0$.
The stick-breaking construction \citep{sethuraman1994constructive} gives
$\mu = \sum_{k=1}^{\infty} w_k\, \delta(\bm{\theta}_k)$ with
$\{\bm{\theta}_k\}_{k=1}^{\infty} \iid H_0$ and $\bm{w} \sim \mathrm{GEM}(\alpha)$
\citep{ewens1990population}, so the marginal density of $y_i$ given $\mu$ is
the infinite kernel mixture
$f_{\mu}(y) = \sum_{k=1}^{\infty} w_k f(y \mid \bm{\theta}_k)$;
mixing a continuous kernel with the almost surely discrete DP yields a flexible
continuous density.
As $\alpha \downarrow 0$, $\bm{w} \overset{p}{\to} (1, 0, 0, \ldots)'$ and
$f_{\mu}$ converges to $f(\cdot \mid \bm{\theta}_1)$ with $\bm{\theta}_1 \sim H_0$,
casting parametric Bayesian inference as a limiting case of the DPMM.

Via the Chinese Restaurant Process (CRP) representation
\citep{gershman2012tutorial}, the DPMM induces a clustering structure on
$\bm{y}$ through latent indicators $\bm{z} = \{z_i\}_{i=1}^n$. For observation $i$, the prior clustering probabilities are given by
$$
    \pi(z_i = k | \bm{z}_{1:(i-1)}) \propto 
    \begin{cases}
        N_{k}^{[1:(i-1)]} & \text{ for an existing cluster } k = 1, \ldots, K_i \\
        \alpha & \text{ for a new cluster } k = K_i + 1
    \end{cases}
    ,
$$
where $\bm{z}_{1:(i-1)}$ are the cluster memberships for the first $i-1$ observations and $N_{k}^{[1:(i-1)]}$ is the number of observations already assigned to cluster
$k$.
This ``rich get richer'' property means larger clusters preferentially attract
new members.
In Section~\ref{sec:cluster}, we derive the analogous scheme under the NP-LEAP,
wherein historical observations may either join exchangeable clusters shared
with the current data or be assigned to separate nonexchangeable clusters.

\subsection{A nonparametric extension of the LEAP}
\label{ss:np-leap-abstract}

We first state the model and its BMA interpretation for general densities $f, g$. Proposition~\ref{prop:leap_bma} holds for any prior on the density space and does not require a parametric form, so the result is more broadly applicable than the NP-LEAP itself. Section~\ref{ss:np-leap} then instantiates this framework by placing DPMM priors on $f$ and $g$. Let the current data $\bm{y} = \{y_i\}_{i=1}^n$ have density $f$ and suppose that the historical data $\bm{y_{0}} = \{y_{0j}\}_{j=1}^{n_0}$ is a two-part mixture $\gamma f + (1-\gamma) g$ consisting of the current data density $f$ and another  density $g$. Based on latent indicators $\bm{\epsilon_0} = \{\epsilon_{0j}\}_{j=1}^n \iid \textrm{Ber}(\gamma)$, we may equivalently express this model as
\begin{align}
    \{y_i\}_{i=1}^n | f, g \iid f
    \quad \text{ and for }  j = 1, \ldots, n_0: \quad
    y_{0j} | \epsilon_{0j}, f, g \sim f^{\epsilon_{0j}} g^{1 - \epsilon_{0j}}
    \label{eq:generalized_leap_assumption}
\end{align}
Here $\gamma$ is the prior probability that an historical data individual $j$ will be exchangeable with those in the current data (i.e.~$y_{0j} \sim f$). In the following, $\DenX{\R^d}$ refers to the space of densities on $\R^d$. 
\begin{proposition}
    \label{prop:leap_bma}
    Suppose that $\Pi$ is a jointly independent prior distribution on densities $f, g \in \DenX{\R^d}$ and exchangeability proportion $\gamma \in [0,1]$ (i.e.~$\Pi(f \in A, g \in B, \gamma \in C) = \Pi(f \in A) \times \Pi(g \in B) \times \Pi(\gamma \in C)$ for any measurable $A, B \subseteq \DenX{\R^d}$ and $C \subseteq [0,1]$). Then the marginal posterior distribution of $P(f \in  \cdot |\bm{y}, \bm{y}_0)$ for the model \eqref{eq:generalized_leap_assumption} satisfies
    $$
        P( f \in A | \bm{y}, \bm{y}_0) = \sum_{\bm{b} \in \{0,1\}^{n_0}} P(\bm{\epsilon}_0 = \bm{b} | \bm{y}, \bm{y}_0) \cdot P(f \in A | \bm{y}, \bm{y}_0, \bm{\epsilon}_0 = \bm{b})
    $$
    for every measurable subset $A \subseteq \DenX{\R^d}$, where given $\bm{b} = (b_1, \ldots, b_{n_0}) \in \{0,1\}^{n_0}$
    $$
    P( f \in A | \bm{y}, \bm{y}_0, \bm{\epsilon}_0 = \bm{b}) \propto \int_{A} \prod_{i=1}^n f(y_i) \prod_{j=1}^{n_0} f(y_{0j})^{b_j} \Pi(\bd f)
    $$
    denotes the posterior of $f$ based on pooling the current and exchangeable historical observations when the event $\bm{\epsilon_0} = \bm{b}$ is known, and  denoting $|\bm{b}| \doteq \sum_{j=1}^{n_0} b_j$, 
    the posterior probability of the event $\bm{\epsilon_0} = \bm{b}$ is given by
    $$
    \begin{aligned}
    P(\bm{\epsilon}_0 = \bm{b} | \bm{y}, \bm{y}_0) \propto \int \prod_{i=1}^n f(y_i) \prod_{j=1}^{n_0} f(y_{0j})^{b_j} \Pi(\bd f) \times \int \prod_{j=1}^{n_0} g(y_{0j})^{1-b_j}\Pi(\bd g)  \times \int   \gamma^{|\bm{b}|} (1-\gamma)^{n_0-|\bm{b}|} \Pi(\bd \gamma)
    \end{aligned}
    $$
    for a proportionality constant such that $\sum_{\bm{b} \in \{0,1\}^{n_0}} P(\bm{\epsilon}_0 = \bm{b} | \bm{y}, \bm{y}_0) = 1$. 
\end{proposition}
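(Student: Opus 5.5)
The plan is to write down the joint distribution of all quantities, $(f, g, \gamma, \bm{\epsilon}_0, \bm{y}, \bm{y}_0)$, and then marginalize. Everything reduces to Bayes' rule together with the law of total probability over the finite set $\{0,1\}^{n_0}$. Under model \eqref{eq:generalized_leap_assumption}, with the independent prior $\Pi(\bd f)\,\Pi(\bd g)\,\Pi(\bd\gamma)$ and $\epsilon_{0j}\mid\gamma \iid \mathrm{Ber}(\gamma)$, the joint measure is
\[
\prod_{i=1}^n f(y_i)\prod_{j=1}^{n_0} f(y_{0j})^{b_j} g(y_{0j})^{1-b_j}\,\gamma^{|\bm{b}|}(1-\gamma)^{n_0-|\bm{b}|}\,\Pi(\bd f)\,\Pi(\bd g)\,\Pi(\bd\gamma),
\]
taken jointly with the event $\bm{\epsilon}_0=\bm{b}$. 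This uses $f^{b_j}g^{1-b_j}$ evaluated at $y_{0j}$ and the conditional independence of the observations given $(f,g,\bm{\epsilon}_0)$.

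First, I will obtain the posterior of $\bm{\epsilon}_0$ by integrating out $f$, $g$ and $\gamma$. The integrand factors into a function of $f$, a function of $g$ and a function of $\gamma$, and the prior is a product measure. Tonelli's theorem therefore lets the triple integral split into the stated product of three integrals. Normalizing over the $2^{n_0}$ values of $\bm{b}$ gives the displayed expression for $P(\bm{\epsilon}_0=\bm{b}\mid\bm{y},\bm{y}_0)$.

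Second, I will compute $P(f\in A\mid \bm{y},\bm{y}_0,\bm{\epsilon}_0=\bm{b})$. I restrict the $f$-integral to $A$ and keep the $g$ and $\gamma$ integrals over their full spaces. Those two factors do not depend on $f$, so they cancel in the ratio. What remains is the pooled posterior $\int_A \prod_i f(y_i)\prod_j f(y_{0j})^{b_j}\Pi(\bd f)$, normalized by the same integral over $\DenX{\R^d}$.

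Finally, the mixture identity follows from the law of total probability, $P(f\in A\mid\cdot)=\sum_{\bm{b}}P(\bm{\epsilon}_0=\bm{b}\mid\cdot)\,P(f\in A\mid\cdot,\bm{\epsilon}_0=\bm{b})$. The sum is finite, so no interchange issue arises.

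The main obstacle is measure-theoretic well-definedness. The density space $\DenX{\R^d}$ needs a $\sigma$-algebra under which the evaluation maps $f\mapsto f(y)$ are measurable; I will take the one generated by these maps, or equivalently use the Borel sets of the $L^1$ topology. Conditioning is only defined for marginal likelihoods that are positive and finite. I would handle this by interpreting the statement for $(\bm{y},\bm{y}_0)$ outside a null set of the prior predictive. On that set the normalizing constant $\sum_{\bm{b}}(\cdots)$ is positive, and each term with a positive weight has a positive $f$-integral, so every conditional probability is defined. Terms whose $f$-integral vanishes receive weight zero and may be assigned any conditional law. Finiteness holds almost surely by the same argument, since the prior predictive density integrates to one.
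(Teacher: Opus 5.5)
Your proposal is correct and is essentially the paper's argument in Supplement Section A. In both, the joint law under the product prior splits into an $f$-factor, a $g$-factor and a $\gamma$-factor; integrating out $g$ and $\gamma$ gives $P(\bm{\epsilon}_0=\bm{b}\mid\bm{y},\bm{y}_0)$ and the pooled posterior of $f$, and the result follows by summing over the finitely many $\bm{b}\in\{0,1\}^{n_0}$. Your treatment of null sets where marginal likelihoods vanish or are infinite is extra care the paper omits, but one aside is wrong: the evaluation $\sigma$-algebra and the $L^1$-Borel $\sigma$-algebra are not equivalent on $\DenX{\R^d}$ in general, only on continuous densities such as Gaussian-mixture draws.
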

The proof of Proposition \ref{prop:leap_bma} is in Section A of the Supplementary Materials. Proposition~\ref{prop:leap_bma} extends the result of \cite{alt2024leap} to a nonparametric setting. Crucially, whereas the original LEAP employs a $K$-component parametric mixture with weights $\{\omega_k\}_{k=1}^K$, the generalization in \eqref{eq:generalized_leap_assumption} requires only a two-component mixture for the historical data. In this paper, we specify $f$ and $g$ as DPMMs, placing DP priors on their respective mixing measures $\mu$ and $\nu$.

\subsection{Nonparametric latent exchangeability prior}
\label{ss:np-leap}

We now introduce the nonparametric latent exchangeability prior
(NP-LEAP) by taking the priors of $f$ and $g$ in Proposition \ref{prop:leap_bma} to be Dirichlet process mixture models (Section \ref{sec:prelim_dpmm}). Namely, starting from two parametric families of kernels $\{f(y|\bm{\theta}): \bm{\theta} \in \Theta\}$ and $\{g(y|\bm{\lambda}): \bm{\lambda} \in \Lambda\}$ on $\R^d$ (e.g.~isotropic location-scale Gaussians) and independent random distributions  $\mu \sim \mathrm{DP}(\alpha, P_0)$ and $\nu \sim \mathrm{DP}(\eta, Q_0)$ on $\Theta$ and $\Lambda$, respectively, we consider model \eqref{eq:generalized_leap_assumption} with $f=f_{\mu}$ and $g=g_{\nu}$ following the mixture densities 
$f_\mu(y) \doteq \int f(y \mid \bm{\theta})\,\mu(\bd \bm{\theta})$ and
$g_\nu(y) \doteq \int g(y \mid \bm{\lambda})\,\nu(\bd \bm{\lambda})$. 

The NP-LEAP model extending \eqref{eq:generalized_leap_assumption} is equivalently expressed as:
\begin{align}
    y_i \mid \mu & \sim f_\mu,
    \quad i = 1, \ldots, n,
    \qquad 
    y_{0j} \mid \mu, \nu, \gamma
        \sim \gamma\, f_\mu + (1-\gamma)\, g_\nu,
    \quad j = 1, \ldots, n_0.
    \label{eq:npleap_hier}
\end{align}
Using the stick-breaking representation $\mu = \sum_{k=1}^{\infty} w_k \delta(\bm{\theta}_k)$ and $\nu = \sum_{l=1}^{\infty} \rho_l \delta(\bm{\lambda}_l)$ from Section \ref{sec:prelim_dpmm}, model  \eqref{eq:npleap_hier} corresponds to the  mixture likelihood
\begin{align}
    &p(\bm{y}, \bm{y}_0| \{\bm{\theta}_k\}_{k=1}^\infty, \{\bm{\lambda}_l\}_{l=1}^\infty, \bm{w}, \bm{\rho}, \gamma) \notag\\
    &= \left[ \prod_{i=1}^n \left\{
        \sum_{k=1}^{\infty} w_{k} f(y_i \mid \bm{\theta}_k)
       \right\}
    \right]
    \left[
        \prod_{j=1}^{n_0} \left\{
            \gamma \sum_{k=1}^{\infty} w_{k} f(y_{0j} \mid \bm{\theta}_k)
            + (1 - \gamma)
              \sum_{l=1}^{\infty} \rho_{l} g(y_{0j} \mid \bm{\lambda}_{l})
        \right\}
    \right],
    \label{eq:npleap_stick}
\end{align}
with priors $\{\bm{\theta}_k\}_{k=1}^\infty \iid P_0$,
$\{\bm{\lambda}_{l}\}_{l=1}^\infty \iid Q_0$,
$\bm{w} \sim \text{GEM}(\alpha)$, and $\bm{\rho} \sim \text{GEM}(\eta)$.

The representation in \eqref{eq:npleap_hier} and \eqref{eq:npleap_stick} illustrates two
substantial improvements over the LEAP. First, the infinite mixture
$f_{\mu}(y) = \sum_{k=1}^{\infty} w_k f(y \mid \bm{\theta}_k)$ allows for the current data density to be flexibly modeled. Second,
historical data are modeled as a two-part mixture of infinite mixtures $f_{\mu}$ and $g_{\nu}$,
adding flexibility in modeling the historical data and obviating having
to choose the number of components in the mixture model for the
traditional LEAP.

\subsection{Clustering scheme of the NP-LEAP}
\label{sec:cluster}
Proposition~\ref{prop:cluster} characterizes the prior and full conditional posterior for the cluster indicators $\bm{z}$
and $\bm{z}_0$ under the NP-LEAP; the proof is in Section~B of the Supplement. We assume $\gamma \sim \mathrm{Beta}(c_1, c_0)$ for ease of exposition.

The starting point is the introduction of latent cluster indicators
$\bm{z} = (z_1, \ldots, z_n)$ for the current data and
$\bm{z}_0 = (z_{01}, \ldots, z_{0n_0})$ for the historical data,
together with $\bm{\epsilon}_0 = (\epsilon_{01}, \ldots, \epsilon_{0n_0})$,
$\epsilon_{0j} \iid \mathrm{Ber}(\gamma)$, and $\gamma \sim \mathrm{Beta}(c_1, c_0)$.
Each $z_i \in \{1, 2, \ldots\}$ indexes an exchangeable cluster for current
observation $i$.  For historical observation $j$, $z_{0j} \in \{1, 2, \ldots\}$
indexes an \emph{exchangeable} cluster (parameter $\bm{\theta}_{z_{0j}}$)
when $\epsilon_{0j} = 1$, and a \emph{nonexchangeable} cluster (parameter
$\bm{\lambda}_{z_{0j}}$) when $\epsilon_{0j} = 0$.  Under this convention,
model~\eqref{eq:npleap_hier} is equivalently described by
\begin{align}
    y_i \mid z_i &\sim f\!\left(\cdot \mid \bm{\theta}_{z_i}\right),
    \qquad 
    y_{0j} \mid z_{0j},\, \epsilon_{0j} \sim
    f\!\left(\cdot \mid \bm{\theta}_{z_{0j}}\right)^{\epsilon_{0j}}
    g\!\left(\cdot \mid \bm{\lambda}_{z_{0j}}\right)^{1 - \epsilon_{0j}}
    \label{eq:np-leap-cluster}
\end{align}
for $i = 1, \ldots, n$ and $j = 1, \ldots, n_0$ with $\{\bm{\theta}_k\}_{k=1}^\infty \iid P_0$ and
$\{\bm{\lambda}_l\}_{l=1}^\infty \iid Q_0$ as in~\eqref{eq:npleap_stick}. 

Proposition~\ref{prop:cluster} describes the prior $\Pi$ and full conditional
posterior $P$ for the cluster indicators $\bm{z}$ and $\bm{z_0}$ under the NP-LEAP.
\begin{proposition}
    \label{prop:cluster}
    Consider the hierarchical model in \eqref{eq:np-leap-cluster}, with
    $z_i$ and $z_{0j}$ denoting cluster assignments,
    $\epsilon_{0j}\in\{0,1\}$ the exchangeability indicator,
    $k$ indexing exchangeable clusters, and $l$ nonexchangeable clusters.
    Define sequential counts
    \[
        n_k^{[1:(i-1)]} = \sum_{q=1}^{i-1} I(z_q=k), \quad
        N_{0k}^{[1:(j-1)]} = \sum_{q=1}^{j-1} I(\epsilon_{0q}=1,\,z_{0q}=k), \quad
        M_{0l}^{[1:(j-1)]} = \sum_{q=1}^{j-1} I(\epsilon_{0q}=0,\,z_{0q}=l),
    \]
    with class totals $N_0^{[1:(j-1)]}$ and $M_0^{[1:(j-1)]}$,
    let $n_k = \sum_{i=1}^n I(z_i=k)$, and denote leave-one-out
    versions by the superscript $(-i)$ or $(-j)$.
    Throughout, the full conditional for each individual follows the
    same structural form as its prior predictive rule, with sequential
    counts replaced by leave-one-out counts and each weight multiplied
    by the appropriate likelihood factor.

    \begin{enumerate}
        \item \textbf{Current data.}
        The prior assigns individual $i$ to cluster $k$ with weight
        proportional to $n_k^{[1:(i-1)]}$ (existing) or $\alpha$ (new).
        The full conditional is
        \[
            P(z_i = k \mid \bm{y}, \bm{y}_0, \bm{\Omega}_{-i}) \propto
            \begin{cases}
                \bigl(n_k^{(-i)} + N_{0k}\bigr)\,f(y_i \mid \bm{\theta}_k)
                    & k \in \{1,\ldots,K_i\}, \\[4pt]
                \alpha\displaystyle\int f(y_i \mid \bm{\theta})\,P_0(d\bm{\theta})
                    & k = K_i + 1,
            \end{cases}
        \]
        where $N_{0k} = \sum_j I(\epsilon_{0j}=1,\,z_{0j}=k)$.

        \item \textbf{Historical data (exchangeable, $\epsilon_{0j}=1$).}
        Let $\bm{\Omega}_{0j}$ collect all preceding assignments. The prior is
        \[
            \Pi(z_{0j}=k,\;\epsilon_{0j}=1\mid\bm{\Omega}_{0j}) =
            \frac{N_0^{[1:(j-1)]}+c_1}{j-1+c_1+c_0}
            \times
            \begin{cases}
                \dfrac{n_k + N_{0k}^{[1:(j-1)]}}{n+N_0^{[1:(j-1)]}+\alpha}
                    & \text{existing }k,\\[6pt]
                \dfrac{\alpha}{n+N_0^{[1:(j-1)]}+\alpha}
                    & \text{new }k.
            \end{cases}
        \]
        The full conditional weights are proportional to
        $\bigl(N_0^{(-j)}+c_1\bigr)\bigl(n_k+N_{0k}^{(-j)}\bigr)
        f(y_{0j}\mid\bm{\theta}_k)$ for an existing cluster $k \in \{1,\ldots,K_j\}$
        and $\bigl(N_0^{(-j)}+c_1\bigr)\alpha
        \int f(y_{0j}\mid\bm{\theta})\,P_0(d\bm{\theta})$ for a new one,
        with the common factor $(n_0-1+c_1+c_0)^{-1}$.

        \item \textbf{Historical data (nonexchangeable, $\epsilon_{0j}=0$).}
        The prior is
        \[
            \Pi(z_{0j}=l,\;\epsilon_{0j}=0\mid\bm{\Omega}_{0j}) =
            \frac{M_0^{[1:(j-1)]}+c_0}{j-1+c_1+c_0}
            \times
            \begin{cases}
                \dfrac{M_{0l}^{[1:(j-1)]}}{M_0^{[1:(j-1)]}+\eta}
                    & \text{existing }l,\\[6pt]
                \dfrac{\eta}{M_0^{[1:(j-1)]}+\eta}
                    & \text{new }l.
            \end{cases}
        \]
        The full conditional weights are proportional to
        $\bigl(M_0^{(-j)}+c_0\bigr)M_{0l}^{(-j)}\,g(y_{0j}\mid\bm{\lambda}_l)$
        for an existing cluster $l \in \{1,\ldots,L_j\}$ and
        $\bigl(M_0^{(-j)}+c_0\bigr)\eta
        \int g(y_{0j}\mid\bm{\lambda})\,Q_0(d\bm{\lambda})$ for a new one,
        with the common factor $(n_0-1+c_1+c_0)^{-1}$.
    \end{enumerate}
\end{proposition}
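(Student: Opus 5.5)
Proposition~\ref{prop:cluster} follows from the joint prior of $(\bm{z},\bm{z}_0,\bm{\epsilon}_0)$. We write this prior in closed form by integrating out $\mu$, $\nu$ and $\gamma$, and then read off both the sequential prior rules and the full conditionals. The structure has three independent blocks. The pair $(\mu,\nu)$ is independent of $\gamma$ a priori, and conditional on $\bm{\epsilon}_0$ the labels of the pooled exchangeable observations (all $z_i$ together with $\{z_{0j}:\epsilon_{0j}=1\}$) are i.i.d.\ draws from $\mu$. The labels $\{z_{0j}:\epsilon_{0j}=0\}$ are i.i.d.\ draws from $\nu$. First, integrating $\gamma\sim\mathrm{Beta}(c_1,c_0)$ gives a P\'olya urn for $\bm{\epsilon}_0$:
\[
\Pi(\epsilon_{0j}=1\mid\bm{\epsilon}_{0,1:(j-1)})=\frac{N_0^{[1:(j-1)]}+c_1}{j-1+c_1+c_0}.
\]
Second, the Blackwell--MacQueen/CRP representation of Section~\ref{sec:prelim_dpmm} applies to the pooled exchangeable sequence under $\mathrm{DP}(\alpha,P_0)$. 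Third, the same representation applies to the nonexchangeable sequence under $\mathrm{DP}(\eta,Q_0)$. The joint prior is therefore a product of a Beta-Bernoulli EPPF term and two Ewens-type EPPF terms, with each cluster's atom $\bm{\theta}_k\sim P_0$ or $\bm{\lambda}_l\sim Q_0$ independent.

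For the prior statements, we order the pooled exchangeable sequence with the current data first and the historical data next. Item~1 is then the ordinary CRP with counts $n_k^{[1:(i-1)]}$. For item~2, a historical observation $j$ with $\epsilon_{0j}=1$ follows the $n$ current observations and the $N_0^{[1:(j-1)]}$ earlier exchangeable historical ones. The CRP predictive therefore has weight $(n_k+N_{0k}^{[1:(j-1)]})/(n+N_0^{[1:(j-1)]}+\alpha)$ for an existing cluster and $\alpha$ over the same denominator for a new one. Multiplying by the P\'olya urn probability for $\epsilon_{0j}=1$ gives the displayed product. Item~3 follows in the same way. The nonexchangeable subsequence contains only historical points, so the CRP denominator is $M_0^{[1:(j-1)]}+\eta$ and the urn factor is $(M_0^{[1:(j-1)]}+c_0)/(j-1+c_1+c_0)$.

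For the full conditionals we use exchangeability. The Beta-Bernoulli urn and both CRPs have exchangeable joint laws, so we may reorder any individual to be last in its sequence. The conditional prior given all other labels is then the predictive rule with sequential counts replaced by leave-one-out counts. This gives $n_k^{(-i)}+N_{0k}$ for a current observation, $n_k+N_{0k}^{(-j)}$ and $M_{0l}^{(-j)}$ for historical ones, and urn weights $(N_0^{(-j)}+c_1)/(n_0-1+c_1+c_0)$ and $(M_0^{(-j)}+c_0)/(n_0-1+c_1+c_0)$. Bayes' rule, conditioning on the atoms $\bm{\theta}_k,\bm{\lambda}_l$ of occupied clusters in $\bm{\Omega}_{-i}$, multiplies each weight by the likelihood factor $f(y\mid\bm{\theta}_k)$ or $g(y_{0j}\mid\bm{\lambda}_l)$. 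For a new cluster the atom is a fresh draw, so we integrate it against $P_0$ or $Q_0$, which yields $\int f\,dP_0$ and $\int g\,dQ_0$. For historical observation $j$, the pair $(z_{0j},\epsilon_{0j})$ is updated jointly. The exchangeable and nonexchangeable options are normalized together, and they share the common factor $(n_0-1+c_1+c_0)^{-1}$, which the statement records.

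The main obstacle is the bookkeeping of the exchangeable CRP. This CRP is shared across two data sources and its membership changes with $\bm{\epsilon}_0$, so we must justify that reordering a historical point to be last is legitimate even though the set of points in the $\mu$-urn depends on $\epsilon_{0j}$. We handle this by conditioning on $\bm{\epsilon}_{0,-j}$ and writing the joint prior mass under each of the two values of $\epsilon_{0j}$ explicitly as a ratio of EPPFs. The Ewens EPPF is $\alpha^{K}\prod_k\Gamma(n_k)\,\Gamma(\alpha)/\Gamma(\alpha+N)$. Adding one point to cluster $k$ multiplies it by $(n_k+N_{0k}^{(-j)})/(N+\alpha)$, and opening a new cluster multiplies it by $\alpha/(N+\alpha)$. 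The cross-urn normalizers $\Gamma(\alpha+N)$ and $\Gamma(\eta+M)$ differ between the $\epsilon_{0j}=1$ and $\epsilon_{0j}=0$ branches. These denominators do not appear in the displayed full-conditional weights, so we will state precisely which factors are absorbed into the proportionality.
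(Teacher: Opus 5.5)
Your decomposition is the same as the paper's: a Beta--Bernoulli urn for $\bm{\epsilon}_0$, one urn for the pooled exchangeable labels and one for the nonexchangeable labels. You reach it by a different route.

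The paper (Supplement, Section~B) starts from a finite mixture with symmetric $\mathrm{Dirichlet}(\alpha/K,\ldots,\alpha/K)$ and $\mathrm{Dirichlet}(\eta/L,\ldots,\eta/L)$ weights. It integrates out the weights and $\gamma$ to get a Beta-function ratio times two Dirichlet--multinomial terms, reads off sequential ratios, and lets $K,L\to\infty$. The full conditionals are then written down by substituting leave-one-out counts, with no further argument.

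You work directly with the limiting Ewens EPPFs, which buys you two things:
\begin{itemize}
\item It avoids the limit passage, including the implicit summation over empty components that produces the new-cluster weight $\alpha$.
\item Your explicit ratio of joint masses under the competing values of $(\epsilon_{0j},z_{0j})$ actually justifies the leave-one-out step. That step needs justification because the membership of the $\mu$-urn depends on $\epsilon_{0j}$.
\end{itemize}

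Where your plan fails is the closing promise to state which normalizers are absorbed into the proportionality. Neither of the two cross-urn normalizers can be absorbed.

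The pair $(\epsilon_{0j},z_{0j})$ is normalized jointly over both branches. Only factors common to both branches may be dropped, and $(n+N_0^{(-j)}+\alpha)^{-1}$ and $(M_0^{(-j)}+\eta)^{-1}$ take different values on the two branches. After removing the genuinely common factor $(n_0-1+c_1+c_0)^{-1}$, your EPPF ratio gives
\[
\frac{(N_0^{(-j)}+c_1)\,(n_k+N_{0k}^{(-j)})}{n+N_0^{(-j)}+\alpha}\,f(y_{0j}\mid\bm{\theta}_k)
\qquad\text{and}\qquad
\frac{(M_0^{(-j)}+c_0)\,M_{0l}^{(-j)}}{M_0^{(-j)}+\eta}\,g(y_{0j}\mid\bm{\lambda}_l).
\]
For new clusters, replace the cluster terms by $\alpha\int f(y_{0j}\mid\bm{\theta})\,P_0(d\bm{\theta})$ and $\eta\int g(y_{0j}\mid\bm{\lambda})\,Q_0(d\bm{\lambda})$.

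Suppose you drop the denominators and sum the prior parts of the weights over each branch. The prior odds of $\epsilon_{0j}=1$ become $(N_0^{(-j)}+c_1)(n+N_0^{(-j)}+\alpha)/\{(M_0^{(-j)}+c_0)(M_0^{(-j)}+\eta)\}$. This contradicts the Beta--Bernoulli urn you derived yourself, which requires $(N_0^{(-j)}+c_1)/(M_0^{(-j)}+c_0)$.

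The paper's own derivation in Supplement Section~B also arrives at this form: its full conditionals for the cases $\epsilon_{0j}=1$ and $\epsilon_{0j}=0$ keep both denominators. So what your argument correctly proves is items~2--3 with these denominators restored. The weights as displayed in the statement are valid only for comparisons within a single branch, and you should say so rather than look for factors to absorb.
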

Part (1) is precisely the ``rich-get-richer'' clustering scheme under a
standard DPMM, augmented so that exchangeable historical individuals in
cluster $k$ also increase its attractiveness to new current data
individuals. Part (2) shows that the prior probability of a historical
individual joining an exchangeable cluster is proportional to the
\emph{combined} count of current and historical individuals already
assigned there---``the rich get even richer''---with the class-level
weight governed by a Dirichlet--multinomial predictive rule over
$c_1$ and $c_0$. Part (3) is the nonexchangeable analogue: an
unexchangeable individual is assigned to nonexchangeable clusters via a
separate ``rich get richer'' rule that is entirely decoupled from the
current data. In both Parts (2) and (3), the full conditional simply
replaces sequential counts with their leave-one-out counterparts and
multiplies each weight by the appropriate kernel evaluated at
$y_{0j}$; a historical individual is therefore more likely to be
classified as exchangeable when its outcome aligns with clusters shared
with the current data, and will gravitate toward a nonexchangeable
cluster otherwise. Parts (1)--(3) together yield a collapsed Gibbs sampler (see \citet{neal2000markov}). Note that Part~(1) reduces exactly to the standard DPMM full conditional
from Section~\ref{sec:prelim_dpmm} when $N_{0k} = 0$ for all $k$
(i.e., no historical observation has yet been assigned to any exchangeable
cluster), confirming that the NP-LEAP nests the standard DPMM as a
limiting case.

\subsection{Posterior consistency}

A natural concern when incorporating historical data under the NP-LEAP is asymptotic bias in estimating $f_0$, particularly when $n_0$ is large. We show this bias is absent under weak conditions.

Suppose that the current and historical data, now considered as random variables, are drawn independently from densities $f_0$ and $\gamma_0 f_0 + (1-\gamma_0)g_0$, respectively, based on a  non-exchangeable component density $g_0$ and a non-trivial exchangeable proportion $\gamma_0 \in (0,1)$. Recall that $f_{\mu}(y) = \sum_{k=1}^n w_k f(y|\bm{\theta}_k)$ denotes the current data density under the NP-LEAP model \eqref{eq:npleap_hier}. The following result provides mild  conditions under which the NP-LEAP posterior error $\|f_{\mu} - f_0\|_1 \doteq \int |f_{\mu}(x) - f_0(x)|dx$ in estimating the true current data density $f_0$ vanishes asymptotically.  

\begin{theorem}  Suppose that  $Y_1, Y_2, \ldots \iid f_0$ and $Y_{01}, Y_{02}, \ldots\iid \gamma_0 f_0 + (1-\gamma_0)g_0$ for some $\gamma_0 \in (0,1)$. Given $\beta \in [0, \infty)$, let us consider an asymptotic regime  where we take $\bm{Y}_n \doteq \{Y_1, \ldots, Y_n\}$ and $\bm{Y}_{n_0} \doteq \{Y_{01}, \ldots, Y_{0n_0}\}$ to be the current and historical data, respectively, satisfying $n_0/n \to \beta$ as $n \to \infty$. Suppose further that
\begin{enumerate}
    \item $f_0, g_0$ are densities on $\R^d$ that satisfy the standard regularity Assumption 6 from Section F of the Supplement (requiring  continuity and mild tail conditions),
    \item $f(\cdot|\bm{\theta})$ and $g(\cdot|\bm{\lambda})$ are isotropic Gaussian kernels on $\R^d$ with  location and scale parameters $\bm{\theta}, \bm{\lambda} \in \R^d \times (0,\infty)$, and base measures   
    $P_0 = \nu_1 \times \xi_1$ and $Q_0 = \nu_2 \times \xi_2$ on $\R^d \times(0, \infty)$ have a product form and satisfy Assumption 5 in Section F of Supplement (requiring that $\nu_1, \nu_2$ have exponentially decaying tails and $\xi_1,\xi_2$ have a finite moment and exponentially decaying lower tails), and
    \item $\gamma$ is independently assigned a prior on $[0,1]$ that contains $\gamma_0$ in its support.
\end{enumerate}
Then for each $\epsilon > 0$, the NP-LEAP posterior $P(\cdot|\bm{Y}_n, \bm{Y}_{n_0})$ for the model \eqref{eq:npleap_stick} satisfies
$$
\lim_{n \to \infty} P(\|f_0 - f_{\mu}\|_1 > \epsilon|\bm{Y}_n, \bm{Y}_{n_0}) = 0 \qquad \text{ almost surely.}
$$
\label{thm:consistency}
\end{theorem}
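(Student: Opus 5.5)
The plan is a Schwartz-type argument applied to the joint, non-identically distributed sample $(\bm{Y}_n,\bm{Y}_{n_0})$. The parameter is $\phi=(\mu,\nu,\gamma)$. Under it the current observations have density $f_\mu$ and the historical ones have density $h_\phi \doteq \gamma f_\mu + (1-\gamma) g_\nu$. The true values are $f_0$ and $h_0\doteq \gamma_0 f_0+(1-\gamma_0)g_0$.

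I will show posterior consistency for the pair $(f_\mu,h_\phi)$ in the averaged distance
\[
d_n(\phi)\doteq \frac{n\,\|f_\mu-f_0\|_1+n_0\,\|h_\phi-h_0\|_1}{n+n_0}.
\]
Since $d_n(\phi)\ge \|f_\mu-f_0\|_1\, n/(n+n_0)$ and $n/(n+n_0)\to 1/(1+\beta)>0$, the event $\{\|f_\mu-f_0\|_1>\epsilon\}$ lies inside $\{d_n(\phi)>\epsilon/(2(1+\beta))\}$ for all large $n$. So it suffices to prove $P(d_n(\phi)>\epsilon'\mid\bm{Y}_n,\bm{Y}_{n_0})\to 0$ a.s. for every $\epsilon'>0$. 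For this I will use the independent-non-identically-distributed version of the Ghosal--van der Vaart theorem for $L_1$ consistency. It needs a Kullback--Leibler prior positivity condition and a sieve with controlled metric entropy and small prior complement. Both are then checked using the known results of Ghosal, Ghosh and Ramamoorthi and Wu--Ghosal for Gaussian location-scale DP mixtures.

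The first step is KL support. Since $\mu,\nu,\gamma$ are a priori independent, I need to show
\[
\Pi\big(\KL(f_0,f_\mu)<\delta,\ \KL(h_0,h_\phi)<\delta\big)>0 \quad\text{for every }\delta>0.
\]
The tool is the joint convexity of KL. It gives
\[
\KL(h_0,h_\phi)\le \gamma_0\KL(f_0,f_\mu)+(1-\gamma_0)\KL(g_0,g_\nu)+\big|\log(\gamma_0/\gamma)\big|+\big|\log((1-\gamma_0)/(1-\gamma))\big| .
\]
It therefore suffices to have $\mu$ in a KL neighbourhood of $f_0$, $\nu$ in a KL neighbourhood of $g_0$, and $\gamma$ near $\gamma_0$. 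The first two events have positive probability by the Wu--Ghosal theorem for isotropic Gaussian location-scale DPMMs under Assumptions 5--6. The third has positive probability by condition 3, and it uses $\gamma_0\in(0,1)$. Independence then gives positive joint mass.

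The second step is the sieve and the tests. For the sieve I take the standard Gaussian-mixture sieve $\mathcal F_n$ for $f_\mu$: the mixing measure puts mass at least $1-\delta$ on $\{\|\bm\theta_{\text{loc}}\|\le a_n,\ \sigma\in[\underline\sigma_n,\overline\sigma_n]\}$. I take the analogous sieve $\mathcal G_n$ for $g_\nu$, and the product $\mathcal F_n\times\mathcal G_n\times[0,1]$ for $\phi$. The metric entropy of $\{h_\phi\}$ is bounded by the sum of the entropies of $\mathcal F_n$, $\mathcal G_n$ and a $\delta$-net of $[0,1]$, because $\|h_\phi-h_{\phi'}\|_1\le\|f_\mu-f_{\mu'}\|_1+\|g_\nu-g_{\nu'}\|_1+2|\gamma-\gamma'|$. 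This gives entropy $o(n)$ with the usual choices $a_n,\underline\sigma_n^{-1},\overline\sigma_n$ polynomial in $n$. The prior mass of the complement is $e^{-cn}$ by the exponential tails in Assumption 5. Exponentially consistent tests for the product of $n$ copies of $f_0$ and $n_0$ copies of $h_0$ against the complement of $d_n$-balls come from Hellinger or $L_1$ tests built from the entropy bound. On the non-i.i.d. product these are the Ghosal--van der Vaart (2007) tests with the averaged Hellinger distance, which dominates a multiple of $d_n^2$.

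The main obstacle I expect is this second step, specifically making the existing sieve and entropy results (which are stated for a single i.i.d. sample of size $n$) work uniformly for the two samples with $n_0/n\to\beta$. The case $\beta=0$ is especially delicate, because the historical sample may contribute only $o(n)$ observations. I would handle it by stating everything in terms of the total size $N=n+n_0$: prior complement $e^{-cN}$, entropy $o(N)$. This makes the exponential-rate bookkeeping uniform. The final reduction to $\|f_\mu-f_0\|_1$ still works because $n/N$ stays bounded away from zero in every case.
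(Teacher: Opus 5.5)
Your overall architecture matches the paper's. You use a Schwartz-type theorem for the independent, non-identically distributed sample; the paper uses Theorem 6.41 of Ghosal and van der Vaart with the root-average-square Hellinger pseudo-metric. You get KL support by convexity; your log-sum bound with the Bernoulli term is correct and slightly cleaner than the paper's two-step bound. You use a product sieve times $[0,1]$ with the same Lipschitz bound for $\gamma f_\mu+(1-\gamma)g_\nu$, index everything by $N=n+n_0$, and finish with the reduction via $n/N\to 1/(1+\beta)>0$. Your $\beta=0$ worry is harmless once everything is indexed by $N$. The gap is in the one step that needs real work: the sieve for the isotropic location-scale DP Gaussian mixture. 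The results you cite (Ghosal--Ghosh--Ramamoorthi, Wu--Ghosal) supply KL support but not a sieve for multivariate location-scale mixtures under Assumption 5.

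Your sieve $\{f_\mu:\mu(K_n)\ge 1-\delta\}$, with $K_n=\{\|\bm\theta_{\mathrm{loc}}\|\le a_n\}\times[\underline\sigma_n,\overline\sigma_n]$, cannot meet both requirements under Assumption 5.
\begin{itemize}
\item Under the DP, $\mu(K_n^c)\sim\mathrm{Beta}(\alpha P_0(K_n^c),\alpha P_0(K_n))$. Since $\Pr(\mathrm{Beta}(a,b)>\delta)\asymp a$ as $a\to0$, the complement's prior mass is of the same order as $P_0(K_n^c)$, not exponentially smaller.
\item Assumption 5 controls the upper tail of $\xi$ only through a finite moment of $\sigma^2$. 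With $\overline\sigma_n$ polynomial, $\xi(\sigma>\overline\sigma_n)\le K/\overline\sigma_n^{2}$ is only polynomially small. The inverse-gamma base measure, which satisfies Assumption 5, attains this. So your claim that the complement has mass $e^{-cn}$ ``by the exponential tails'' is false.
\item Forcing $P_0(K_n^c)\le e^{-cn}$ requires $a_n\gtrsim n$, $\underline\sigma_n\lesssim n^{-1/2}$ and $\overline\sigma_n\gtrsim e^{cn}$. The sieve then contains mixtures over roughly $(a_n/\underline\sigma_n)^d\gtrsim n^{3d/2}$ well-separated kernels. A packing argument puts its $L^1$ entropy at least of that order, far above $O(n\epsilon^2)$.
\end{itemize}

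The missing idea is the stick-breaking truncation sieve the paper adapts from Canale and De Blasi (2017) and Shen, Tokdar and Ghosal (2013). Constrain only the first $H=\lfloor\delta n/\log n\rfloor$ atoms, with $\|\mu_h\|\le n$ and $n^{-1/2}\le\sigma_h\le n^{-1/2}(1+\varepsilon/\sqrt d)^{n}$, and require $\sum_{h>H}W_h\le\varepsilon$. Covering each atom separately, with a geometric grid in $\sigma$, gives entropy $\lesssim H\log n\asymp\delta n$ even though the scale range is exponential. The complement mass is at most $(e\alpha\log\varepsilon^{-1}/H)^H$ plus $H$ times the per-atom tail masses, all of which are $e^{-cn}$. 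This entropy-versus-prior-mass trade-off, not the $\beta=0$ bookkeeping, is where the difficulty lies.
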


\begin{proof}[Proof sketch.] The proof proceeds by verifying two standard conditions for posterior consistency of non-i.i.d.\ observations \citep[Theorem 6.41]{ghosal2017fundamentals}. The first is a Kullback--Leibler (KL) support condition: for every $\epsilon > 0$, the NP-LEAP prior assigns positive mass to the set of parameters $(f_{\mu}, g_{\nu}, \gamma)$ for which the weighted KL divergence $\frac{1}{1+\beta} \KL(f_0 | f_{\mu}) + \frac{\beta}{1+\beta}\KL(\gamma_0 f_0 + (1-\gamma_0)g_0 \,|\, \gamma f_{\mu} + (1-\gamma)g_{\nu})$ is less than $\epsilon$. This holds by Lemma 3 of the Supplement because the DPMM priors $f_\mu$ and $g_\nu$ assign positive mass to arbitrary Kullback--Leibler neighborhoods of $f_0$ and $g_0$ under mild conditions on the base measures, and the prior on $\gamma$ has $\gamma_0$ in its support. The second is a sieve condition: the prior places exponentially small mass outside sets that can be covered by exponentially many Hellinger balls. This is established using Lemma 5 of the Supplement by constructing sieves for the DPMM with Gaussian kernel using a finite truncation argument, bounding the $L^1$ metric entropy of the sieve, and showing the prior mass of the complement is exponentially small (Section F of the Supplement).
\end{proof}

The full proof is in Section E of the Supplement (Theorem 1 and Corollary 1.1); Gaussian kernel conditions are in Section F. Theorem 1 thus guarantees consistency even when historical data are only partially exchangeable, provided $n_0 = O(n)$.
The technical derivations in the supplement ultimately formalize the insight that the NP-LEAP posterior will concentrate around maximizer of the likelihood, or equivalently around solutions $(\mu^*, \nu^*, \gamma^*)$ of the minimization problem. The technical derivations in the supplement ultimately formalize the insight that the NP-LEAP posterior will concentrate around maximizer of the likelihood, or equivalently around solutions $(\mu^*, \nu^*, \gamma^*)$ of the minimization problem
\begin{equation}
\min_{(\mu, \nu, \gamma)} \frac{1}{1+\beta}\KL(f_0|f_{\mu}) + \frac{\beta}{1+\beta}\KL(\gamma_0 f_0 + (1-\gamma_0)g_0|\gamma f_{\mu} + (1-\gamma) g_{\nu} ),
\label{eq:kl_min}
\end{equation}
where $\KL$ denotes the Kullback-Leibler divergence between densities. 

\paragraph*{Remark} While a solution $(\mu^*, \nu^*, \gamma^*)$ of \eqref{eq:kl_min} will satisfy  $f_0  = f_{\mu^*}$, it does not imply that $\gamma^* = \gamma_0$ or $g_{\nu^*} = g_0$ (see Remark 2 in Section E of the Supplement). That is why Theorem \ref{thm:consistency} only claims consistency for $f_0$, and not $\gamma_0$ or $g_0$. This is an inherent non-identifiability of the NP-LEAP model: given $f_0$, there exist multiple pairs $(\gamma, g_{\nu})$ that yield identical likelihoods, so the data cannot distinguish among them. However, for the primary inferential goal of estimating $f_0$ and downstream functionals, this non-identifiability is inconsequential.

\subsection{ANOVA DDP Model for Time-to-Event Data}
In our simulations and data analyses we employ the ANOVA dependent Dirichlet process (DDP) model \citep{deiorio2002anova} for the
time-to-event setting. Let $y_i = \min\{\log t_i, \log c_i\}$ denote
the log observed time and $a_i \in \{0,1\}$ the treatment indicator.
We adopt a log-normal kernel
$f(y \mid a;\, \bm{\theta}_k)
 = \phi(y \mid \beta_{k0} + \beta_{k1}\,a,\, \tau_k^{-1})$,
where $\bm{\theta}_k = (\bm{\beta}_k', \tau_k)'$ contains
arm-specific intercepts and a precision parameter. The base measure
$P_0$ is semi-conjugate:
$\bm{\beta}_k \sim N(\bm{\mu}_0, \bm{\Sigma}_0)$ and
$\tau_k \sim \text{Gamma}(\delta_\alpha, \kappa_\alpha)$.
We elicit $\bm{\mu}_0$ as the MLE of a log-normal accelerated failure
time (AFT) model and $\bm{\Sigma}_0$ as the inverse observed
information matrix scaled to correspond to 2 effective observations,
a common weakly informative data-driven choice
\citep{roy2018bayesian}. For the nonexchangeable mixing measure $\nu$
with base measure $Q_0$, we use a diffuse prior
($\bm{\mu}_{Ql} = \bm{0}$,
 $\bm{\Sigma}_{Ql} = \text{diag}\{10^2, 10^2\}$,
 $\tau_{Ql} \sim \text{Gamma}(0.5, 0.5)$)
to avoid artificially favoring the exchangeable assignment. Full hyperparameter elicitation details and the complete Gibbs sampler are
provided in Sections~C and~D of the Supplement.

\section{Simulation study}
\label{sec:sim}
We present a simulation study with three primary objectives: (i) to assess the robustness of NP-LEAP under varying degrees and magnitudes of non-exchangeability between historical and current data; (ii) to compare NP-LEAP against parametric Bayesian borrowing methods and demonstrate that model misspecification in the parametric borrowing component can cause greater harm than not borrowing; and (iii) to benchmark NP-LEAP against standard frequentist non-borrowing approaches, including the Cox PH model, to quantify the efficiency gains from nonparametric borrowing.

\subsection{Data generation process}
To keep simulations realistic, we base the data generating process on the motivating trial. We fit a spline-based non-proportional hazards to the pooled current and historical data with time-varying effects for treatment and study indicator, controlling for age, sex, and BMI. Current data sets of $n = 104$ individuals are generated using covariate bootstrap sampling, 2:1 permuted block randomization, and event times drawn from the fitted spline model. Dropout is calibrated so that approximately 5\% of individuals drop out after one year; accrual follows a $\text{Beta}(1,2)$ distribution. We generate 9 scenarios crossing exchangeability probability $\gamma \in \{0.0, 0.5, 1.0\}$ with outcome drift, i.e., we define $\beta_{\text{unexch}} = \beta_{\text{hist}} + \delta$ for $\delta \in \{-0.5, 0, 0.5\}$. The cumulative hazard function for the generated data is given by
$$
H(t | a, \bm{x}, \epsilon) = \exp\left\{ 
      \sum_{j=0}^4 \alpha_j z^{(k)}_j(\log t) 
    + (1- \epsilon) [ \beta_{\text{unexch}} + \beta_1 z^{(k)}_j(\log t)]
    + \sum_{j=0}^1 \psi_j a z^{(k)}_j(\log t)
    + \bm{x}'\bm{\xi}
\right\}
$$
where $H(t | a, \bm{x}, \epsilon)$ is the conditional cumulative hazard for arm $a$ of study $s$, $z_j^{(k)}(q)$ is the $j^{th}$ basis function for a spline of degree $k$ based on continuous variable $q$, and $\bm{x}$ includes sex, BMI, and weight as baseline covariates. Parameters used to generate the data were based on maximum likelihood and are given in Table~\ref{tab:spline_mle}.
{\spacingset{1}
\begin{table}[ht]
\centering
\begin{tabular}{cccccccccc} 
\toprule
$\hat{\alpha}_0$ & $\hat{\alpha}_1$ & $\hat{\alpha}_2$ & $\hat{\alpha}_3$ & $\hat{\beta}_{\text{hist}}$ & $\hat{\beta}_1$ & $\hat{\psi}_0$ & $\hat{\psi}_1$ & $\hat{\xi}_{\text{sex}}$ & $\hat{\xi}_{\text{BMI}}$ \\ 
\midrule
1.00 & -0.21 & 0.28 & -0.01 & -0.26 & 0.19 & 0.32 & 0.16 & 0.65 & 0.02 \\
\bottomrule
\end{tabular}
\caption{Maximum likelihood estimates from fitting a proportional hazards model using splines with $k = 3$ knots; the $\alpha_j$'s are time effects, the $\beta_j$'s are historical study-specific effects, the $\psi_j$'s are treatment effects, and the $\delta$'s are baseline covariate effects.}
\label{tab:spline_mle}
\end{table}
}

To mimic regulatory practice (fixed historical data), we generate 100{,}000 candidate historical data sets per scenario and retain the one whose KM estimator most closely aligns with the data generating survival function. Full details are in Section D of the Supplement.

\subsection{Comparator methods and hyperparameter specification}
To compare the relative performance of the NP-LEAP, we study the operating characteristics of multiple related approaches. For all the approaches, we first conduct 1:1 propensity score matching to control for differences in baseline covariates between the two studies, although we retain all current data individuals. Thus, all methods incorporating historical control data are based on the matched subset.

All DPMM concentration parameters are assigned Gamma$(8, 2)$ to encourage a fair degree of flexibility. For the NP-LEAP, we elicit $\pi(\gamma) = 0.9 \cdot f_{\beta}(\gamma | 1, 1) + 0.1 \cdot I(\gamma = 1)$, where $f_{\beta}(\cdot | a, b)$ denotes the $\text{Beta}(a, b)$ density; this prior is vague but encourages pooling if supported by the data. We omit a point mass at $\gamma = 0$ because the NP-LEAP already discounts to near-zero under extreme non-exchangeability, whereas it can struggle to pool under full exchangeability without additional mass at $\gamma = 1$.

For comparator methods, we considered the Kaplan-Meier estimator, semiparametric Cox model \citep{cox1972regression}, the Bayesian nonparametric (BNP) approach without borrowing, and Bayesian parametric log-normal AFT models under both vague and data borrowing priors%
. One of the comparators is an asymptotic approximation to the normalized power prior that we refer to as the normalized asymptotic power prior (NAPP), for which we used a normal approximation to the historical data log-normal AFT likelihood based on the MLE and information matrix. The hyperprior for the discounting parameter for the NAPP was specified as $\pi(a_0) = 0.9 \cdot I(0 \le a_0 \le 1) + 0.1 \cdot f_{\beta}(a_0 | 100, 1)$. A similar prior was elicited for the exchangeability probability for the parametric LEAP. 

\subsection{Simulation study results}
\label{sec:sim_results}
We take the estimand of interest to be the time-specific difference in survival
probabilities, i.e., $\Delta(t) = S_1(t) - S_0(t)$ for
$t \in \{ 3, 6, \ldots, 24 \} \equiv \mathcal{T}$ where $t$ is months since
randomization.
Let $\hat{\Delta}_b(t)$ and $(L_b(t), U_b(t))$ denote the point estimate
(posterior mean for Bayesian approaches; MLE for frequentist) and 95\% interval
estimate (credible or confidence interval) for replicate $b = 1, \ldots,
B = 10{,}000$.
We compute bias, MSE, 95\% coverage, and interval score
\citep{Gneiting01032007}:
$\text{Bias}(t) = B^{-1} \sum_{b=1}^B (\hat{\Delta}_b(t) - \Delta(t))$%
, $\text{MSE}(t) = B^{-1} \sum_{b=1}^B (\hat{\Delta}_b(t) - \Delta(t))^2$%
, $\text{Coverage}(t) = B^{-1} \sum_{b=1}^B I( L_b(t) \le \Delta(t) \le U_b(t) )$%
, $\text{IS}(t) = (U_b(t) - L_b(t)) + \frac{2}{0.05}
\min\{ |\Delta(t) - L_b(t)|, |\Delta(t) - U_b(t)| \}
\cdot I(\Delta(t) \not \in [L_b(t), U_b(t)])$,
where IS is the interval width plus a penalty proportional to the distance to
the nearest bound when the interval misses the truth.

Figure~\ref{fig:sim_results_errorbar} shows the OCs of the various methods,
where the shapes and bars denote, respectively, the mean and range of the OC
across all times in $\mathcal{T}$.
\begin{figure}
    \centering
    \includegraphics[width=0.99\linewidth,keepaspectratio]{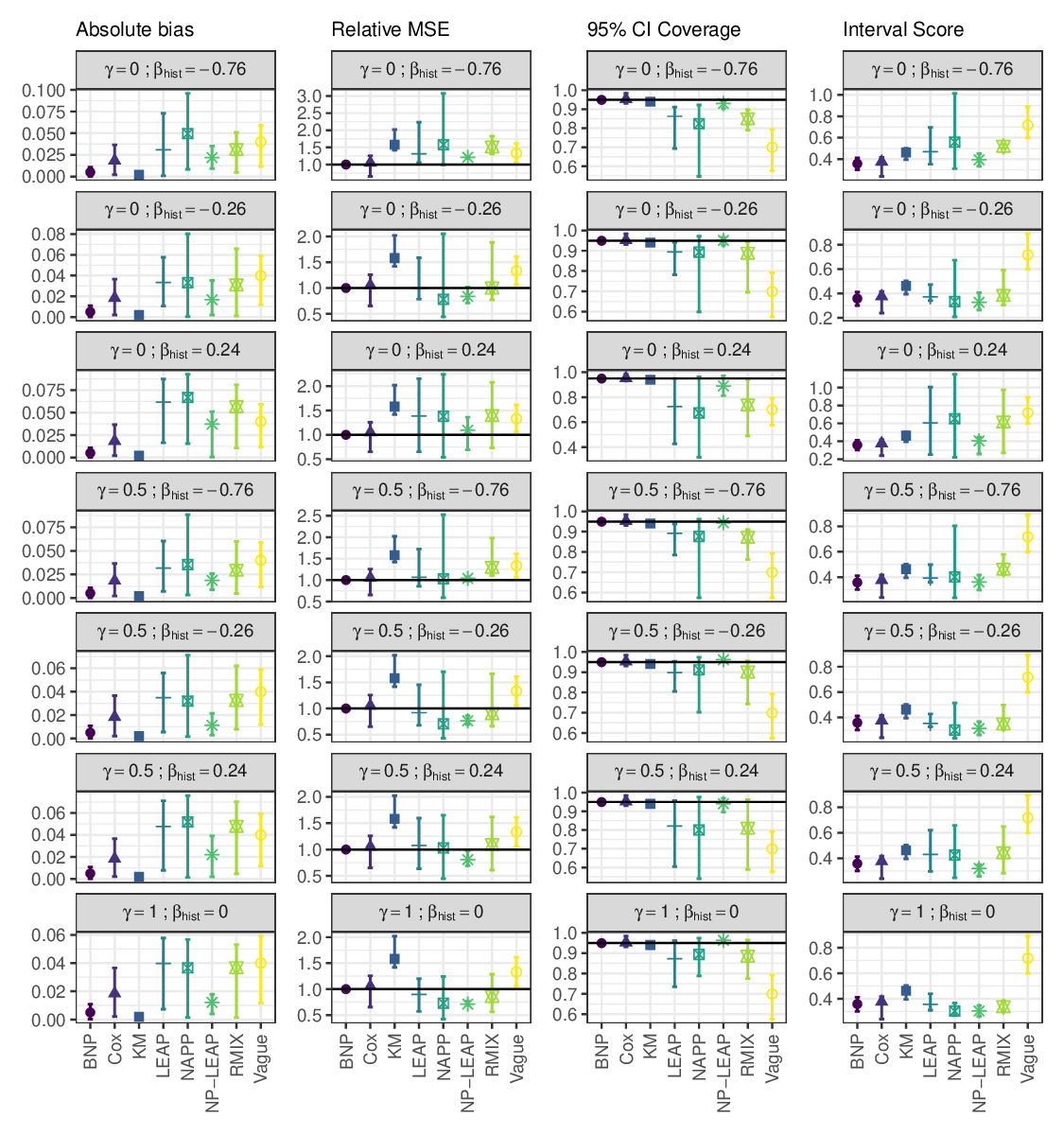}
    \caption{Operating characteristics for each method. BNP = Bayesian
    nonparametric. Symbols and error bars denote the mean and range of each
    operating characteristic across each time point $t \in \{3, 6, \ldots,
    24\}$; NP-LEAP = nonparametric LEAP, Cox = frequentist Cox proportional
    hazards model; KM = Kaplan Meier estimator; LEAP = AFT model with LEAP;
    NAPP = asymptotic approximation to normalized power prior; RMIX = AFT model
    with robust mixture prior; Vague = AFT model with vague prior}
    \label{fig:sim_results_errorbar}
\end{figure}

\textbf{Non-borrowing benchmark.}
Among methods that do not borrow historical data, the Kaplan--Meier (KM)
estimator is approximately unbiased but yields MSE at least 40\% higher than
the Bayesian nonparametric DPMM (BNP).
The frequentist Cox PH model introduces bias through its constant hazard ratio
constraint but reduces variance through implicit smoothing---particularly at
early event times---so that it can outperform BNP in MSE at individual time
points despite being misspecified.
Averaged across all time points, however, BNP achieves lower MSE than KM with
superior coverage and interval scores.

\textbf{Model misspecification in parametric borrowing is more damaging than not
borrowing.}
Across most scenarios, parametric borrowing methods (LEAP, NAPP, RMIX) exhibit
substantially elevated bias and MSE when exchangeability is partial or absent,
particularly for large outcome drift ($\delta = \pm 0.25$).
In several scenarios their MSE exceeds that of BNP (no borrowing), confirming
that borrowing under a misspecified parametric model can be actively harmful.
NP-LEAP largely avoids this pathology: its bias is comparable to the Cox model
across scenarios, and its MSE is generally as good as or better than BNP except
in the most extreme non-exchangeability scenario ($\gamma = 0$,
$\delta = -0.25$).

\textbf{NP-LEAP achieves meaningful efficiency gains under partial and full
exchangeability.}
NP-LEAP performs particularly well under scenarios with partial or full
exchangeability, where MSE is lower than BNP, coverage is approximately 95\%,
and the interval score is the lowest among all methods.
The asymmetry in performance between $\delta = -0.25$ and $\delta = +0.25$ at
fixed $\gamma$ suggests that the direction as well as the magnitude of outcome
drift affects borrowing, possibly reflecting differences in relative follow-up
and trial duration between current and historical data sets.

\textbf{Parametric misspecification is costly even without borrowing.}
The Vague AFT prior---a parametric model that borrows no historical
data---consistently underperforms the nonparametric BNP approach in
interval score and MSE across nearly all scenarios.
This confirms that the gains from NP-LEAP reflect two compounding
benefits: robustness to model misspecification and efficient use of
historical information; even the first benefit alone justifies the
nonparametric approach relative to a vague parametric specification.

In summary, NP-LEAP dominates parametric borrowing approaches across operating characteristics and outperforms nonborrowing semiparametric and parametric frequentist methods in many scenarios.
Given the realistic data-generating process---time-varying treatment and non-exchangeability effects, plausible drift magnitudes up to
$\delta = \pm 0.5$, and sample sizes matching the motivating
study---these results support the application in Section~\ref{sec:analysis}.

\section{Data analysis}
\label{sec:analysis}

\subsection{Propensity score matching and covariate balance}
\label{sec:ps_matching}
We use propensity score (PS) matching \citep{rosenbaum1983central} to assess
covariate balance between the current and historical study populations.
Table~\ref{tab:summstats} shows baseline covariates for the current, historical,
and 1:1 PS-matched historical subset.
Before matching, there is imbalance in age, sex assigned at birth, and height;
PS matching substantially reduces these differences.
Nevertheless, the Kaplan--Meier curves in Figure~\ref{fig:km_curve} suggest
potential non-exchangeability between the concurrent and historical control
groups at event times less than one year.
{\spacingset{1}
\begin{table}[t]
\fontsize{12.0pt}{14.4pt}\selectfont
\begin{tabular*}{\linewidth}{@{\extracolsep{\fill}}lccc}
\toprule
\textbf{Characteristic} & \textbf{Current} & \textbf{Historical} & \textbf{Historical (matched)} \\
 & N = 104\textsuperscript{\textit{1}} & N = 343\textsuperscript{\textit{1}} & N = 104\textsuperscript{\textit{1}} \\
\midrule\addlinespace[2.5pt]
Age & 63.8 (9.6) & 63.0 (8.2) & 63.2 (9.8) \\ 
Sex &  &  &  \\ 
\ \ \ \ Female & 25 (24\%) & 60 (17\%) & 25 (24\%) \\ 
\ \ \ \ Male & 79 (76\%) & 283 (83\%) & 79 (76\%) \\ 
BMI & 24.9 (5.2) & 24.9 (4.8) & 25.0 (4.8) \\ 
Weight & 73.2 (16.3) & 70.9 (14.6) & 73.6 (15.7) \\ 
Height & 171.2 (7.7) & 168.8 (8.3) & 171.4 (8.4) \\ 
\bottomrule
\end{tabular*}
\caption{Summary of baseline covariates between the current and historical data sets (before and after propensity score matching).}
\begin{minipage}{\linewidth}
\textsuperscript{\textit{1}}Mean (SD); n (\%)\\
\end{minipage}
\label{tab:summstats}
\end{table}
}

\begin{figure}
    \centering
    \includegraphics[width=0.8\linewidth,keepaspectratio]{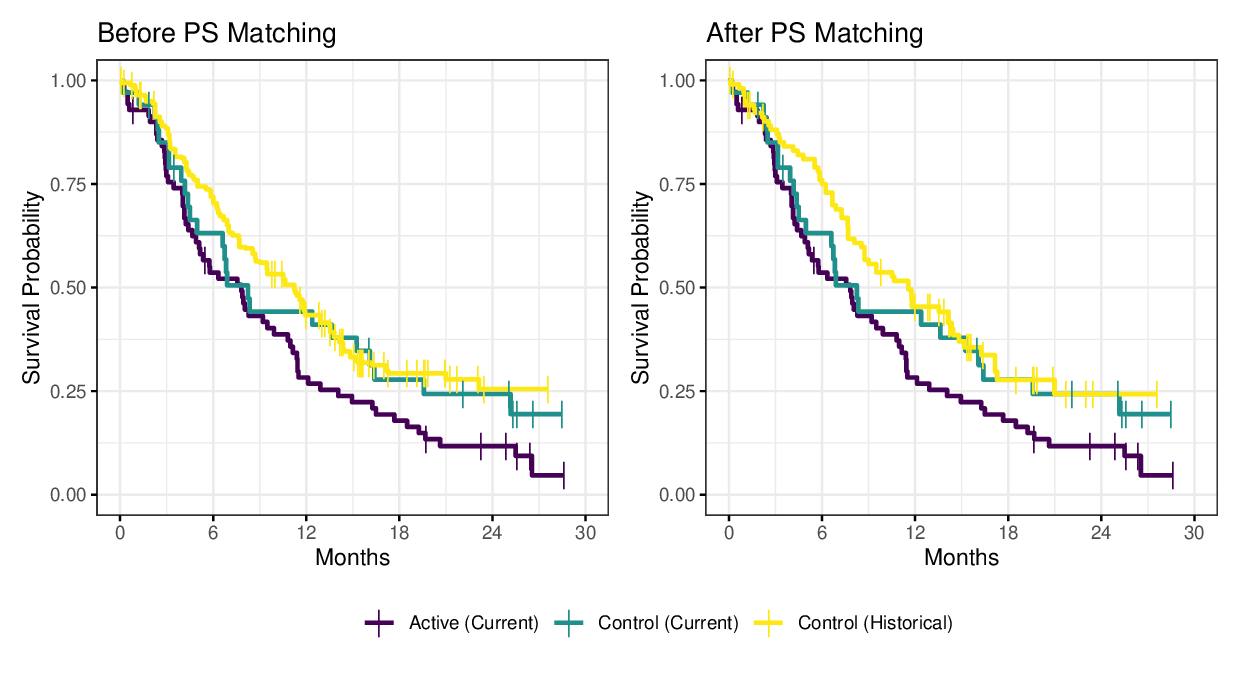}
    \caption{Kaplan-Meier curves for each arm of the current data and the
    historical control arm (before and after PS matching).}
    \label{fig:km_curve}
\end{figure}


\subsection{Survival function analyses}
Figure~\ref{fig:analysis_sfun} shows the posterior mean and 95\% credible
interval (CI) for the arm-specific survival functions and their difference,
where BNP and NP-LEAP respectively denote the ANOVA-DDP without and with
borrowing.
As expected given that we augment only the control arm, the posterior mean and
CI for the treatment arm are largely unchanged under NP-LEAP.

\begin{figure}
    \centering
    \includegraphics[width=0.8\linewidth,keepaspectratio]{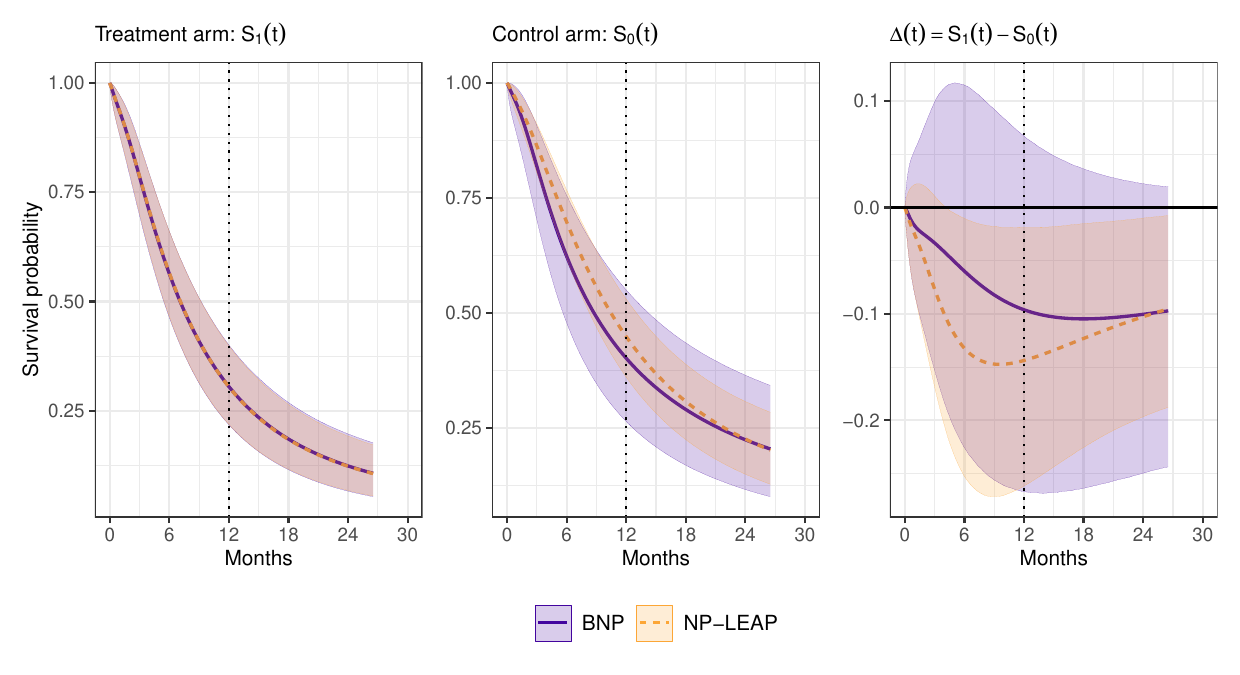}
    \caption{Posterior mean and 95\% credible interval for the arm-specific
    survival functions and their difference. BNP and NP-LEAP respectively denote
    the ANOVA-DDP model without and with borrowing.}
    \label{fig:analysis_sfun}
\end{figure}

For the control arm, the posterior mean survival probabilities shift upward
under NP-LEAP and the 95\% CI narrows noticeably, illustrating the precision
gains from borrowing.
The substantial overlap between BNP and NP-LEAP in the control arm estimates,
together with the large posterior mass on $\gamma$ near 1, suggests the
borrowing is not unduly influential.

For the survival difference $\Delta(t)$, the 95\% CI under NP-LEAP is
substantially narrower and almost completely contained within the BNP
interval, confirming a precision gain without material change to point
estimates---consistent with the simulation results in
Section~\ref{sec:sim_results}.
Notably, the BNP 95\% CI for $\Delta(t)$ straddles zero throughout
virtually the entire follow-up period, meaning that without borrowing
the analysis yields no statistically meaningful conclusion about the
treatment effect at any time point.
Under NP-LEAP, the upper bound of the 95\% CI falls below zero from
approximately 4.2 months onward, tipping the analysis from inconclusive
to actionable and illustrating how principled borrowing can resolve
inferential uncertainty that the current data alone cannot.

\textbf{Efficacy analysis for the primary endpoint.}
The key endpoint was the difference in 12-month survival probabilities,
$\Delta(12) = S_1(12) - S_0(12)$.
Under BNP, $E[\Delta(12) \mid D] = -0.10$ with a 95\% CI of $(-0.27, 0.07)$.
Under NP-LEAP, $E[\Delta(12) \mid D, D_0] = -0.14$ with a 95\% CI of
$(-0.26, -0.02)$, indicating poorer OS at 12 months for the
investigational arm.
The upper bound of the NP-LEAP 95\% CI remains below zero for all time points
beyond 4.2 months, so this conclusion is robust to the choice of milestone
survival time.

\textbf{Tipping point analysis.}
When borrowing information for analyses submitted to regulators, it has become common to perform a tipping point analysis to understand the sensitivity of the study conclusions to the amount of borrowing \citep{Best2021Bayesian}. Using the NP-LEAP, it is quite easy to perform this type of analysis by fixing a value for the proportion of exchangeability ($\gamma$) over a range of values. We perform such a tipping point analysis for $\gamma \in \{ 0.0, 0.1, 0.2, \ldots, 1.0 \}$. The results are presented in Figure~\ref{fig:tipping_point}, where we show the posterior mean and 95\% credible interval for the difference in survival probabilities at 12 months. The 95\% credible interval excludes $0$ at $\gamma \in \{0.9, 1.0\}$.
\begin{figure}
    \centering
    \includegraphics[width=0.7\linewidth,keepaspectratio]{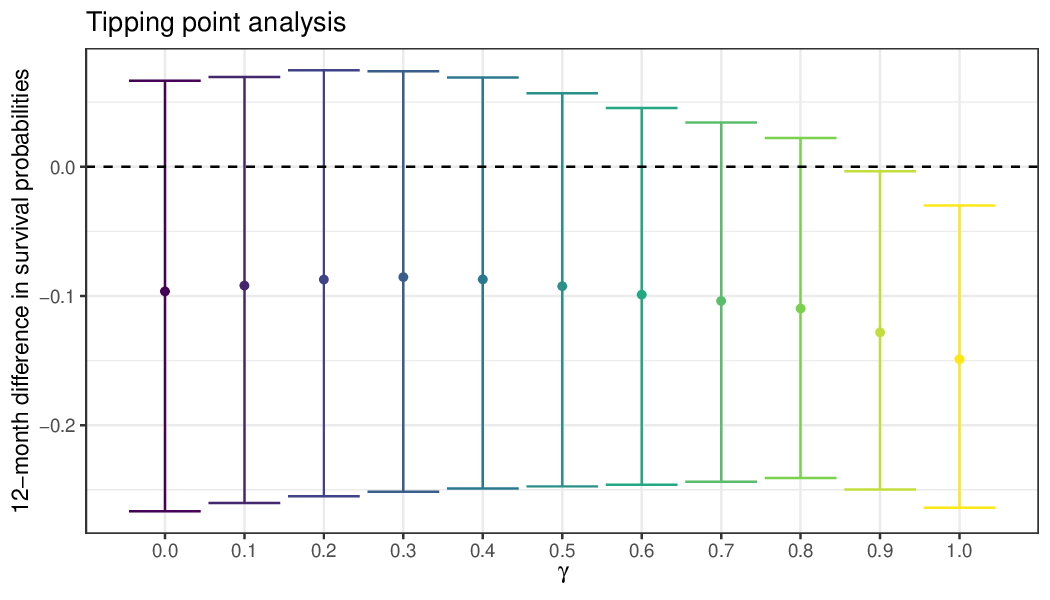}
    \caption{Tipping point analysis for difference in 12-month survival probabilities. The points depict the posterior mean and the bars depict the 95\% credible interval.}
    \label{fig:tipping_point}
\end{figure}

Interestingly, the lower tail of the CI is quite robust to the degree of borrowing while the upper tail shifts more dramatically particularly when $\gamma > 0.5$. We conjecture that this is because, as shown in Figure~\ref{fig:km_curve}, there is some separation between the historical control data in a neighborhood less than 12 months while congruence is observed in a neighborhood greater than 12 months. Indeed, changing the estimand to 24 months (data not shown) shows that the lower bound also changes.

\textbf{Efficacy analysis for secondary endpoints.} 
A significant advantage of the Bayesian nonparametric approach is the ability to conduct inference on the survival functions as well as functionals of the survival functions under a single, cohesive estimation framework. Figure~\ref{fig:postdens} compares the posterior density under the traditional DPMM with the NP-LEAP for median survival time and restricted mean survival time (RMST) evaluated at the last observed event time ($\approx 26.5$ months).

Using the posterior mean as the point estimate, the median survival time in the investigative treatment group is approximately $1.9$ months less than the control group [CI: $(-7.32, 2.16)$] according to the BNP approach without borrowing and around $3.4$ months [95\% CI: $(-6.36, -0.36)$] using the proposed NP-LEAP approach. Note that the 95\% CI under NP-LEAP is completely contained in that under the BNP approach without borrowing. Thus, while borrowing resulted in a substantial reduction in uncertainty compared to without, the overlap in CIs provides some confidence that the borrowing is not overly influential. The difference in RMSTs yielded similar conclusions.

\begin{figure}
    \centering
    \includegraphics[width=0.9\linewidth]{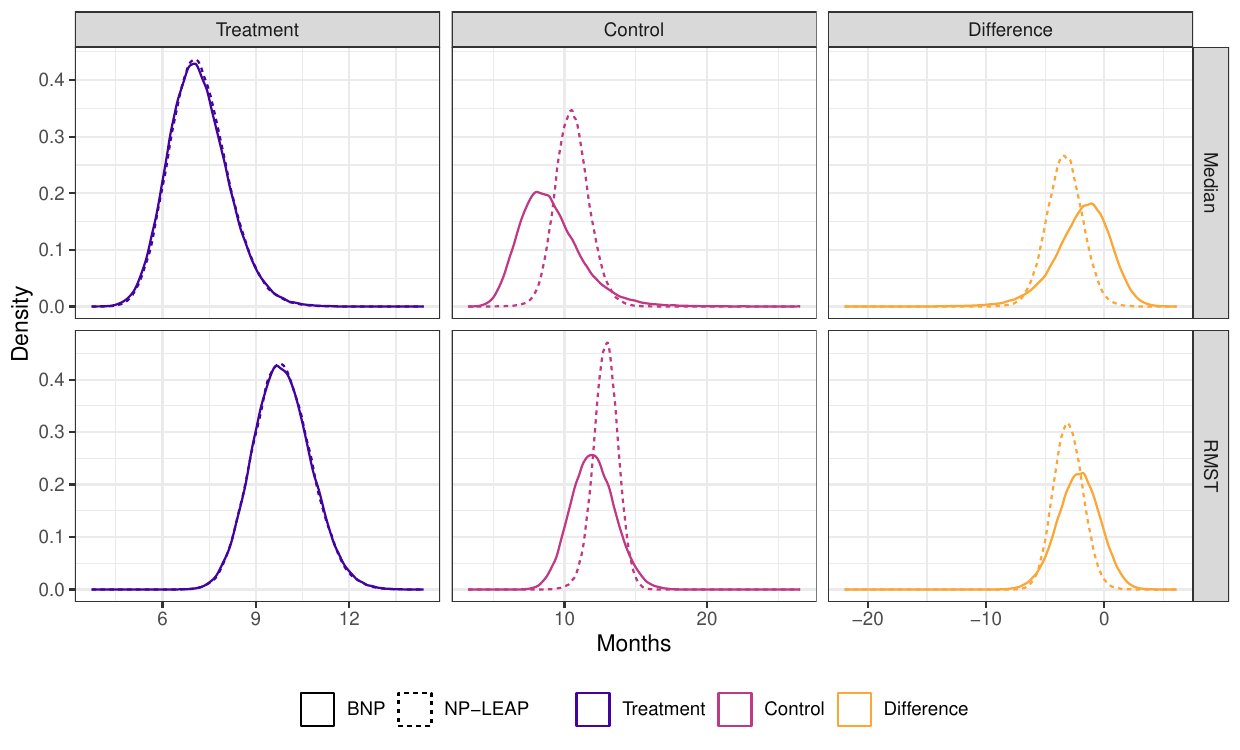}
    \caption{Posterior density for arm-specific RMST and medians along with their differences.}
    \label{fig:postdens}
\end{figure}

\section{Discussion}
\label{sec:discussion}

We introduced the NP-LEAP, a Bayesian nonparametric framework for dynamic
borrowing that extends the latent exchangeability prior beyond parametric
model classes.
By allowing the borrowing mechanism to operate on flexible nonparametric
outcome models, NP-LEAP removes reliance on restrictive distributional
assumptions while preserving individualized, observation-level discounting
across data sources.
Proposition~\ref{prop:leap_bma} establishes that the NP-LEAP implicitly
performs Bayesian model averaging over all $2^{n_0}$ partitions of the
historical data into exchangeable and nonexchangeable subsets, with
partition weights updated by the data.
Theorem~\ref{thm:consistency} provides a formal guarantee that this
borrowing does not induce asymptotic bias in estimating the current data
density, provided the historical sample size grows at the same rate as
the current sample size.
Through simulation and a real data application in NSCLC, we demonstrated
that NP-LEAP achieves meaningful efficiency gains under partial or full
exchangeability while remaining robust to prior-data conflict in a way
that parametric borrowing methods are not.

A practical advantage of the NP-LEAP is computational tractability.
The collapsed Gibbs sampler (Sections~C and~D of the Supplement) is
implemented in custom C\texttt{++} code and analyzes a single data set
in approximately five seconds, making the method feasible for the large
simulation studies required by regulatory agencies to assess operating
characteristics, as well as for routine use in data analysis.

Although NP-LEAP applies broadly across outcome types through appropriate
kernel specification, two settings warrant caution.
First, for binary outcomes, the use of a DPMM with a Bernoulli kernel
introduces potential identifiability concerns: a finite mixture of
infinite Bernoulli mixtures may not be well-identified from data, as
distinct component configurations can yield similar marginal likelihoods.
In such settings, practitioners should consider whether a parametric
or semiparametric approach is more appropriate, or whether the historical
sample size is sufficient to support nonparametric identification.
Second, the clustering mechanism underlying the NP-LEAP requires enough
observations to form meaningful clusters; with very small current sample
sizes, the Dirichlet process will tend to produce few clusters and the
nonparametric flexibility may offer little advantage over a parametric
specification, while potentially increasing variance.
We recommend that practitioners assess the posterior number of clusters
as a diagnostic for whether the nonparametric component is being
meaningfully utilized.

Several directions merit further investigation.
First, characterizing frequentist operating characteristics of the
NP-LEAP---such as type I error control and the asymptotic distribution
of posterior summaries---is important for regulatory acceptance and
remains an open theoretical problem.
Second, Theorem~\ref{thm:consistency} establishes posterior consistency
but does not characterize the rate at which the posterior contracts
around the truth; deriving contraction rates under the NP-LEAP would
strengthen the theoretical foundation and clarify how the ratio
$n_0 / n$ affects estimation precision.
Finally, the NP-LEAP framework can be expanded to allow for controlling for covariates in three different ways: through the probability of being exchangeable; through the kernel; or through the cluster weights. It is of practical and theoretical interest to know what is the best way to handle covariates given this flexibility.

The NP-LEAP demonstrates that dynamic borrowing need not
be confined to parametric outcome models.
By integrating nonparametric modeling with individualized discounting, the framework provides a flexible and principled foundation for information sharing across studies, with potential applications beyond the motivating example considered here.

\section{Data Availability Statement}\label{data-availability-statement}
Data not available due to legal and commercial restrictions.

\phantomsection\label{supplementary-material}
\bigskip

\begin{center}

{\large\bf SUPPLEMENTARY MATERIAL}

\end{center}

\begin{description}
\item[Appendix:]
Detailed derivations and supplementary figures are presented in the Appendix (pdf).
\end{description}



\bibliography{03_refs.bib}

\end{document}


\def\spacingset#1{\renewcommand{\baselinestretch}%
{#1}\small\normalsize} \spacingset{1}


\if1\anon
{
  \title{\bf Supplementary Materials for: NP-LEAP: Nonparametric Latent Exchangeability Prior for Model-Lean Borrowing from Historical Data}
  \author{Ethan M. Alt, \hspace{.2cm}\\
    Statistics and Data Science Innovation Hub, GSK
    \\ \\
    Miheer Dewaskar, \hspace{.2cm} \\ Department of Mathematics and Statistics, 
    University of New Mexico 
    \\ \\
    Jacob M. Maronge, Yuelin Lu, and Matthew A. Psioda, \hspace{.2cm} \\
    Statistics and Data Science Innovation Hub, GSK
  }
  \maketitle
} \fi

\if0\anon
{
  \bigskip
  \bigskip
  \bigskip
  \begin{center}
    {\LARGE\bf Supplementary Materials for: NP-LEAP: Nonparametric Latent Exchangeability Prior for Model-Lean Borrowing from Historical Data}
\end{center}
  \medskip
} \fi

\bigskip

\noindent%
{\it Keywords:} bayesian nonparametrics; bayesian dynamic borrowing; historical data
\vfill

\newpage
\spacingset{1.8} 

\section{Proof of BMA interpretation of NP LEAP}

We suppose that the current data $\bm{y} = \{y_i\}_{i=1}^n \subseteq \R^d$ is drawn from a density $f$ and that the historical data $\bm{y_0} = \{y_{0j}\}_{j=1}^{n_0} \subseteq \R^d$ is drawn from a two-part mixture density $\gamma f + (1-\gamma) g$ composed of $f$ and another density $g$ on $\R^d$.
Based on independent latent indicators $\bm{\epsilon_0} = \{\epsilon_{0j}\}_{j=1}^n \iid \textrm{Ber}(\gamma)$, we may equivalently express this model as
\begin{align}
    \{y_i\}_{i=1}^n | f, g \iid f
    \quad \text{ and for }  j = 1, \ldots, n_0: \quad
    y_{0j} | \epsilon_{0j}, f, g \sim \begin{cases}
      f & \text{ if } \epsilon_{0j} = 1 \\
      g & \text{ if } \epsilon_{0j} = 0
    \end{cases}
    \label{eq:generalized_leap_assumption}
\end{align}
Here $\gamma = \Pi(\epsilon_{0j} = 1|\gamma)$ is the prior probability that an historical data individual $j$ will be exchangeable with those in the current data (i.e.~$y_{0j} \sim f$). 

We assume an independent but otherwise arbitrary prior $\Pi(\bd f, \bd g, \bd \gamma) = \Pi(\bd f) \Pi(\bd g) \Pi(\bd \gamma)$ on densities $f, g$ and exchangeable probability $\gamma \in [0,1]$.
Then for some proportionality constant that only depends on $\bm{y}$ and $\bm{y}_0$, the joint posterior distribution is seen to factor as
\begin{equation}
\begin{aligned}
    &P(\bd f, \bd g, \bd \gamma, \bm{\epsilon}_0 | \bm{y}, \bm{y}_0)\\
    &\propto \Pi(\bd f) \Pi(\bd g) \Pi(\bd \gamma) \gamma^{|\bm{\epsilon}_0|}(1-\gamma)^{n_0-|\bm{\epsilon}_0|} 
     \prod_{i=1}^n f(y_i) 
    \prod_{j=1}^{n_0} \left\{ f(y_{0j})^{\epsilon_{0j}} g(y_{0j})^{1 - \epsilon_{0j}} \right\}
     \\
    &= \left[ \Pi(\bd f) \prod_{i=1}^n f(y_i) \prod_{j : \epsilon_{0j} = 1} f(y_{0j}) \right] \times \left[ \Pi(\bd g) \prod_{j : \epsilon_{0j} = 0} g(y_{0j}) \right]  \times \left[ \gamma^{|\bm{\epsilon}_0|}(1-\gamma)^{n_0-|\bm{\epsilon}_0|}  \Pi(\bd \gamma) \right]\\
    &=  P(\bd f | \bm{y}, \bm{y}_{0, \text{exch}}, \bm{\epsilon}_0) \times P(\bd g | \bm{y}, \bm{y}_{0, \text{nonexch}}, \bm{\epsilon}_0) \times p( \bm{y}, \bm{y}_{0}|\bm{\epsilon_0}) \times P(\bd \gamma | \bm{\epsilon_0}) \times \Pi(\bm{\epsilon_0})
\end{aligned} 
\label{eq:posterior_factorization}
\end{equation}
where $|\bm{\epsilon}_0| \doteq \sum_{j=1}^{n_0} \epsilon_{0j}$ and
\begin{itemize}
    \item $\Pi(\bm{\epsilon_0}) = \int \gamma^{|\bm{\epsilon}_0|}(1-\gamma)^{n_0 - |\bm{\epsilon}_0|} \Pi(\bd \gamma)$ is the marginal probability of observing the partition $\bm{\epsilon_0} \in \{0,1\}^{n_0}$ and $P(\bd \gamma | \bm{\epsilon_0}) = \frac{1}{\Pi(\bm{\epsilon_0})}\gamma^{|\bm{\epsilon}_0|}(1-\gamma)^{n_0-|\bm{\epsilon}_0|}  \Pi(\bd \gamma)$ is the conditional probability of $\gamma$ having observed such a partition.
    %
    \item $p( \bm{y}, \bm{y}_{0}|\bm{\epsilon_0}) = \int \int \prod_{i=1}^n f(y_i) \prod_{j=1}^{n_0} \left\{ f(y_{0j})^{\epsilon_{0j}} g(y_{0j})^{1 - \epsilon_{0j}} \right\} \Pi(\bd f) \Pi(\bd g)$ is the marginal likelihood of observing the data under the partition $\bm{\epsilon}_0$. Due to independence, this is seen to split into the product of two marginal likelihoods:
    $$
    \begin{aligned}
    p( \bm{y}, \bm{y}_{0}|\bm{\epsilon_0}) &= \underbrace{\int \Pi(\bd f) \left[ \prod_{i=1}^n f(y_i) \prod_{j : \epsilon_{0j} = 1} f(y_{0j}) \right]}_{p(\bm{y}, \bm{y}_{0, \text{exch}}|\bm{\epsilon_0})} \times \underbrace{\int \Pi(\bd g) \prod_{j : \epsilon_{0j} = 0} g(y_{0j}) dg}_{p(\bm{y}_{0, \text{nonexch}}|\bm{\epsilon_0})}
    \end{aligned}
    $$
    where $\bm{y}_{0, \text{exch}} = \{ y_{0j} : \epsilon_{0j} = 1 \}$ and $\bm{y}_{0, \text{nonexch}} = \{ y_{0j} : \epsilon_{0j} = 0 \}$ are the subsets of historical observations that are exchangeable and non-exchangeable, respectively.
    %
    \item Finally note that the posterior distributions of $f$ and $g$ given $\bm{\epsilon}_0$ and the data $\bm{y}, \bm{y_0}$ have the form
    $$
    P(\bd f | \bm{y}, \bm{y}_{0, \text{exch}}, \bm{\epsilon}_0) = \frac{\Pi(\bd f) \prod_{i=1}^n f(y_i) \prod_{j : \epsilon_{0j} = 1} f(y_{0j})}{p(\bm{y}, \bm{y}_{0, \text{exch}}|\bm{\epsilon_0})}
    $$
    and
    $$
    P(\bd g | \bm{y}, \bm{y}_{0, \text{nonexch}}, \bm{\epsilon}_0) = \frac{\Pi(\bd g) \prod_{j : \epsilon_{0j} = 0} g(y_{0j})}{p(\bm{y}_{0, \text{nonexch}}|\bm{\epsilon_0})}
    $$
     respectively.
\end{itemize}

Since $P(\bm{\epsilon_0}|\bm{y}, \bm{y}_0) \propto p(\bm{y}, \bm{y}_0|\bm{\epsilon_0}) \Pi(\bm{\epsilon}_0)$ for a proportionality constant that only depends on $\bm{y}$ and $\bm{y}_0$, we see that the left-hand side of \eqref{eq:posterior_factorization} is proportional (and thus equal) to the probability distribution:
$$
P(\bd f, \bd g, \bd \gamma, \bm{\epsilon}_0 | \bm{y}, \bm{y}_0) = P(\bd f | \bm{y}, \bm{y}_{0, \text{exch}}, \bm{\epsilon}_0) \times P(\bd g | \bm{y}, \bm{y}_{0, \text{nonexch}}, \bm{\epsilon}_0) \times P(\bd \gamma | \bm{\epsilon_0}) \times P(\bm{\epsilon_0}|\bm{y}, \bm{y}_0).
$$

Marginalization with respect to $g$ and $\gamma$ for every fixed $f$ and $\bm{\epsilon}_0$ yields
$$
    P(\bd f, \bm{\epsilon}_0 | \bm{y}, \bm{y}_0) = P(\bd f | \bm{y}, \bm{y}_{0, \text{exch}}, \bm{\epsilon}_0) \times P(\bm{\epsilon_0}|\bm{y}, \bm{y}_0)
$$
and thus
$$
    P(\bd f| \bm{y}, \bm{y}_0) = \sum_{\bm{\epsilon}_0 \in \{0, 1\}^{n_0}}
    P(\bd f | \bm{y}, \bm{y}_{0, \text{exch}}, \bm{\epsilon}_0) \times P(\bm{\epsilon}_0 | \bm{y}, \bm{y}_0).
$$

Thus, it follows that the nonparametric extension of the LEAP is equivalent to Bayesian model averaging over all possible partitions of the historical data into exchangeable and non-exchangeable groups.

\section{Proof of clustering scheme}
Let $\epsilon_{1i} = 1$ since all current data individuals are
guaranteed to be exchangeable. Throughout this proof, $F$ and $G$
appear as subscript labels on counts such as $N_{0k}$ (exchangeable historical, cluster $k$) and
$M_{0l}$ (nonexchangeable historical, cluster $l$), where $k$ indexes
exchangeable clusters and $l$ indexes nonexchangeable clusters.

The strategy is to consider the limiting case of a two-part mixture of Bayesian clustering models, i.e., consider the mixture model $F(\cdot | \bm{\theta}) = \sum_{k=1}^{K} w_{k} F(\cdot | \bm{\theta}_k)$ where $\bm{w} = (w_{1}, \ldots, w_{K})' \sim \text{Dirichlet}(1/K, \ldots, 1/K)$ with prior $\pi(\bm{\theta}) = \prod_{k=1}^{K} \pi(\bm{\theta}_k)$. Let $G(\cdot | \bm{\lambda})$, $\{ (\rho_l, \bm{\lambda}_{l}')', l = 1, \ldots, L \}$, and relevant priors be analogously defined.

As $\epsilon_{di}$ denotes membership to the exchangeable distribution $F$, let $z_{di} \in \{1,\dots,K_{\epsilon_{di}}\}$ denote to which component individual $i$ from data set $d$ belongs. Let $\bm{\Omega} = (\gamma, \bm{\theta}', \bm{\lambda}')$ and let $\bm{\delta} = (\alpha, \eta,\bm{\epsilon}_1, \bm{\epsilon}_0, \bm{z}_1, \bm{z}_0)$ the class and cluster-within-class membership variables. The joint prior of the two-part mixture of mixtures model is given by
\begin{align}
    \pi(\bm{\Omega}, \bm{\delta})
    &=
    \frac{
        \gamma^{N_{0} + c_1 - 1}(1 - \gamma)^{M_{0} + c_0 - 1}
    }{
        B(c_1, c_0)
    }
    \notag \\ &\quad \times
    \frac{\Gamma(\alpha)}{\Gamma(\alpha / K)^{K}}
    \prod_{k=1}^{K}
        w_{k}^{\,\tilde{N}_k + \alpha / K - 1}
        \,\pi(\bm{\theta}_{k})
    \notag \\ &\quad \times
    \frac{\Gamma(\eta)}{\Gamma(\eta / L)^{L}}
    \prod_{l=1}^{L}
        \rho_{l}^{\,M_{0l} + \eta / L - 1}
        \,\pi(\bm{\lambda}_{l}),
    \label{eq:fmm_jointprior}
\end{align}
where, $\tilde{N}_k = \sum_{d=0}^1 \sum_{i=1}^{n_d}
  I(\epsilon_{di} = 1, z_{di} = k)$
is the total number of individuals assigned to cluster $k$ of
the exchangeable class for $k = 1, \ldots, K$,
$M_{0l} = \sum_{j=1}^{n_0} I(\epsilon_{0j} = 0, z_{0j} = l)$
for $l = 1, \ldots, L$.

Integrating out $\bm{\Omega}$ from \eqref{eq:fmm_jointprior}, we obtain the marginal prior on class memberships and cluster parameters, which is given by
\begin{align}
    \pi(\bm{\delta}) &= \frac{ B( N_{0} + c_1, M_{0} + c_0) }{ B(c_1, c_0) }
    \notag \\ &\quad\times
    \frac{\Gamma(\alpha)}{ \Gamma(\tilde{N} + \alpha) }
    \prod_{k=1}^{K} \frac{\Gamma(\tilde{N}_k + \alpha / K)}{ \Gamma(\alpha / K) }
    \notag \\ &\quad\times
    \frac{\Gamma(\eta)}{ \Gamma(M_0 + \eta) }
    \prod_{l=1}^{L} \frac{\Gamma(M_{0l} + \eta / L)}{ \Gamma(\eta / L) },
    \label{eq:class_joint_prior}
\end{align}

\subsection{Prior clustering scheme for the current data}

Let $N_{0k}^{[1:(i-1)]}$ denote the number of exchangeable historical
individuals assigned to cluster $k$ among the first $i-1$ historical
individuals, so that $N_0 = \sum_{k=1}^K N_{0k}$ is the total number
of exchangeable historical individuals. Let $n_k = \sum_{i=1}^n I(z_i = k)$
denote the number of current-data individuals in cluster $k$, with
$\sum_{k=1}^K n_k = n$. Since all current data individuals are
exchangeable by definition, no current individual belongs to a
nonexchangeable cluster. In the sequel, the superscript $[1:(i-1)]$
denotes quantities based on the first $i-1$ historical individuals and
$(-i)$ denotes all individuals except individual $i$.

The prior in \eqref{eq:class_joint_prior} gives the joint prior for class memberships and cluster-within-class memberships. We now focus on the current data clustering. The prior probability of assigning individual $i$ to cluster $k$ is given by
\begin{align}
    \pi(z_{i} = k | \bm{z}_{1:(i-1)}) = \frac{\pi(z_i = k, \bm{z}_{1:(i-1)})}{\pi(\bm{z}_{1:(i-1)})}
    = \frac{ \frac{ \Gamma(n_k^{[1:(i-1)]} + 1 + \alpha / K)} { \Gamma(i+\alpha) } }{ \frac{ \Gamma(n_k^{[1:(i-1)]} + \alpha / K)} { \Gamma(i-1+\alpha) } }
    =
    \frac{n_k^{[1:(i-1)]} + \alpha / K }{ i - 1 + \alpha }.
    %
    \label{eq:prior_cluster_current}
\end{align}
The expression in \eqref{eq:prior_cluster_current} is precisely the same as that under a Dirichlet clustering model using only the current data. Hence, taking limits as $K \to \infty$ produces precisely the same DPMM prior clustering model for the current data, which is the so-called ``rich get richer'' scheme since the probability of being assigned to an existing cluster is proportional to the number of members in that cluster.

\subsection{Prior clustering scheme for the historical data}
Now, we wish to express the prior clustering and class assignment
probabilities for the historical data given the current data clusters.

Conditional on assignments for the current data, the prior probability
of clustering individual $i$ from the historical data is given by
\begin{align}
    \pi(\epsilon_{0i} = \ell,\, z_{0i} = k \mid
        \bm{z},\, \bm{z}_0^{[1:(i-1)]},\, \bm{\epsilon}_0^{[1:(i-1)]})
    =
\begin{cases}
    \dfrac{N_{0}^{[1:(i-1)]} + c_1}{ i - 1 + c_1 + c_0 }
    \cdot
    \dfrac{n_k + N_{0k}^{[1:(i-1)]} + \alpha / K}{
           n + N_{0}^{[1:(i-1)]} + \alpha}
    & \text{if } \ell = 1,
    \\[10pt]
    \dfrac{M_{0}^{[1:(i-1)]} + c_0}{ i - 1 + c_1 + c_0 }
    \cdot
    \dfrac{M_{0l}^{[1:(i-1)]} + \eta / L}{
           M_{0}^{[1:(i-1)]} + \eta}
    & \text{if } \ell = 0.
\end{cases}
    \label{eq:prior_cluster_hist1}
\end{align}

Taking limits of \eqref{eq:prior_cluster_hist1} as $K, L \to \infty$
yields
\begin{align}
    \pi(\epsilon_{0i} = 1,\, z_{0i} = k \mid \cdots)
    &=
    \frac{N_{0}^{[1:(i-1)]} + c_1}{i - 1 + c_1 + c_0}
    \cdot
    \begin{cases}
        \dfrac{n_k + N_{0k}^{[1:(i-1)]}}{n + N_{0}^{[1:(i-1)]} + \alpha}
        & \text{existing cluster } k \\[8pt]
        \dfrac{\alpha}{n + N_{0}^{[1:(i-1)]} + \alpha}
        & \text{new cluster}
    \end{cases}
    \label{eq:prior_cluster_hist_limit_exch}
    \\[10pt]
    \pi(\epsilon_{0i} = 0,\, z_{0i} = l \mid \cdots)
    &=
    \frac{M_{0}^{[1:(i-1)]} + c_0}{i - 1 + c_1 + c_0}
    \cdot
    \begin{cases}
        \dfrac{M_{0l}^{[1:(i-1)]}}{M_{0}^{[1:(i-1)]} + \eta}
        & \text{existing cluster } l \\[8pt]
        \dfrac{\eta}{M_{0}^{[1:(i-1)]} + \eta}
        & \text{new cluster}
    \end{cases}
    \label{eq:prior_cluster_hist_limit_nonexch}
\end{align}
where, for brevity, the ellipsis ($\cdots$) denotes all cluster and
exchangeability assignments thus far.

\subsection{Full conditional posterior clustering scheme}
\subsubsection{Current data}
Assume the current occupied clusters after removing individual $i$ have parameters $(\bm{\theta}_{1}, \ldots, \bm{\theta}_{i})$. Let the base measure for $\bm{\theta}$ be given by $P_0$. For an existing cluster $k$, applying the derived prior above with the likelihood contribution for individual $i$ yields.
$$
    p(z_i = k | \cdots) \propto 
    \begin{cases}
        \left( n_k^{(-i)} + N_{0k} \right) f(y_i | \bm{\theta}_k) & \text{ for an existing cluster } k \in \{1, \ldots, K_i\}
        \\
        \alpha \int f(y_i | \bm{\theta}) dP_0(\bm{\theta}) & \text{ for a new cluster } k = K_i + 1
    \end{cases}
$$
where $n_k^{(-i)}$ denotes the number of individuals from the current data assigned to cluster $k$ of the exchangeable density after removing individual $i$ and $N_{0k}$ denotes the total number of individuals from the historical data assigned to that cluster.

\subsubsection{Historical data}
For historical individual $j$, let $K_j$ and $L_j$ denote the
number of occupied exchangeable and nonexchangeable clusters after
removing individual $j$ from the historical data, and let $N_{0}^{(-j)}$,
$N_{0k}^{(-j)}$ denote the corresponding class and
cluster-within-class counts with $M_{0}^{(-j)}$ and $M_{0l}^{(-j)}$ similarly defined.

\paragraph{Case $\epsilon_{0j} = 1$ (exchangeable).}
\begin{align}
    p\!\left(\epsilon_{0j}=1,\, z_{0j}=k \mid \cdots\right) \propto
    \left(N_{0}^{(-j)} + c_1\right) \times
    \begin{cases}
        \dfrac{n_k + N_{0k}^{(-j)}}{n + N_{0}^{(-j)} + \alpha}
        \, f(y_{0j} \mid \bm{\theta}_k)
        & k \in \{1,\ldots,K_j \}
        \\[8pt]
        \dfrac{\alpha}{n + N_{0}^{(-j)} + \alpha}
        \displaystyle\int f(y_{0j} \mid \bm{\theta})\, dP_0(\bm{\theta})
        & k = K_j + 1
    \end{cases}
    \label{eq:fullcond_hist_exch}
\end{align}

\paragraph{Case $\epsilon_{0j} = 0$ (nonexchangeable).}
\begin{align}
    p\!\left(\epsilon_{0j}=0,\, z_{0j}=l \mid \cdots\right) \propto
    \left(M_{0}^{(-j)} + c_0\right) \times
    \begin{cases}
        \dfrac{M_{0l}^{(-j)}}{M_{0}^{(-j)} + \eta}
        \, g(y_{0j} \mid \bm{\lambda}_l)
        & l \in \{1,\ldots,L_j\}
        \\[8pt]
        \dfrac{\eta}{M_{0}^{(-j)} + \eta}
        \displaystyle\int g(y_{0j} \mid \bm{\lambda})\, dQ_0(\bm{\lambda})
        & l = L_j + 1
    \end{cases}
\end{align}

\section{MCMC Algorithm Derivation}
\label{sec:mcmc_alg}

The joint posterior distribution is given by
\begin{align}
    p(\bm{\Omega} | \bm{y}, \bm{y}_0) \propto & \pi(\bm{\Omega}) 
    \left\{ \prod_{i=1}^{n} \prod_{k=1}^{K}\left( w_k f(y_i | \bm{\theta}_k) \right)^{z_{ik}}
    \right\}
    \notag \\ & \times
    \left\{ \prod_{j=1}^{n_0}  \left[ \gamma \prod_{k=1}^{K} \left( w_k f(y_{0j} | \bm{\theta}_k) \right)^{z_{0jk}} \right]^{\epsilon_{0j}}
    \left[ (1 - \gamma) \prod_{l=1}^{L} \left( \rho_l g(y_{0j} | \bm{\lambda}_l) \right)^{z_{0jl}} \right]^{1 - \epsilon_{0j}}
    \right\}
    .
    %
    \label{eq:npleap_joint}
\end{align}
where
\begin{itemize}
    \item $\bm{\Omega}$ contains all model parameters, cluster assignments, and class assignments
    \item $w_k = \nu_k \prod_{l < k} (1 - \nu_l)$ where $\nu_k \sim \text{Beta}(1, \alpha)$, $k = 1, \ldots, K$
    %
    \item $\rho_{l} = \psi_l \prod_{m < l} (1 - \psi_l)$, $\psi_l \sim \text{Beta}(1, \eta)$
    %
    \item $z_{ik} \in \{1, \ldots, K\}$ denotes to which cluster individual $i$ in the current data belongs
    %
    \item $\epsilon_{0j} \in \{0, 1\}$ denotes the class of historical individual $j$, with $\epsilon_{0j} = 1$ indicating exchangeability with the current data and $\epsilon_{0j} = 0$ indicating nonexchangeability
    %
    \item $z_{0jk}$ denotes to which cluster individual $j$ in the historical data belongs (the number of such clusters depends on the value of $\epsilon_{0j}$)
    %
    \item $z_{ik} = I(z_i = k)$, $\epsilon_{0jl} = I(\epsilon_{0j} = l)$, $z_{0jk} = I(z_{0j} = k)$.
\end{itemize}
\begin{align}
&\pi(\bm{\Omega}) = \left[ p_0 \cdot \frac{\gamma^{c_1-1}(1-\gamma)^{c_0-1}}{B(c_1,c_0)} + (1-p_0)\delta_1(\gamma) \right] \\
& \times \left[ \prod_{k=1}^{K-1} \alpha(1 - v_k)^{\alpha-1} \right]
\left[ \prod_{k=1}^{K} \pi(\bm{\theta}_k) \right]
\times \left[ \prod_{l=1}^{L-1} \eta(1 - \psi_l)^{\eta-1} \right]
\left[ \prod_{l=1}^{L} \pi(\bm{\lambda}_l) \right]
\end{align}

Define:
\begin{itemize}
    %
    \item Let $N_{0} = \sum_{j=1}^{n_0} \epsilon_{0j}$ denote the number of exchangeable historical data individuals and $M_{0} = \sum_{j=1}^{n_0} (1 - \epsilon_{0j})$ the number of nonexchangeable historical data individuals.
    %
    \item $\tilde{N}_k = n_k + N_{0k}$ the \emph{total} number of individuals assigned to cluster $k$ of class $F$ (across current and historical data individuals)
    %
    %
    \item $\tilde{N} = \sum_{k=1}^{K} \tilde{N}_k = n + N_0$ the total
    number of exchangeable individuals (current and historical combined)
    %
    \item $D^{\text{exch}}_k = \{ y_i : z_k = k \} \cup \{ y_{0j} : \epsilon_{0j} = 1, z_{0j} = k \}$ the data for current data and exchangeable historical data individuals assigned to cluster $k$ of class $F$
    %
    \item $D^{\text{unexch}}_l = \{ y_{0j} : \epsilon_{0j} = 0, z_{0j} = l \}$ the data for unexchangeable historical data individuals assigned to cluster $l$
    %
\item $L(\bm{\theta}_k \mid f, D^{\mathrm{exch}}_k)
  = \prod_{i:\,z_i=k} f(y_i \mid \bm{\theta}_k)
  \times \prod_{j:\,\epsilon_{0j}=1,\,z_{0j}=k} f(y_{0j} \mid \bm{\theta}_k)$
  the likelihood for exchangeable cluster $k$.

\item $L(\bm{\lambda}_l \mid g, D^{\mathrm{unexch}}_l)
  = \prod_{j:\,\epsilon_{0j}=0,\,z_{0j}=l} g(y_{0j} \mid \bm{\lambda}_l)$
  the likelihood for nonexchangeable cluster $l$.
\end{itemize}
Note that we can write the posterior in \eqref{eq:npleap_joint} as
\begin{align}
    p(\bm{\Omega} | \bm{y}, \bm{y}_0) \propto \pi(\bm{\Omega})
        \gamma^{N_{0}} (1 - \gamma)^{M_{0}} \prod_{k=1}^{K}
        w_k^{\tilde{N}_k}
        \left\{ L(\bm{\theta}_k | f, D^{\text{exch}}_k) \right\}
        \prod_{l=1}^{L} \left\{ \rho_l^{M_{0l}} L(\bm{\lambda}_l | g, D^{\text{unexch}}_l) \right\}
\end{align}

\subsection{Full conditional of $\gamma$}
We have
\begin{align}
    p(\gamma | \cdots) &= \frac{%
  p_0 \cdot \frac{\gamma^{c_1+N_0-1}\,(1-\gamma)^{c_0+M_0-1}}{B(c_1,c_0)}
  + (1-p_0)\,\delta_1(\gamma)\,I(M_0=0)
}{%
  \int_0^1\!\left[
    p_0 \cdot \frac{\gamma_*^{c_1+N_0-1}\,(1-\gamma_*)^{c_0+M_0-1}}{B(c_1,c_0)}
    + (1-p_0)\,I(\gamma_*=1)\,I(M_0=0)
  \right]d\gamma_*
}
    , \notag \\
    &= \frac{ p_0 \cdot \frac{ \gamma^{c_1 + N_{0} - 1} \times (1 - \gamma)^{c_0 + M_{0} - 1} }{B(c_1, c_0)} + (1 - p_0) \delta_1(\gamma) I(M_{0} = 0) }{ p_0 \frac{ B(c_1 + N_{0}, c_0 + M_{0}) }{ B(c_1, c_0) } + ( 1- p_0)I(M_{0} = 0) }
    , \notag \\
    &=
    \begin{cases}
       \frac{ p_0 \cdot \frac{ \gamma^{c_1 + N_{0} - 1} \times (1 - \gamma)^{c_0 + M_{0} - 1} }{B(c_1, c_0)} + (1 - p_0) \delta_1(\gamma) }{ p_0 \frac{ B(c_1 + N_{0}, c_0 + M_{0}) }{ B(c_1, c_0) } + (1- p_0) }
       & \text{ if } M_{0} = 0
       \\
              \frac{ p_0 \cdot \frac{ \gamma^{c_1 + N_{0} - 1} \times (1 - \gamma)^{c_0 + M_{0} - 1} }{B(c_1, c_0)}  }{ p_0 \frac{ B(c_1 + N_{0}, c_0 + M_{0}) }{ B(c_1, c_0) } }
       & \text{ if } M_{0} > 0
    \end{cases}
    , \notag \\
    &= \begin{cases}
        \tilde{p}_0 f_{\beta}(c_1 + N_{0}, c_0 + M_{0}) + (1 - \tilde{p}_0) \delta_1(\gamma)
        & \text{ if } M_{0} = 0
        \\
        f_{\beta}(c_1 + N_{0}, c_0 + M_{0})
        &\text{ if } M_{0} > 0
    \end{cases}
    , \notag \\
    &= \tilde{p}_0^{I(M_{0} = 0)} f_{\beta}(\gamma | c_1 + N_{0}, c_0 + M_{0}) + \left[ (1 - \tilde{p}_0) \cdot \delta_1(\gamma) \right]^{ I(M_{0} = 0) }
\end{align}
where $\tilde{p}_0 = \dfrac{ p_0 \cdot \frac{ B(c_1 + N_{0}, c_0 + M_{0}) }{ B(c_1, c_0) } }{ p_0 \cdot \frac{ B(c_1 + N_{0}, c_0 + M_{0}) }{ B(c_1, c_0) } + 1 - p_0 }$.

\subsection{Full conditional of DP concentration parameters}
We have
$p(\alpha, \eta | \cdots) \propto p(\alpha | \cdots) p(\eta | \cdots)$
where
\begin{align}
    p(\alpha | \cdots)
    &\propto \pi(\alpha) \prod_{k=1}^{K-1} \pi(v_k | \alpha)
    \notag \\
    &\propto \alpha^{a_{\alpha} - 1} e^{-b_\alpha \alpha}
        \prod_{k=1}^{K-1} \alpha (1 - v_k)^{\alpha - 1}
    \notag \\
    &= \alpha^{a_{\alpha} + (K-1) - 1}
        \exp\!\left\{
            -\alpha \left[ b_\alpha - \sum_{k=1}^{K-1} \log(1 - v_k) \right]
        \right\}
    \notag \\
    &\propto f_{\Gamma}\!\left(
        \alpha \,\Big|\,
        a_{\alpha} + K - 1,\;
        b_\alpha - \sum_{k=1}^{K-1} \log(1 - v_k)
    \right).
\end{align}
By a direct analogy, we have
\begin{align}
    p(\eta | \cdots)
    \propto f_{\Gamma}\!\left(
        \eta \,\Big|\,
        a_{\eta} + L - 1,\;
        b_\eta - \sum_{l=1}^{L-1} \log(1 - \psi_l)
    \right).
\end{align}

\subsection{Full conditional of stick-breaking variables}
Recall that $w_k = v_k \prod_{m < k} (1 - v_m)$ where
$v_k \overset{\mathrm{iid}}{\sim} \mathrm{Beta}(1, \alpha)$ for
$k = 1, \ldots, K-1$ (with $v_K = 1$ by convention), and
$\rho_l = \psi_l \prod_{m < l} (1 - \psi_m)$ where
$\psi_l \overset{\mathrm{iid}}{\sim} \mathrm{Beta}(1, \eta)$
for $l = 1, \ldots, L-1$ (with $\psi_L = 1$ by convention).
Since the stick-breaking variables for the two components enter the
posterior independently, we have
\begin{align}
    p(\bm{v}, \bm{\psi} | \cdots)
    \propto p(\bm{v} | \cdots)\, p(\bm{\psi} | \cdots).
\end{align}

\paragraph{Derivation for $\bm{v}$.}
The relevant factors in the posterior are the
$\mathrm{Beta}(1, \alpha)$ prior on each $v_k$ and the
likelihood contribution through $\prod_{k=1}^{K} w_k^{\tilde{N}_k}$.
Writing $w_k = v_k \prod_{m < k}(1-v_m)$ explicitly:
\begin{align}
    \prod_{k=1}^{K} w_k^{\tilde{N}_k}
    &= v_1^{\tilde{N}_1}
       \prod_{k=2}^{K-1}
         \left( v_k \prod_{m=1}^{k-1}(1-v_m) \right)^{\!\tilde{N}_k}
       \times
       \prod_{m=1}^{K-1}(1-v_m)^{\tilde{N}_K}
    \notag \\
    &= \prod_{k=1}^{K-1}
         v_k^{\tilde{N}_k}
         (1-v_k)^{\,\sum_{m=k+1}^{K} \tilde{N}_{m}}.
\end{align}
The second equality follows by collecting, for each $k \in
\{1,\ldots,K-1\}$, the power of $v_k$ (which is $\tilde{N}_k$)
and the power of $(1-v_k)$ (which accumulates one factor of
$\tilde{N}_{m}$ for every $m > k$).
Multiplying by the prior kernel $(1-v_k)^{\alpha - 1}$ gives
\begin{align}
    p(\bm{v} | \cdots)
    &\propto
    \prod_{k=1}^{K-1}
      v_k^{\tilde{N}_k}
      (1-v_k)^{\,\alpha + \sum_{m=k+1}^{K} \tilde{N}_{m} - 1},
\end{align}
which factors into independent Beta kernels.  Hence
\begin{align}
    v_{k} \mid \cdots
    \sim
    \mathrm{Beta}\!\left(
        \tilde{N}_k + 1,\;
        \alpha + \sum_{m=k+1}^{K} \tilde{N}_{m}
    \right),
    \quad k = 1, \ldots, K-1.
\end{align}

\paragraph{Derivation for $\bm{\psi}$.}
An entirely analogous argument applies to the nonexchangeable
component.  The relevant likelihood contribution is
$\prod_{l=1}^{L}\rho_l^{M_{0l}}$, and the prior kernel for each
$\psi_l$ is $(1-\psi_l)^{\eta - 1}$.  Expanding
$\rho_l = \psi_l \prod_{m<l}(1-\psi_m)$ and collecting powers
exactly as above gives
\begin{align}
    p(\bm{\psi} | \cdots)
    &\propto
    \prod_{l=1}^{L-1}
      \psi_l^{M_{0l}}
      (1-\psi_l)^{\,\eta + \sum_{m=l+1}^{L} M_{0m} - 1},
\end{align}
so that
\begin{align}
    \psi_{l} \mid \cdots
    \sim
    \mathrm{Beta}\!\left(
        M_{0l} + 1,\;
        \eta + \sum_{m=l+1}^{L} M_{0m}
    \right),
    \quad l = 1, \ldots, L-1.
\end{align}

\subsection{Full conditional of density parameters $(\theta, \lambda)$}
We have
\begin{align}
    p(\bm{\theta}, \bm{\lambda} | \dots) &\propto \pi(\bm{\theta}, \bm{\lambda} | \bm{\Omega} \setminus \{ \bm{\theta}, \bm{\lambda} \})
    \left[ \prod_{k=1}^{K} L(\bm{\theta}_k | D^{\text{exch}}_k) \right]
    \left[ \prod_{l=1}^{L} L(\bm{\lambda}_l | D^{\text{unexch}}_l) \right]
\end{align}
typically, $\pi(\bm{\theta}, \bm{\lambda} | \bm{\Omega} \setminus \{ \bm{\theta}, \bm{\lambda} \}) = \left[ \prod_{k=1}^{K} \pi(\bm{\theta}_k) \right] \left[ \prod_{l=1}^{L} \pi(\bm{\lambda}_l) \right]$ in which case we can write
\begin{align}
    p(\bm{\theta}, \bm{\lambda} | \dots) &\propto 
    \left[ \prod_{k=1}^{K} p(\bm{\theta}_k | D^{\text{exch}}_k) \right] \left[ \prod_{l=1}^{L} p(\bm{\lambda}_l | D^{\text{unexch}}_l) \right],
\end{align}
where $p(\bm{\theta}_k | D^{\text{exch}}_k)$ is the posterior density for $\bm{\theta}_k$ based on prior $\pi(\bm{\theta}_k)$, data $D^{\text{exch}}_k$, and kernel $f$ and $p(\bm{\lambda}_l | D^{\text{unexch}}_l)$ is the posterior for $\bm{\lambda}_l$ based on prior $\pi(\bm{\theta}_l)$, kernel $g$, and data $D^{\text{unexch}}_l$

\subsection{Full conditional of $z_i$}
We have
\begin{align}
    p(z_i = k | \cdots) \propto w_{k} f(y_i | \bm{\theta}_k) \propto \frac{ w_{k} f(y_i | \bm{\theta}_k) }{\sum_{k'=1}^{K} w_{k'} f(y_i | \bm{\theta}_{k'}) }.
\end{align}

\subsection{Full conditional of $\epsilon_{0j}$ and $z_{0j}$}
We have
\begin{align}
    p(\epsilon_{0j} = \ell, z_{0j} = k | \cdots) \propto 
     \begin{cases}
       \dfrac{\gamma w_{k} f(y_{0j} | \bm{\theta}_{k})}{
       \gamma \sum_{k'=1}^{K} w_{k'} f(y_{0j} | \bm{\theta}_{k'}) +
       (1 - \gamma) \sum_{l'=1}^{L} \rho_{l'} g(y_{0j} | \bm{\lambda}_{l'}) }
       & \ell = 1,\; k = 1, \ldots, K
        \\[10pt]
        \dfrac{(1 - \gamma) \rho_{l} g(y_{0j} | \bm{\lambda}_{l})}{
        \gamma \sum_{k'=1}^{K} w_{k'} f(y_{0j} | \bm{\theta}_{k'}) +
        (1 - \gamma) \sum_{l'=1}^{L} \rho_{l'} g(y_{0j} | \bm{\lambda}_{l'}) }
        & \ell = 0,\; l = 1, \ldots, L
    \end{cases}
    \label{eq:fullcond_cat}
\end{align}

\subsection{MCMC Algorithm}
\begin{enumerate}
    \item Sample $\gamma \sim \tilde{p}_0 \cdot \text{Beta}(c_1 + N_{0}, c_0 + M_{0}) + (1 - \tilde{p}_0) \cdot I(\gamma = 1)$ where 
    $
        \tilde{p}_0 = \left[\frac{ p_0 \cdot \frac{ B(c_1 + N_{0}, c_0 + M_{0}) }{ B(c_1, c_0) } }{ p_0 \cdot \frac{ B(c_1 + N_{0}, c_0 + M_{0}) }{ B(c_1, c_0) } + 1 - p_0 } \right]^{I(M_{0} = 0)}
    .
    $
    %
    \item Sample $\alpha \sim \text{Gamma}\left( a_{\alpha} + K - 1,  
    b_\alpha - \sum_{k=1}^{K - 1} \log(1 - v_k)
    \right)$.
    %
    \item Sample $\eta \sim \text{Gamma}\left( a_{\eta} + L - 1,  
    b_\eta - \sum_{l=1}^{L - 1} \log(1 - \psi_l)
    \right)$.
    %
    \item Sample $v_k \sim \text{Beta} \left( \tilde{N}_k + 1, \alpha + \sum_{m = k + 1}^{K} \tilde{N}_{m} \right) $, \ \ $k = 1, \ldots, K-1$, where $\tilde{N}_k$ denotes the total number of individuals assigned to exchangeable cluster $k$. Compute $w_k = v_k \prod_{m < k} (1 - v_m)$
    %
    \item Sample $\psi_l \sim \text{Beta} \left( M_{0l} + 1, \eta + \sum_{m = l + 1}^{L} M_{0m} \right) $, \ \ $l = 1, \ldots, L-1$, where $M_{0l}$ denotes the total number of historical data individuals assigned to $G$-cluster $l$. Compute $\rho_l = \psi_l \prod_{m < l} (1 - \psi_m)$
    %
    \item Sample $\bm{\theta}_{k} \sim p(\bm{\theta}_{k} | D^{\text{exch}}_k)$ for $k = 1, \ldots, K$ and $\bm{\lambda}_l \sim p(\bm{\lambda}_l | D^{\text{unexch}}_l)$ for $l = 1, \ldots, L$.
    %
    \item For $i = 1, \ldots, n$, sample $z_i$ from a categorical distribution with 
    $$
    \Pr(z_i = k) = \frac{w_{k} f(y_i | \bm{\theta}_k)}{\sum_{k'=1}^{K} w_{k'} f(y_i | \bm{\theta}_{k'}) }
    , \ \ k = 1, \ldots, K
    $$
    %
    \item For $j = 1, \ldots, n_0$, sample $(\epsilon_{0j}, z_{0j})$ via the categorical distribution given in \eqref{eq:fullcond_cat}.
\end{enumerate}

\subsection{Hyperparameter elicitation for the ANOVA DDP model}
\label{sec:hyperpar}
For the exchangeable component base measure, we elicit $\bm{\mu}_0$ to be the maximum likelihood estimator of a log-normal AFT model and $\bm{\Sigma}_0$ to be the inverse of the observed information matrix discounted to 2 individuals, i.e., $\bm{\Sigma}_0 = [\mathcal{J}^{-1}]_{\beta\beta} \times \frac{n}{2}$, where $[\mathcal{J}^{-1}]_{\beta\beta}$ is the inverse observed information matrix for the regression coefficients. For the precision base measure, we elicit $\delta_{\alpha}, \kappa_{\alpha}$ to minimize the KL divergence between the log-normal approximation to $\tau$ based on the MLE with the discounted information matrix and a Gamma distribution; the Gamma base measure maintains semi-conjugacy and eases computational speed \citep{roy2018bayesian}. For the nonexchangeable component base measure, we use $\bm{\mu}_{Ql} = \bm{0}$ and $\bm{\Sigma}_{Ql} = \text{diag}\{10^2, 10^2\}$, with $\tau_{Ql} \sim \text{Gamma}(0.5, 0.5)$. Using a diffuse prior for $G$ avoids introducing artificial preference for assigning new clusters to the exchangeable group when the historical data are in fact exchangeable with the current data.

\section{Data generation process}
\subsection{Proportional hazards model}
To mimic the real data sets, we fit a spline-based proportional hazards model to the pooled current and historical data with time-varying effects for study ID (historical or current) and treatment arm controlling for baseline age, sex, and BMI. This results in a conditional cumulative hazard function given by
\begin{align}
H(t | s, a, \bm{x}) = \exp\left\{ 
      \sum_{j=0}^4 \alpha_j z^{(k)}_j(\log t) 
    + I(s = 0) [ \beta_{\text{hist}} + \beta_1 z^{(k)}_j(\log t)]
    + \sum_{j=0}^1 \gamma_j a z^{(k)}_j(\log t)
    + \bm{x}'\bm{\delta}
\right\}
,
\label{eq:spline_cond_hazard}
\end{align}
where $H(t | a, \bm{x}, s)$ is the conditional cumulative hazard for arm $a$ of study $s$, $z_j^{(k)}(q)$ is the $j^{th}$ basis function for a spline of degree $k$ based on continuous variable $q$, and $\bm{x}$ includes sex, BMI, and weight as baseline covariates. The MLE as fit by the \texttt{flexsurv} package in R \citep{jackson2016flexsurv} is given in Table~\ref{tab:spline_mle}. As will be seen below, the maximum likelihood analysis will serve as the basis for generating the event times for the simulation. The population survival functions based on the data generation process are presented in Figure S1 in the Supplementary Materials.

\begin{table}[ht]
\centering
\begin{tabular}{cccccccccc} 
\toprule
$\hat{\alpha}_0$ & $\hat{\alpha}_1$ & $\hat{\alpha}_2$ & $\hat{\alpha}_3$ & $\hat{\beta}_{\text{hist}}$ & $\hat{\beta}_1$ & $\hat{\gamma}_0$ & $\hat{\gamma}_1$ & $\hat{\delta}_{\text{sex}}$ & $\hat{\delta}_{\text{BMI}}$ \\ 
\midrule
1.00 & -0.21 & 0.28 & -0.01 & -0.26 & 0.19 & 0.32 & 0.16 & 0.65 & 0.02 \\
\bottomrule
\end{tabular}
\caption{Maximum likelihood estimates from fitting a proportional hazards model using splines with $k = 3$ knots; the $\alpha_j$'s are time effects, the $\beta_j$'s are historical study-specific effects, the $\gamma_j$'s are treatment effects, and the $\delta$'s are baseline covariate effects.}
\label{tab:spline_mle}
\end{table}

\subsection{Current data generation}
Baseline covariates $\bm{X}_i$ are generated using the  nonparametric boostrap distribution (i.e., sampling with replacement from the current data set). Treatment assignment $A_i$ was generated via permuted block randomization. Given the block $\bm{b} = (1, 1, 0)'$, for each 3 individuals, we (i) permute the elements of $\bm{b}$ to obtain $\tilde{\bm{b}}$ and (ii) assign the sequential 3 individuals according to $\tilde{\bm{b}}$. Based on the current data sample size of $n = 104$, this guarantees that the number of treated individuals is 69 or 70 and is more commonly used than simple randomization in practice.

Overall survival times $T_i$ were generated based on the survival model in \eqref{eq:spline_cond_hazard} using the MLEs in Table~\ref{tab:spline_mle} as the true values. Dropout times $C_i$ were generated from an exponential distribution where the rate is arm-specific, i.e., $C_i | A_i = a \sim \text{Exponential}(\zeta_a)$ where $\zeta_0 = 0.0797$ and $\zeta_1 \approx 0.0878$. The values were determined so that approximately 5\% of individuals would drop out of the study after 1 year, which was obtained by solving the optimization problem $\eta_{a} = \text{arg min}_{\Omega} \left\{ \left| \int_0^1 S(c | a) \Omega e^{-c\Omega} dc - 0.05 \right| \right\}$.

Finally, we assume that accural times are generated using a $\text{Beta}(1, 2)$ distribution so that all individuals are accrued by the end of the first year. The analysis triggers when the last patient in the risk set has been followed up for at least 2 years.

\subsection{Historical data generation}
For historical control data, we set $A_{0i} = 0$. We then draw a binary exchangeability indicator, i.e., for $\gamma \in \{0, 0.5, 1\}$ (respectively corresponding to no, partial, and full exchangeability), we generate $Z_{0i} \sim \text{Bernoulli}(\gamma)$. If the individual is exchangeable ($Z_{0i} = 1)$, their variables are generated the same way as the current data. 

If an individual is not exchangeable (i.e., if $Z_{0i} = 0)$, we generate covariates from the nonparametric bootstrap distribution using the real historical data set. Event times are drawn based on the conditional hazard function $H(\cdot | a = 0, s = 0, \bm{X}_{0i})$ in \eqref{eq:spline_cond_hazard} with $\beta_0 \in \hat{\beta}_0 + \delta$ for $\delta \in \{-0.5, 0, 0.5\}$ resulting in $\beta_0 \in \{ -0.26, -0.76, 0.24 \}$.

For unexchangeable individuals, we assumed censoring times were generated via $C_{0i} | Z_{0i}, \beta_0 \sim \text{Exponential}(\zeta_{\beta_0})$, where $\zeta_{-0.76} = 0.064$, $\zeta_{-0.26} = 0.720$, and $\zeta_{0.24} = 0.867$. Similar to the current data generation, the historical study ends when the last patient in the risk set has been followed up for 2 years.

To mimic a real data borrowing setting with regulatory constraints, we generate a single historical data set for each simulation scenario. Specifically, we repeat the data generation process {100,000} times, and select the historical data set whose KM estimator most closely aligns with the data generation process.

\section{Posterior consistency proof}
\label{sec:intro}

We establish the asymptotic consistency of the NP-LEAP posterior under standard regularity assumptions on the prior. For convenience, here we use a different notation than that of the main manuscript. The proof of the main result can be found in Section \ref{sec:proof-consistency}. The regularity assumptions on the priors from  Section \ref{sec:intro} are verified for the Dirichlet process location-scale mixtures of Gaussians in Section \ref{sec:dp-mixture-of-Gaussians}.

Let $\cX = \R^d$ and use $\DenX = \{h: \cX \to [0,\infty) : \int h(x) dx = 1 \}$ to denote the collection of all densities on $\cX$ with respect to the Lebesgue measure. Given an $h \in \DenX$, let $P_h(\cdot) = \int_{\cdot} h(x) dx$ denote the probability measure on $\cX$ induced by $h$. By identifying densities on $\DenX$ that are (Lebesgue) almost surely equal, we will regard $(\DenX, d_H)$ as a metric space, where $d_H(f,g) = \sqrt{\int(\sqrt{f}(x) - \sqrt{g}(x))^2 dx}$ is the Hellinger metric on $\DenX$. 

The \emph{LEAP model} with parameter $\theta = (f,g,p) \in \Theta \doteq \DenX \times \DenX \times [0,1]$ for the current and historical data $Z_{1:N} = (X_{1:n},Y_{1:m})$ is defined as
\begin{equation*}
\begin{aligned}
X_1, \ldots, X_n &\iid P_f\\
Y_1, \ldots, Y_m &\iid P_{p f + (1-p )g}.
\end{aligned}
\end{equation*}
In words, LEAP posits that the population that generates the historical observations $Y_{1:m}$ contains a proportion $p \in [0,1]$ of the population $f$ that generates the current observations $X_{1:n}$.

The likelihood for the LEAP model is given by:
$$
L(Z_{1:N}|\theta) = \prod_{i=1}^n f(X_i) \prod_{j=1}^m (p f(Y_j) + (1-p) g(Y_j)) \notag.
$$
 
Based on this likelihood and \cite[Section 6.7.1]{ghosal2017fundamentals}, we define a loss $\dLH$ on $\Theta$ for any $\theta_1 = (f_1, g_1, p_1), \theta_2 = (f_2,g_2,p_2) \in \Theta$ and $\Omega \in [0,1]:$
\begin{equation}
\label{def:LH-dist}
\dLH(\theta_1, \theta_2) \doteq \sqrt{\Omega d_H^2(f_1, f_2) + (1-\Omega) d_H^2\left(p_1 f_1 + (1-p_1) g_1, p_2 f_2 + (1-p_2) g_2\right)}.
\end{equation}

The following standard result is proved in Section \ref{sec:proof-aux-lemmas}.
\begin{lemma} For any $\Omega \in [0,1]$, $\dLH: \Theta \times \Theta \to [0,\sqrt{2}]$ is a pseudo-metric on $\Theta$, i.e. $\dLH$ is symmetric, satisfies the triangle inequality, and $\dLH(\theta, \theta) = 0$, but $\dLH(\theta, \theta') = 0$ does not mean that $\theta = \theta'$.
    \label{lem:pseudo-metric}
\end{lemma}

We make the assumption that data $Z_{1:N}$ is generated independently from an unknown $\theta_0 = (f_0, g_0, p_0) \in \Theta$ with $p_0 \in (0,1)$, and $\Omega \in [0,1]$ is the limiting fraction of current observations among all  observations.

\begin{assumption}
\label{a:data}
For some $\theta_0 = (f_0, g_0, p_0) \in \Theta$ with $p_0 \in (0,1)$, observations $Z_N = (X_{1:n}, Y_{1:m})$ are generated as $X_1, \ldots, X_n \iid f_0$ and $Y_1, \ldots, Y_m \iid p_0 f_0 + (1-p_0) g_0$, with $\eta_N \doteq n/N$ converging to $\Omega \in [0,1]$ as $N \to \infty$.
\end{assumption}

\begin{assumption}
    \label{a:leap-prior}
    The LEAP prior $\Pi = \Pi_1 \otimes \Pi_2 \otimes \Pi_3$ on $\Theta$ has a product form, i.e. in a prior draw  $(f,g,p) \sim \Pi$, the parameters $f$, $g$, and $p$ are independent of each other. 
\end{assumption}

While the component priors $\Pi_1, \Pi_2$ can be arbitrary non-parametric distributions on $\DenX$, in order to ensure consistency, we make the standard assumption \cite[Section~6.4]{ghosal2017fundamentals} that these priors assign positive mass to arbitrary neighborhoods of $f_0$ and $g_0$ (Assumption \ref{a:kl-supports} below), and that they give exponentially small mass outside a subset of densities that can be covered by exponentially many balls of any fixed size $\epsilon$ in the $L^1$ (or equivalently total variation) distance (Assumption \ref{a:prior-sieve} below).  We also assume that $p_0 \in (0,1)$ lies in the support of $\Pi_3$.

To state these assumptions precisely, consider first the \emph{Kullback Leibler} (KL) divergence between two densities $f, g \in \DenX$
$$
\KL(f|g) \doteq \begin{cases} 
        \int f(x) \log \frac{f(x)}{g(x)} dx &\text{if } f \ll g \\
        +\infty & \text{otherwise}
\end{cases}
$$
where we use the convention $0\log \frac{0}{0} = 0$ and say $f \ll g$ if $f(x) = 0$ whenever $g(x) = 0$. We say that $h_0 \in \DenX$ lies in the KL support of a distribution $\widetilde{\Pi}$ on $\DenX$ (denoted by $h_0 \in \KL(\widetilde{\Pi})$) if 
$$
\widetilde{\Pi}(\{h \in \DenX : \KL(h_0|h) < \epsilon\}) > 0 \qquad \forall \epsilon > 0.
$$

\begin{assumption}
    \label{a:kl-supports}
    $f_0 \in \KL(\Pi_1)$  and $g_0 \in \KL(\Pi_2)$ are in the Kullback-Leibler supports of $\Pi_1$ and $\Pi_2$. Further assume that  $\Pi_3([p_0-r,p_0 + r]) > 0$ for each $r > 0$.
\end{assumption}

Next, consider the $L^1$ norm $\|f-g\|_1 \doteq \int |f(x) - g(x)| dx$ on $\DenX$. The \emph{covering number} $\Nmc(\epsilon, A, \|\cdot\|_1)$ of a subset $A \subseteq \DenX$ is defined as the size $K$ of the smallest set  $\{f_1, \ldots, f_K\} \subseteq A$ such that for each $f \in A$ there is a $k \in \{1, \ldots,K\}$ such that $\|f-f_k\|_1 \leq \epsilon$. The next assumption states that for any $\epsilon > 0$ and sufficiently large $N \in \nat$, the priors $\Pi_1, \Pi_2$ place exponentially small masses outside regions $A_N, B_N \subseteq \DenX$, which can be covered with exponentially many $L^1$ balls of radius $\epsilon > 0$.

\begin{assumption}
    \label{a:prior-sieve}
    For each $\epsilon > 0$, there are constants $N_0, C > 0$ such that for each $N \in \nat$ with $N \geq N_0$, there are sets $A_N,B_N \subseteq \DenX$ such that
    \begin{enumerate}
        \item  $\log \Nmc(\epsilon, A_N, \|\cdot\|_1) \leq 3N\epsilon^2$ and $\log \Nmc(\epsilon, B_N, \|\cdot\|_1) \leq 3N\epsilon^2$, and 
        \item $\Pi_1(\DenX \setminus A_N) \leq e^{-NC}$ and $\Pi_2(\DenX \setminus B_N) \leq e^{-NC}$.
        \end{enumerate}
\end{assumption}

\begin{remark} Many standard non-parametric choices of prior distributions $\Pi_1$ and $\Pi_2$ will satisfy Assumptions \ref{a:kl-supports} and \ref{a:prior-sieve} for suitably regular $f_0, g_0 \in \DenX$ \cite[Chapter 7]{ghosal2017fundamentals}. For example, in Section \ref{sec:dp-mixture-of-Gaussians}, we show that a choice of Dirichlet process location-scale mixture of Gaussians for $\Pi_1$ and $\Pi_2$ will satisfy Assumptions \ref{a:kl-supports} and \ref{a:prior-sieve}, when $f_0$ and $g_0$ are sufficiently regular densities on $\R^d$, and the base measures satisfy the mild conditions in Assumption \ref{a:dp-base-measure}.
\end{remark}

Let $\Pi(\cdot|Z_{1:N})$ denote the LEAP posterior on $\Theta$, i.e.~given a measurable $T \subseteq \Theta$:
$$
\Pi(T|Z_{1:N}) \doteq \frac{\int_{T} L(\theta|Z_{1:N}) \Pi(d\theta)}{\int L(\theta|Z_{1:N}) \Pi(d\theta)}
$$

\begin{theorem}
\label{thm:consistency}
Suppose Assumptions \ref{a:data}, \ref{a:leap-prior}, \ref{a:kl-supports}, and \ref{a:prior-sieve} are satisfied. Then for every $\epsilon > 0$, almost surely, the LEAP posterior $\Pi(\cdot|Z_{1:N})$ satisfies
$$
\lim_{N \to \infty} \Pi(\{ \theta \in \Theta : \dLH(\theta, \theta_0) > \epsilon\} |Z_{1:N}) = 0.
$$
\end{theorem}

Note that this does not establish posterior consistency at $\theta_0$ since $\dLH$ is only a pseudo-metric on $\Theta$. Indeed, $\dLH(\theta, \theta_0) = 0$ for some $\Omega \in (0,1)$ only means that  $f = f_0$ and $(g,p)$ are such that:
\begin{equation}
p f_0 + (1-p) g = p_0 f_0 + (1-p_0) g_0. \label{eq:non-identifiability}
\end{equation}
For instance, $\dLH((f_0, h_0,0), (f_0, g_0, p_0)) = 0$ where $h_0 = p_0 f_0 + (1-p_0) g_0$. 

\begin{remark}
   Establishing consistency at $\theta_0$ is tricky because LEAP model suffers from some inherent non-identifiability. Given $\theta_0 = (f_0, h_0, p_0)$, consider the set $T_{{\theta_0}} \subseteq \Theta$ defined as
$$
T_{\theta_0} \doteq \{ \theta \in \Theta: \dLH(\theta, \theta_0) = 0\} = \{(f_0, g, p) : (g,p) \text{ satisfies } \eqref{eq:non-identifiability}\}.
$$
Even as $N \to \infty$, the posterior cannot discriminate (beyond the preference indicated by the prior) between two elements $\theta_1, \theta_2$ within $T_{\theta_0}$ because $L(Z_{1:N}|\theta) = L(Z_{1:N}|\theta_0)$ for each $\theta \in T_{\theta_0}$. Thus $\Pi(\cdot|Z_{1:N})$ when restricted to $T_{\theta_0}$ is simply proportional to the prior $\Pi(\cdot)$ restricted to this set. However, this does not prove that $\Pi(\cdot|Z_{1:N})$ is \emph{inconsistent} at $\theta_0$ since this exceptional set $T_{\theta_0}$ will have zero prior mass.
\end{remark}


We can however show that LEAP successfully estimates the density $f_0$ of current observations as soon as the limiting proportion of current to historical observations is non-zero, i.e. $\Omega > 0$. Indeed, noting    $d_H^2(f, f_0) \leq \frac{1}{\Omega} \dLH^2(\theta, \theta_0)$ for any $\theta = (f,g,p) \in \Theta$,  it is easy to establish the following corollary of Theorem \ref{thm:consistency}.
\begin{corollary}
Suppose that all conditions of Theorem \ref{thm:consistency} are satisfied and $\Omega > 0$. Then the LEAP posterior is Hellinger consistent at $f_0$. In other words, for every $\epsilon > 0$, almost surely
$$
\lim_{N \to \infty} \Pi(\{ (f,g,p) \in \Theta : d_H(f, f_0) > \epsilon\} |Z_{1:N}) = 0.
$$
\end{corollary}

\subsection{Proof of Theorem \ref{thm:consistency}}
\label{sec:proof-consistency}

Our proof uses ideas from Bayesian posterior consistency for i.n.i.d. observations \cite[Theorem 6.41]{ghosal2017fundamentals}. 

The LEAP model posits that observations $Z_{1:N}=(X_{1:n}, Y_{1:m})$ are independent with $X_1, \ldots, X_n \iid P_f$ and $Y_1, \ldots, Y_m \iid P_{p f + (1-p) g}$. Following the notation in \cite[Section 6.7.1]{ghosal2017fundamentals}, this leads to the definition of the pseudo-metric $\dLH$ that we introduced earlier in \eqref{def:LH-dist}, which is seen to satisfy
$$
\begin{aligned}
\dLH[\eta_N]^2(\theta_1, \theta_2) &= \frac{1}{N} \left( \sum_{i=1}^n d_H^2(f_1, f_2) + \sum_{j=1}^m d_H^2(p_1 f_1 + (1-p_1) g_1, p_2 f_2+(1-p_2) g_2) \right)\\
&= d_{N,H}^2(\bP_{\theta_1}^N, \bP_{\theta_2}^N)
\end{aligned}
$$
where $\eta_N = n/N$ is the fraction of total observations $N=n+m$ that are current, $\bP_{\theta}^N = P_f^{\otimes n} \otimes P_{p f + (1-p) g}^{\otimes m}$ denotes the joint law of data $Z_{1:N} = (X_{1:n}, Y_{1:m})$ under our LEAP model with parameter $\theta = (f,g,p)$, and $d_{N,H}(\otimes_{i=1}^N P_i, \otimes_{i=1}^N Q_i) \doteq \sqrt{\frac{1}{N} \sum_{i=1}^N d_H^2(P_i,Q_i)}$ is the standard \emph{root average square Hellinger metric} used for posterior consistency \cite[Section 6.7.1]{ghosal2017fundamentals}.

As stated in the following lemma, Theorem 
\ref{thm:consistency} can be established by checking two standard conditions. 


\begin{lemma} Suppose Assumption \ref{a:data} holds and the following two conditions are satisfied by a prior $\Pi$ on $\Theta$:
\label{lem:consistency-conditions} 
\begin{enumerate}
    \item (KL support condition) For every $\epsilon > 0$ there is a set $B \subseteq \Theta$ with $\Pi(B) > 0$ such that
\begin{equation}
\label{eq:kl-support-equation}
\Omega \KL(f_0| f) + (1-\Omega) \KL(p_0 f_0 + (1-p_0) g_0| p f + (1-p) g) \leq \epsilon
\end{equation}
for every $\theta = (f,g,p) \in B$.
    \item (Sieve condition) There is an $\epsilon_0 > 0$ such that given any $\epsilon \in (0, \epsilon_0]$, there are constants $C, N_0 > 0$ such that for each $N \in \nat$ with $N \geq N_0$, there is a subset $\Theta_N \subseteq \Theta$, such that
\begin{enumerate}
    \item $\log \Nmc(\epsilon, \Theta_N, \dLH[\eta_N]) \leq 12N \epsilon^2$, \text{ and }
    \item $\Pi(\Theta \setminus \Theta_N) \leq e^{-N C}$.
\end{enumerate}

\end{enumerate}
Then, for every $\delta > 0$, the posterior $\Pi(\cdot|Z_{1:N})$ satisfies almost surely
\begin{equation}
\label{eq:consistency-condition}
\lim_{N \to \infty} \Pi(\{ \theta: \dLH[\eta_N](\theta, \theta_0) > \delta\} |Z_{1:N}) = 0.
\end{equation}

\end{lemma}
\begin{proof}
    First we consider a consequence of the KL support condition. Fix an $\epsilon > 0$ and $B \subseteq \Theta$ be such that $\Pi(B) > 0$ and \eqref{eq:kl-support-equation} holds. An appeal to Jenesen's inequality with the conditional probability measure $\Pi_B(\cdot) \doteq \frac{\Pi(\cdot \cap B)}{\Pi(B)}$ shows
\begin{align*}
    \frac{1}{N} \log \int \frac{L(\theta|Z_{1:N})}{L(\theta_0|Z_{1:N})} \Pi(d\theta)   \notag
    &\geq \frac{\log \Pi(B)}{N} + \frac{1}{N}\log \int \frac{L(\theta|Z_{1:N})}{L(\theta_0|Z_{1:N})} \Pi_B(d\theta)\notag \\ 
    &\geq \frac{\log \Pi(B)}{N} + \frac{1}{N}\int \log \frac{L(\theta|Z_{1:N})}{L(\theta_0|Z_{1:N})} \Pi_B(d\theta)\\
    &= \frac{\log \Pi(B)}{N} - \frac{\eta_N}{n} \sum_{i=1}^n \int  \log \frac{f_0(X_i)}{f(X_i)} \Pi_B(d\theta) \\
    &- \frac{1-\eta_N}{m} \sum_{j=1}^m \int \log \frac{p_0 f_0(Y_i) + (1-p_0) g_0(Y_i)}{p f(Y_i) + (1-p) g(Y_i)} \Pi_B(d\theta).
\end{align*}
Here we have used $\int \frac{L(\theta|Z_{1:N})}{L(\theta_0|Z_{1:N})} \Pi(d\theta) \geq \int_B \frac{L(\theta|Z_{1:N})}{L(\theta_0|Z_{1:N})} \Pi(d\theta)$ in the first line, Jensen's inequality in the second line, and $L(\theta|Z_{1:N}) \doteq \prod_{i=1}^n f(X_i) \prod_{j=1}^m (p f(Y_j) + (1-p) g(Y_j))$ and $\eta_N \doteq n/N$ for any $\theta = (f,g,p) \in \Theta$ in the last line.

Under Assumption \ref{a:data}, the law of large numbers and Fubini's theorem show
$$
\begin{aligned}
 &\liminf_{N \to \infty} \frac{1}{N} \log \int \frac{L(\theta|Z_{1:N})}{L(\theta_0|Z_{1:N})} \Pi(d\theta)\\
 &\geq -\int  \left\{\Omega \KL(f_0|f) + (1-\Omega) \KL(p_0 f_0 + (1-p_0) g_0| p f + (1-p) g))   \right\} \Pi_B(df,dg,dp)\\
 &\geq -\sup_{(f,g,p) \in B} \left\{\Omega \KL(f_0|f) + (1-\Omega) \KL(p_0 f_0 + (1-p_0) g_0| p f + (1-p) g)\right\} \geq -\epsilon
\end{aligned}
$$
almost surely. Since $\epsilon > 0$ is arbitrary, this shows
\begin{equation}
    \label{eq:denom-liminf}
\liminf_{N \to \infty} \frac{1}{N} \log \int \frac{L(\theta|Z_{1:N})}{L(\theta_0|Z_{1:N})} \Pi(d\theta) \geq 0 \qquad \text{ almost surely.}
\end{equation}

Next, applying \cite[Proposition D.9]{ghosal2017fundamentals} along with the connection between $\dLH[\eta_N]$ and the root average square Hellinger metric discussed at the beginning of Section \ref{sec:proof-consistency}, for every $c, \epsilon > 0$, $N \in \nat$, and $\theta_1 \in \Theta$ such that $\dLH[\eta_N](\theta_0, \theta_1) > \epsilon$, there is a test $\tilde{\phi}_N = \tilde{\phi}_N(Z_{1:N}) \in [0,1]$ such that
$$
\bP_{\theta_0}^N(\tilde{\phi}_N) \leq ce^{-N \epsilon^2/8} \quad \text{ and } \quad \sup_{\substack{\theta' \in \Theta \\ \dLH[\eta_N](\theta',\theta_1) < \epsilon/5}} \bP_{\theta'}^N(1-\tilde{\phi}_N) \leq c^{-1}e^{- N \epsilon^2/8}.
$$ 

It suffices to show \eqref{eq:consistency-condition} given an arbitrary $\delta \in (0,10\epsilon_0]$. We will use the Sieve condition  with $\epsilon = \frac{\delta}{10}$ to obtain a $N
_0 \in \nat$, $C > 0$, and a decomposition $\Theta = \Theta_{N,1} \cup \Theta_{N,2}$ for each $N \geq N_0$. Writing 
$T_N = \{\theta: \dLH[\eta_N](\theta, \theta_0) > \delta\}$ as $T_{N,1} \cup T_{N,2}$ where $T_{N,1} = T_N \cap \Theta_{N}$ and $T_{N,2} = T_N \cap (\Theta \setminus \Theta_{N})$, it will be sufficient to show that $\Pi(T_{N,1}|Z_{1:N})$ and $\Pi(T_{N,2}|Z_{1:N})$ converge to zero almost surely as $N \to \infty$. By standard properties of covering numbers and the Sieve condition, for any subset $S \subseteq T_{N,1}  \subseteq \Theta_{N}$ it follows that $\log \Nmc(\delta/5, S, \dLH[\eta_N]) \leq \log \Nmc(\delta/10, \Theta_{N}, \dLH[\eta_N]) \leq \frac{3}{25} N \delta^2$ \cite[Appendix C]{ghosal2017fundamentals}.

First let us show that $\Pi(T_{N,1}|Z_{1:N})$ converges to zero as $N \to \infty$. The previously displayed inequalities allow us to apply \cite[Theorems D.5]{ghosal2017fundamentals} (we apply the theorem with  $P=\bP_{\theta_0}^N$, $\cQ = \{ \bP_{\theta}^N : \theta \in T_{N,1}\}$, $\xi = 1/5, c = 1/2, K=N/8$, $d=e$, $d(\bP_{\theta}^N, \bP_{\theta'}^N) = \dLH[\eta_N](\theta, \theta')$, $\epsilon_0 = 0$, $\epsilon = \delta$, $N(\epsilon) = \Nmc(\epsilon/5, \cQ, d)$, $j=1$ noting that $\log N(\delta) \leq \frac{3}{25}N \delta^2$) to conclude that for every $N \geq N_0$, there is a test $\phi_N = \phi_N(Z_{1:N}) \in [0,1]$ such that 
$$
\bP_{\theta_0}^N(\phi_N) \leq e^{-(1/8-3/25) N\delta^2} 
\quad \text{ and } \quad
\sup_{\theta \in T_{N,1}} \bP_{\theta}^N(1-\phi_N) \leq 2 e^{- \frac{N \epsilon^2}{8}}
$$
if $N_0$ is chosen sufficiently large to satisfy $e^{-(1/8-3/25) N_0 \delta^2} \leq 1/2$.
Now by Fubini's theorem we can note
$$
\bP_{\theta_0}^N\left[(1-\phi_N)\int_{T_{N,1}} \frac{L(\theta|Z_{1:N})}{L(\theta_0|Z_{1:N})}\Pi(d\theta)\right] = \int_{T_{N,1}} \bP_{\theta}^N(1-\phi_N) \Pi(d\theta) \leq 2 e^{- \frac{N \delta^2}{8}}
$$
since $\bP_{\theta_0}^N[(1-\phi_N(Z_{1:n}))\frac{L(\theta|Z_{1:N})}{L(\theta_0|Z_{1:N})}] = \bP_{\theta}^N[1-\phi_N(Z_{1:n})]$. 

Since the right hand side in the previous display continues to be summable in $N$ even when multiplied by $e^{\frac{N \delta^2}{16}}$, the Borel Cantelli theorem shows that $e^{\frac{N \delta^2}{16}} (1-\phi_N)\int_{T_{N,1}} \frac{L(\theta|Z_{1:N})}{L(\theta_0|Z_{1:N})}\Pi(d\theta)$ converges to zero almost surely as $N \to \infty$. Similarly, since $\bP_{\theta_0}^N(\phi_N)$ is summable in $N$, this also shows the almost sure convergence of $\phi_N$ to zero as $N \to \infty$. 

Finally, using \eqref{eq:denom-liminf} we note that $\int \frac{L(\theta|Z_{1:N})}{L(\theta_0|Z_{1:N})} \Pi(d\theta) \geq e^{-N \frac{\delta^2}{16}}$ for every $N$ bigger than a (random) $N_0 \in \nat$. This allows us to conclude that for $N \geq N_0$
$$
\begin{aligned}
\Pi(T_{N,1}|Z_{1:N}) &= \frac{\int_{T_{N,1}} \frac{L(\theta|Z_{1:N})}{L(\theta_0|Z_{1:N})} \Pi(d\theta)}{\int \frac{L(\theta|Z_{1:N})}{L(\theta_0|Z_{1:N})} \Pi(d\theta)}\\
 &\leq \phi_N + \frac{(1-\phi_N)\int_{T_{N,1}} \frac{L(\theta|Z_{1:N})}{L(\theta_0|Z_{1:N})} \Pi(d\theta)}{\int \frac{L(\theta|Z_{1:N})}{L(\theta_0|Z_{1:N})} \Pi(d\theta)}\\
 &\leq \phi_N + e^{\frac{N \delta^2}{16}} (1-\phi_N)\int_{T_{N,1}} \frac{L(\theta|Z_{1:N})}{L(\theta_0|Z_{1:N})} \Pi(d\theta),
\end{aligned}
$$
showing that $\limsup_{N \to \infty} \Pi(T_{N,1}|Z_{1:N}) = 0$ almost surely. 

Establishing that $\Pi(T_{N,2}|Z_{1:N})$ follows a very similar proof where $\phi_N = 0$. Indeed, we need only note that
$$
\bP_{\theta_0}^N\left[\int_{T_{N,2}} \frac{L(\theta|Z_{1:N})}{L(\theta_0|Z_{1:N})}\Pi(d\theta)\right] = \int_{T_{N,2}} \bP_{\theta_0}^N\left[\frac{L(\theta|Z_{1:N})}{L(\theta_0|Z_{1:N})}\right] \Pi(d\theta) = \Pi(T_{N,2}) \leq e^{-CN},
$$
and appealing to the Borel-Cantelli lemma as before shows that $e^{CN/2} \int_{T_{N,2}} \frac{L(\theta|Z_{1:N})}{L(\theta_0|Z_{1:N})}\Pi(d\theta)$ converges to zero almost surely. Using \eqref{eq:denom-liminf} as before, we can conclude that $\limsup_{N \to \infty} \Pi(T_{N,2}|Z_{1:N}) = 0$ almost surely. The result now follows by combining the bounds for $\Pi(T_{N,1}|Z_{1:N})$ and $\Pi(T_{N,2}|Z_{1:N})$.
\end{proof}

\begin{proof}[Proof of Theorem \ref{thm:consistency}]
    As we show in Lemma \ref{lem:verify-KL-support} and Lemma \ref{lem:verify-metric-entropy-condition} below, conditions of Lemma \ref{lem:consistency-conditions} will be satisfied under our assumptions. This shows that \eqref{eq:consistency-condition} holds with $\delta = \epsilon$ for every $\epsilon > 0$. 

    Since $\eta_N \to \Omega \in [0,1]$, for any $\delta > 0$ there is a $N_0 \in \nat$ such that $|\eta_N - \Omega| \leq \delta$ for each $N \geq N_0$. For any $\theta, \theta' \in \Theta$ and $N \geq N_0$, since $d^2_H(\cdot,\cdot) \leq 2$ (see proof of Lemma \ref{lem:pseudo-metric}) we have
    $$
    |\dLH^2(\theta, \theta') - \dLH[\eta_H]^2(\theta, \theta')| \leq |\Omega - \eta_N||d^2_H(f,f') - d^2_H(h,h')| \leq 2\delta
    $$
    where we write $h = p f  + (1-p) g$ and $h' = p'f' + (1-p') g'$ if $\theta = (f,g,p)$ and $\theta' = (f',g',p')$.
    
    Given any $\epsilon > 0$, the choice $\delta= \epsilon^2/4$ then shows that for each $N \geq N_0$:
    $$
    \Pi(\{\theta: \dLH^2(\theta, \theta_0) > \epsilon^2 \} | Z_{1:N}) \leq \Pi(\{\theta: \dLH[\eta_N]^2(\theta, \theta_0) > \epsilon^2/2 \} | Z_{1:N}).
    $$
    Since \eqref{eq:consistency-condition} holds, this shows that $\Pi(\{\theta: \dLH(\theta, \theta_0) > \epsilon\} | Z_{1:N})$ converges to zero almost surely as $N \to \infty$ for each $\epsilon > 0$.
\end{proof}

\subsubsection{Proof of the KL support condition}

\begin{lemma} If Assumptions \ref{a:leap-prior} and \ref{a:kl-supports} hold then the LEAP prior satisfies the KL support condition of Lemma \ref{lem:consistency-conditions}. In other words, for every $\epsilon > 0$, there is a set  $B \subseteq \Theta$  with $\Pi(B) > 0$ such that 
$$
\Omega \KL(f_0| f) + (1-\Omega) \KL(p_0 f_0 + (1-p_0) g_0| p f + (1-p) g) \leq \epsilon
$$
for every $\theta = (f,g,p) \in B$. 
\label{lem:verify-KL-support}
\end{lemma}
\begin{proof} By Assumption \ref{a:kl-supports}, there are measurable sets $B_1, B_2 \subseteq \DenX$ such that $\Pi_1(B_1) > 0$, $\Pi_2(B_2) > 0$ and $\KL(f_0|f) \leq \frac{\epsilon}{2}$ for every $f \in B_1$ and $\KL(g_0|g) \leq \frac{\epsilon}{2}$ for every $g \in B_2$. Further, for some suitably small $\delta > 0$ to be chosen later, we know that $\Pi_3$ assigns positive mass to $B_3 \doteq [p_0 - \delta, p_0 + \delta] \subseteq (0,1)$. Thus letting $B = B_1 \times B_2 \times B_3$, it follows from Assumption \ref{a:leap-prior} that $\Pi(B) = \Pi_1(B_1) \Pi_2(B_2)\Pi_3(B_3) > 0$. 
    
For any $\theta = (f,g,p) \in B$, we thus have $\KL(f_0|f) \leq \frac{\epsilon}{2}$, $\KL(g_0|g) \leq \frac{\epsilon}{2}$, and $|p - p_0| \leq \delta$. This also means that $\{f = 0\} \subseteq \{f_0=0\}$ and $\{g = 0\} \subseteq \{g_0 = 0\}$ by definition of the $\KL$ divergence. In particular defining
$$
h \doteq p_0 f_0 + (1-p_0) g_0, \quad k \doteq p f + (1-p) g, \quad \text{and } q \doteq p_0 f + (1-p_0) g
$$
and using $p,p_0 \in (0,1)$ shows $\{q = 0\} = \{k=0\} \subseteq \{h=0\}$ and
$$
\begin{aligned}
\KL(h|k) & = \int_{\{h > 0\}} h(x) \log \frac{h(x)}{k(x)} dx\\
&= \int_{\{h > 0\}} h(x) \log \frac{h(x)}{q(x)} dx + \int_{\{h > 0\}} h(x) \log \frac{q(x)}{k(x)} dx\\
&= \KL(h|q) + \int_{\{h > 0\}} h(x) \log \frac{q(x)}{k(x)} dx.
\end{aligned}
$$
Next the joint convexity of KL divergence \cite[Lemma 2.4b]{budhiraja2019analysis} shows
$$
\KL(h|q) \leq p_0 \KL(f_0|f) + (1-p_0) \KL(g_0|g) \leq \frac{\epsilon}{2}.
$$
Using the fact that $\log(1+x) \leq x$ for all $x > -1$, we can bound the second term as follows:
$$
\begin{aligned}
\int_{\{h > 0\}} h(x) \log \frac{q(x)}{k(x)} dx 
&= \int_{\{h > 0\}} h(x) \log \left(1 + \frac{p_0 - p}{k(x)} (f(x) - g(x)) \right) dx\\
&\leq (p_0-p)\int_{\{h > 0\}} \frac{h(x)}{k(x)} (f(x) - g(x)) dx\\
&= \frac{(p_0-p)}{p}\int_{\{h > 0\}}  \frac{h(x)}{k(x)} (k(x) - g(x)) dx\\
&= \frac{(p_0-p)}{p}\int_{\{h > 0\}}  h(x) \left(1 - \frac{g(x)}{k(x)}\right) dx\\
&\leq \frac{|p_0-p|}{p}\int_{\{h > 0\}}  h(x) \left|1 - \frac{g(x)}{k(x)}\right| dx\\
&\leq \frac{|p_0-p|}{p}\int_{\{h > 0\}}  h(x) \left(1 + \frac{g(x)}{k(x)}\right) dx \leq\frac{\delta}{p} (1 + \frac{1}{1-p}).
 \end{aligned}
$$
Since $p_0 \in (0,1)$, a choice $0 < \delta \leq \min(p_0, 1-p_0)/2$ ensures that $p \in  [\frac{p_0}{2}, 1 - \frac{p_0}{2}]$ for every $p \in B_3$, which means that the last expression is bounded above by $\frac{2\delta}{p_0} (1 + \frac{2}{p_0})$. This means that for a suitably small choice of $\delta > 0$, we will have $\int_{\{h > 0\}} h(x) \log \frac{q(x)}{k(x)} dx \leq \frac{\epsilon}{2}$. 

Combining the previous bounds we now see that 
$$
\KL(p_0 f_0 + (1-p_0) g_0|p f + (1-p) g) = \KL(h|k) \leq \epsilon
$$
for every $(f,g,p) \in B$. The result now follows since we already know that $\KL(f_0|f) \leq \epsilon$ since $f \in B_1$.
\end{proof}

\subsubsection{Proof of the Sieve condition}

Under Assumptions \ref{a:leap-prior} and \ref{a:prior-sieve} we will now show that our LEAP prior satisfies the Sieve condition in Lemma \ref{lem:consistency-conditions}. 
Namely, for any $\epsilon > 0$ we will show the existence of constants $C, N_0 > 0$, such that for each $N \in \nat$ with $N \geq N_0$ there is a subset $\Theta_{N} \subseteq \Theta$ such that: 
\begin{enumerate}
    \item $\sup_{\Omega \in [0,1]} \log \Nmc(\epsilon, \Theta_{N}, \dLH) \leq 12N \epsilon^2$, \text{ and }
    \item $\Pi(\Theta \setminus \Theta_{N} ) \leq e^{- CN}$.
\end{enumerate}

Recall from \eqref{def:LH-dist} that $\dLH$ is defined in terms of the Hellinger distance $d_H$ on densities $\DenX$.  However, owing to the close relation between $d_H$ and the $L^1$ distance $\|f-g\|_1 \doteq \int |f(x)-g(x)| dx$, it is often convenient to work with the latter. For example, a standard connection \cite[Eq.~8]{gibbs2002choosing} between the total-variation distance $d_{\textrm{TV}}(f,g) \doteq \frac{1}{2}\|f-g\|_1$ and the Hellinger distance $d_H$ shows
$$
\frac{1}{2} \|f-g\|_1 \leq d_{H}(f,g) \leq  \sqrt{\|f-g\|_1} \qquad \forall f,g \in \DenX.
$$
This allows us to reduce the task of finding $\Theta_{N}$ satisfying our Sieve condition above to standard results in the literature \cite[Chapter 7]{ghosal2017fundamentals} where such a condition is established for standard priors $\widetilde{\Pi}$ on the space of densities $\DenX$ with respect to the $L^1$ distance. 

\begin{lemma} Given subsets $A, B \subseteq \DenX$, for each $\epsilon > 0$, the set $\tilde{\Theta} = A \times B \times [0,1] \subseteq \Theta$ satisfies
$$
\sup_{\Omega \in [0,1]} \log \Nmc(\epsilon, \tilde{\Theta}, \dLH) \leq \log \Nmc(\epsilon^2/2, A, \|\cdot\|_1) + \log \Nmc(\epsilon^2/2, B, \|\cdot\|_1) + \log \lceil4/\epsilon^2\rceil.
$$
\label{lem:prod-space-entropy}
\end{lemma}
\begin{proof}
With $\delta \doteq \frac{\epsilon^2}{2}$, let $A_1 = \{f_1, \ldots, f_{K_1}\}$ and $B_1 = \{g_1, \ldots, g_{K_1}\}$ be $\delta$-nets\footnote{We call $T \subseteq S$ a $\delta$-net of a metric space $(S,d)$ if for each $s \in S$ there is a $t \in T$ such that $d(s,t) \leq \delta$.} for $A$ and $B$ under $L^1$ norm with minimal size, i.e.~$K_1 = \Nmc(\delta, A, \|\cdot\|_1)$ and $K_2 = \Nmc(\delta, B, \|\cdot\|_1)$. With $M \doteq \lceil\frac{4}{\epsilon^2}\rceil$, let $C_1 \doteq \{0, \frac{1}{M}, \ldots \frac{M-1}{M}\} \subseteq [0,1]$. Then  defining $\tilde{\Theta}_1 \doteq A_1 \times B_1 \times C_1$, we note: 
$$
\log |\tilde{\Theta}_1| = \log |A_1| + \log |B_1|  + \log |C_1| = \log K_1 + \log K_2  + \log M. 
$$
Thus it will be sufficient to show that $\tilde{\Theta}_1$ forms an $\epsilon$-net for $\tilde{\Theta}$ in the metric $\dLH$ for any $\Omega \in [0,1]$. This will follow if for any $\theta = (f,g,p) \in \tilde{\Theta}$ and $\Omega \in [0,1]$, we can produce $\theta' = (f',g',p') \in \tilde{\Theta}_1$ such that 
\begin{align}
\dLH^2(\theta, \theta') &\doteq \Omega d_{H}^2(f,f') + (1-\Omega) d_{H}^2(p f + (1-p)g, p'f'  + (1-p')g') \nonumber\\
&\leq \max(\|f-f'\|_1, \|p f + (1-p)g - p'f' - (1-p')g'\|_1) \leq \epsilon^2. \label{eq:dLH-L1-bound}
\end{align}

Fix any $\theta = (f,g,p) \in \tilde{\Theta}$. We now show that \eqref{eq:dLH-L1-bound} holds for some $(f',g',p') \in \tilde{\Theta}_1$. Indeed, since $\tilde{\Theta}_1 = A_1 \times B_1 \times C_1$, there is a $(f', g', p') \in \tilde{\Theta}_1$ such that $\|f-f'\|_1, \|g-g'\|_1 \leq \delta$ and $|p'-p| \leq \frac{1}{M}$. Since $\|\cdot\|_1$ is a norm satisfying $\|a h + bg\|_1\leq |a|\|h\|_1 + |b|\|g\|_1$ for any constants $a, b \in \R$ and functions $h,g: \cX \to \R$, we have:
$$
\begin{aligned}
&\|p f + (1-p)g - p'f' - (1-p')g'\|_1\\
&\leq \|p f + (1-p)g - p'f - (1-p')g\|_1 + \|p' f + (1-p')g - p'f' - (1-p')g'\|_1 \\
&\leq |p-p'|\|f-g\|_1 + p'\|f-f'\|_1 + (1-p')\|g-g'\|_1 \leq \frac{2}{M} +  \delta \leq \epsilon^2
\end{aligned}
$$
where the last line uses $\|f-g\|_1 \leq \|f\|_1 + \|g\|_1 = 2$ for densities $f, g \in \DenX$ and the definitions of $M$ and $\delta$. With this \eqref{eq:dLH-L1-bound} is seen to hold since we already know that $\|f-f'\|_1 \leq \frac{\epsilon^2}{2}$.
    
\end{proof}

The above result allows us to show that our Sieve condition will be satisfied.

\begin{lemma}
    Under Assumptions \ref{a:leap-prior} and \ref{a:prior-sieve}, the LEAP prior satisfies the Sieve condition in Lemma \ref{lem:consistency-conditions}.
    \label{lem:verify-metric-entropy-condition}
\end{lemma}
\begin{proof}
    In order to satisfy the Sieve condition in Lemma \ref{lem:consistency-conditions}, given $\delta \in (0,1]$ take $\epsilon = \delta^2/2$ in Assumption \ref{a:prior-sieve} to find $C', N'_0 > 0$ such that for every $N \in \nat$ with $N \geq N'_0$, there are sets $A_N, B_N \subseteq \DenX$ such that
    $$
    \max\{\log \Nmc(\epsilon, A_N, \|\cdot\|_1),\log \Nmc(\epsilon, B_N, \|\cdot\|_1)\} \leq 3 N\epsilon^2 \leq 3 N\epsilon = \frac{3}{2} N\delta^2
    $$
    and
    $$
    \Pi_1(\DenX \setminus A_N) \leq e^{-NC'} \quad \text{  and } \quad  \Pi_2(\DenX \setminus B_N) \leq e^{-NC'}.
    $$
    By Lemma \ref{lem:prod-space-entropy} the set $\Theta_N = A_N \times B_N \times [0,1]$ satisfies
    $$
    \sup_{\Omega \in [0,1]} \log \Nmc(\delta, \Theta_N, \dLH) \leq 3 N\delta^2 + \log \lceil 4/\delta^2 \rceil \leq 4 N \delta^2
    $$
    whenever $N \geq N_0 \doteq \max(N_0', \frac{\log \lceil 4/\delta^2.
    \rceil}{\delta^2})$.  Note further that 
    $$
    \Pi(\Theta \setminus \Theta_N) \leq \Pi_1(\DenX \setminus A_N) + \Pi_2(\DenX \setminus B_N) \leq 2e^{-NC'} \leq e^{-N C'/2}
    $$
    as long as $N_0 C' \geq 2$. The Sieve condition in Lemma \ref{lem:consistency-conditions} is thus seen to be satisfied with our notation $\epsilon_0 = 1$ and $\delta = \epsilon$.
\end{proof}

\subsubsection{Proof of auxiliary lemmas}
\label{sec:proof-aux-lemmas}

\begin{proof}[Proof of Lemma \ref{lem:pseudo-metric}]
    That $\dLH$ is symmetric, takes a values in $[0,\sqrt{2}]$, and satisfies $\dLH(\theta, \theta) = 0$ follows from the corresponding properties \cite{gibbs2002choosing} of the Hellinger metric $d_H$ on $\DenX$. 

    To see that $\dLH$ satisfies the triangle inequality, we start from the connection between the Hellinger metric  $d_H(f,g) = \|\sqrt{f}-\sqrt{g}\|_2$ and the norm
    $\|h\|_2 = \sqrt{\int h^2(x) dx}$ on the space $L^2(\cX)$ of square integrable functions on $\cX$, which is a Hilbert space. In particular the product $\Hmc = L^2(\cX) \times L^2(\cX)$ is also an Hilbert space with the component-wise inner-product:
    $$
    \langle (h,k), (l,m)\rangle \doteq \int h(x) l(x) dx +  \int k(x) m(x) dx
    $$
    the induced norm
    $$
    \|(h,k)\| \doteq \sqrt{\langle (h,k), (h,k) \rangle} = \sqrt{\|h\|_2^2 + \|k\|_2^2}.
    $$
    Importantly, we can use this discussion to write:
    $$
    \begin{aligned}
    \dLH(\theta_1, \theta_2) &\doteq \sqrt{\Omega d_H^2(f_1, f_2) + (1-\Omega) d_H^2\left(p_1 f_1 + (1-p_1) g_1, p_2 f_2 + (1-p_2) g_2\right)}\\
    &=\sqrt{\Omega \|\sqrt{f_1} - \sqrt{f_2}\|_2^2 + (1-\Omega) \left\|\sqrt{p_1 f_1 + (1-p_1) g_1} - \sqrt{p_2 f_2 + (1-p_2) g_2} \right\|^2_2}\\
    &= \|H_{\Omega} \theta_1 -  H_{\Omega} \theta_2\|
    \end{aligned}
    $$
    where $H_{\Omega} : \Theta \to \Hmc$ is given by
    $$
    H_{\Omega} (f,g,p)  \doteq \left(\sqrt{\Omega f}, \sqrt{(1-\Omega) \{p f + (1-p)g\}}\right).
    $$
    Given $\theta_1, \theta_2, \theta_3 \in \Theta$ we thus have the triangle inequality:
    $$
    \begin{aligned}
    \dLH(\theta_1, \theta_3) &= \|H_{\Omega} \theta_1 -  H_{\Omega} \theta_3\| = \|H_{\Omega} \theta_1 -  H_{\Omega} \theta_2 + H_{\Omega} \theta_2 - H_{\Omega} \theta_3  \|\\
    &\leq \|H_{\Omega} \theta_1 -  H_{\Omega} \theta_2\| + \|H_{\Omega} \theta_2 -  H_{\Omega} \theta_3\| = \dLH(\theta_1, \theta_2) + \dLH(\theta_2, \theta_3).
    \end{aligned}
    $$

    Finally, while $\|\cdot\|$ is a norm on $\Hmc$, $\dLH$ is not a metric since $H_{\Omega}$ is not a one-to-one mapping (see the discussion around \eqref{eq:non-identifiability}).  
\end{proof}

\section{Dirichlet process Gaussian mixture prior}
\label{sec:dp-mixture-of-Gaussians}



Here we note that Assumptions \ref{a:kl-supports} and \ref{a:prior-sieve} will be satisfied when $\Pi_1$ and $\Pi_2$ are Dirichlet process (DP) mixture of Gaussians \cite[Chapter 5]{ghosal2017fundamentals} and $f_0$,  $g_0$, along with the base measures used for the DP priors, are suitably regular. While results like  \cite[Theorems~7.3 \& 7.15]{ghosal2017fundamentals} provide a starting point to verify these assumptions for general  (non-Gaussian) mixture models, directly specializing \cite[Theorem 7.15]{ghosal2017fundamentals} to a Gaussian location-scale kernel seems to demand restrictive assumptions on the base measure. Instead, by modifying the results in \cite{Canale2017} according to their Remark 2, we establish these assumptions for a DP location-scale mixture of the multivariate Gaussian kernel with covariance restricted to a scalar multiple of the identity matrix. 

In the following, we assume that $\cX = \R^d$ and consider a DP Gaussian mixture prior $\widetilde{\Pi}$ on $\DenX[\R^d]$ corresponding to a draw $f=f_F \sim \widetilde{\Pi}$ given by 
\begin{equation}
f_F(x) \doteq \int \phi_d(x; \mu, \sigma^2 I_d) F(d\mu, d\sigma^2) = \sum_{h=1}^{\infty} W_h \phi_d(x; \mu_h, \sigma^2_h I_d)
\label{eq:stick-breaking-dp-mixture}
\end{equation}
based on a draw $F \sim \DP(\alpha, \nu \times \xi )$, the latter representing a Dirichlet process with concentration parameter $\alpha > 0$ and a base (product) probability measure $\nu \times \xi$ on $\R^d \times (0,\infty)$. Here $\phi_d(x; \mu, \Sigma) \doteq \frac{1}{\sqrt{(2\pi)^d \det \Sigma}} e^{ -\frac{1}{2} (x-\mu)^t \Sigma^{-1} (x-\mu)}$ denotes a general multivariate Gaussian density with mean $\mu \in \R^d$ and a positive definite covariance $\Sigma \in \R^{d \times d}$, and $W_h = V_h \prod_{j < h} (1-V_j)$ with $V_1, V_2, \ldots \stackrel{iid}{\sim} \textrm{Beta}(1, \alpha)$ denote the  weights in the stick-breaking representation $F=\sum_{h=1}^\infty W_h \delta_{(\mu_h, \sigma^2_h)}$ of the Dirichlet process (e.g.~\cite[Theorem~4.12]{ghosal2017fundamentals}) with independent draws $\mu_1,\mu_2, \ldots \iid \nu$ and $\sigma^2_1, \sigma^2_2, \ldots \iid \xi$.

\begin{remark}
Standard choices of the base measures $\nu$ and $\xi$ conjugate to the normal location-scale family considered above  are given by $\nu(d\mu) = \phi_d(\mu; m_0, \Sigma_0) d\mu$ and  $\xi(d\sigma^2) = \frac{b^{a}}{\Gamma(a)}(\sigma^2)^{-a-1} e^{-\frac{b}{\sigma^2}} d\sigma^2$, where the latter is the  Inverse-Gamma density with shape $a > 0$ and scale $b > 0$ (i.e.~$\sigma^2 \sim \xi$ is distributionally equivalent to $(\sigma^2)^{-1} \sim \textrm{Gamma}(a,b)$). As noted in \cite[Example~7.14]{ghosal2017fundamentals}, as long as $a > 1$, the pair $(\nu,\xi)$ will satisfy our  Assumption \ref{a:dp-base-measure} below.
\end{remark}

\begin{assumption}
   The base probability measures $\nu$ on $\R^d$ and $\xi$ on $(0,\infty)$ satisfy the following regularity conditions:
    \begin{enumerate}
    \item $\nu$ and $\xi$ have full support, i.e. $\textrm{supp}(\nu) = \R^d$ and $\textrm{supp}(\xi) = (0,\infty)$,
    \item $\nu$ has exponentially decaying tails, i.e.~there are constants $C, t_0 > 0$ such that $\nu(\R^d \setminus [-t,t]^d) \leq e^{-C t}$ for every $t \geq t_0$, and
    \item $\xi$ has a finite moment $K \doteq \int \sigma^2 \xi(d\sigma^2) < \infty$ and exponentially decreasing lower tail, i.e.~there are constants $c, h_0 > 0$ such that $\xi(\sigma < h ) \leq e^{-c/h^2}$ for all $h \in (0,h_0)$. 
\end{enumerate}
    \label{a:dp-base-measure}
\end{assumption}

Now we can use \cite[Theorem~5]{Wu2008} (along with \cite[Theorem 4.15]{ghosal2017fundamentals}), which states that $\tilde{f}_0 \in \KL(\widetilde{\Pi})$ whenever Assumption \ref{a:density-regularity} below is satisfied. Thus Assumption \ref{a:kl-supports} will be satisfied when  Assumption \ref{a:density-regularity} is satisfied by $\tilde{f}_0 \in \{f_0, g_0\}$ and $\widetilde{\Pi} \in \{\Pi_1, \Pi_2\}$ are chosen to be DP Gaussian mixture priors with base measures satisfying the first item in Assumption \ref{a:dp-base-measure}. 

\begin{assumption} $\tilde{f_0} \in \DenX[\R^d]$ is a continuous and bounded density satisfying:
    \begin{enumerate}
    \item $\int \tilde{f}_0(x) \log \tilde{f}_0(x) dx < \infty$,
    \item $-\int \tilde{f}_0(x) \log \inf_{\|y\| < \delta} \tilde{f}_0(x-y) dx < \infty$ for some $\delta > 0$, and
    \item $\int \|x\|^{2+\eta} \tilde{f}_0(x) dx < \infty$ for some $\eta > 0$.
    \end{enumerate}
    \label{a:density-regularity}
\end{assumption}

Next we turn our attention towards Assumption \ref{a:prior-sieve}. The key results to this end are the following two lemmas, which for a certain sieve, bound its $L^1$ metric entropy (log of the $L_1$ covering number) and the prior mass of its complement. The proofs of these lemmas, based on the application of ideas from \cite{Canale2017,shen2013adaptive} to our setting, are provided in Section \ref{sec:aux-lemmas-dp-mixture-prior}.

\begin{lemma}
Given $H \in \mathbb{N}$ and $M, s, \varepsilon, a >0$, define $\Fms \subseteq \DenX[\R^d]$ as
$$
\begin{aligned}
\Fms &\doteq \left\{ f_F : F = \sum_{h=1}^\infty w_h \delta_{(\mu_h, \sigma_h^2)} \text { with } (w_h)_{h=1}^\infty \in \Delta_{\infty} \text{ such that } \sum_{h>H} w_h \leq \varepsilon, and  \right.\\
&\qquad \qquad \left. \left\|\mu_h\right\| \leq a \text{ and } s \leq \sigma_h \leq s(1+\varepsilon / \sqrt{d})^M \text { for each } h \in\{1, \ldots, H\}\vphantom{F = \sum_{h=1}^H w_h \delta_{(\mu_h, \sigma_h^2)}}\right\}.
\end{aligned}
$$
We then have the $L^1$-metric entropy bound:
$$
\begin{aligned}
\log \Nmc\left(6\varepsilon, \Fms,\|\cdot\|_1\right) \leq  H \log \frac{5}{\varepsilon} + d H 
    \log \frac{3a}{s\varepsilon} +  H \log M
\end{aligned}
$$
whenever $\varepsilon \in (0, 1)$ and $a \geq s > 0$.
\label{lem:dp-mixture-sieve-entropy}
\end{lemma}

\begin{lemma}
    Suppose Assumption \ref{a:dp-base-measure} holds. Then for every  $\varepsilon, \delta \in (0,1)$, there are constants $C, N_0 > 0$ such that 
    \begin{align*}
    \widetilde{\Pi}(\DenX[\R^d] \setminus \Fms_{n}) \leq e^{- C n} \qquad \text{ for each } n \geq N_0
    \end{align*}
    where $\Fms_{n}$ is the sieve defined in Lemma \ref{lem:dp-mixture-sieve-entropy} with $s^{-2} = a = M = n$, $H = \left\lfloor \delta n / \log n\right\rfloor$, and $\varepsilon = \varepsilon$.
    \label{lem:dp-mixture-sieve-prior-mass}
\end{lemma}

Given any $\epsilon > 0$, taking $\varepsilon = \epsilon/6$ and $\delta = \epsilon^2/d$, consider the sieve $\Fms_n$ defined in Lemma \ref{lem:dp-mixture-sieve-prior-mass}. By Lemma \ref{lem:dp-mixture-sieve-entropy}, we have the $L^1$-metric entropy bound
$$
\begin{aligned}
\log \Nmc\left(\epsilon, \Fms_n,\|\cdot\|_1\right) &\leq \frac{\delta n}{\log n} \log \frac{30}{\epsilon} + \frac{d\delta n}{\log n} 
    \left(\frac{\log n}{2} + \log \frac{18}{\epsilon}\right) +  \frac{\delta n}{\log n} \log n\\
    &\leq 2 n \epsilon^2
\end{aligned}
$$
along with the prior mass bound from Lemma \ref{lem:dp-mixture-sieve-prior-mass}
$$
\widetilde{\Pi}(\DenX[\R^d] \setminus \Fms_{n}) \leq e^{- C n}
$$
whenever $n \geq N_0$ for suitable constants $N_0, C > 0$ that are independent of $n$. This verifies Assumption \ref{a:prior-sieve} when $\Pi_1$ and $\Pi_2$ are taken to be DP Gaussian mixture priors like \eqref{eq:stick-breaking-dp-mixture} above, with base measures satisfying Assumption \ref{a:dp-base-measure}.



\subsection{Proof of sieve entropy and prior mass lemmas}
\label{sec:aux-lemmas-dp-mixture-prior}

We start with the following standard lemma bounding the $L^1$-norm difference between two Gaussian kernels.

\begin{lemma}
    For any $\mu_1, \mu_2 \in \R^d$ and $\sigma_1, \sigma_2 \in (0,\infty)$, we have
$$
\begin{aligned}
\|\phi_d(\cdot; \mu_1, \sigma_1^2 I_d) - \phi_d(\cdot; \mu_2, \sigma_2^2 I_d) \|_1 \leq \frac{\|\mu_1-\mu_2\|}{\max(\sigma_1, \sigma_2)} +\frac{2\sqrt{d}|\sigma_1-\sigma_2|}{\min(\sigma_1, \sigma_2)}.
\end{aligned}
$$
    \label{lem:gaussian-kernel-lipschitz}
\end{lemma}
\begin{proof}
    Combining Pinsker's inequality and the Kullback--Leibler divergence between two multivariate Gaussians, one can note (see e.g. \cite[Proposition~2.1]{devroye2018total}):
        \begin{equation}
        \begin{aligned}
        &\|\phi_d(\cdot; \mu_1, \Sigma_1) - \phi_d(\cdot; \mu_2, \Sigma_2) \|_1 = 2\TV(\phi_d(\cdot; \mu_1, \Sigma_1), \phi_d(\cdot; \mu_2, \Sigma_2))\\
        &\leq \sqrt{(\mu_1 - \mu_2)^t \Sigma_2^{-1} (\mu_1 - \mu_2) + \textrm{tr}(\Sigma_2^{-1} \Sigma_1 - I_d) - \log \det (\Sigma_1 \Sigma_2^{-1})}.
        \end{aligned} \label{eq:tv-gaussians}
    \end{equation}

        Without loss of generality, suppose that $\sigma_1 \geq \sigma_2$.
        Using the triangle inequality we can write
        $$
        \begin{aligned}
        &\|\phi_d(\cdot; \mu_1, \sigma_1^2 I_d) - \phi_d(\cdot; \mu_2, \sigma_2^2 I_d) \|_1 \\
        &\leq \|\phi_d(\cdot; \mu_1, \sigma_1^2 I_d) - \phi_d(\cdot; \mu_2, \sigma_1^2 I_d) \|_1 + \|\phi_d(\cdot; \mu_2, \sigma_1^2 I_d) - \phi_d(\cdot; \mu_2, \sigma_2^2 I_d) \|_1.
        \end{aligned}
        $$
        Using \eqref{eq:tv-gaussians} to bound the first term shows
        $$
       \|\phi_d(\cdot; \mu_1, \sigma_1^2 I_d) - \phi_d(\cdot; \mu_2, \sigma_1^2 I_d) \|_1 \leq \frac{\|\mu_1 - \mu_2\|}{\sigma_1}.
        $$
        Using \eqref{eq:tv-gaussians} to bound the second term shows
        $$
        \|\phi_d(\cdot; \mu_2, \sigma_1^2 I_d) - \phi_d(\cdot; \mu_2, \sigma_2^2 I_d) \|_1 \leq \sqrt{d\left(\frac{\sigma_1^2}{\sigma_2^2} - 1 - \log\left(\frac{\sigma_1^2}{\sigma_2^2}\right)\right)}.
        $$
        Now we will use the identity $\log(1+x) \geq \frac{x}{1+x}$ for $x > -1$ to simplify further. Taking $x = \frac{\sigma_1^2}{\sigma_2^2} - 1$ shows
        $$
        \frac{\sigma_1^2}{\sigma_2^2} - 1 - \log\left(\frac{\sigma_1^2}{\sigma_2^2}\right) = x - \log(1+x) \leq x - \frac{x}{1+x} = \frac{x^2}{1+x}.
        $$
        This shows that
        $$
        \begin{aligned}
        \|\phi_d(\cdot; \mu_2, \sigma_1^2 I_d) - \phi_d(\cdot; \mu_2, \sigma_2^2 I_d) \|_1 &\leq \sqrt{d} \frac{|\frac{\sigma_1^2}{\sigma_2^2} - 1|}{\sqrt{\frac{\sigma_1^2}{\sigma_2^2}}} = \sqrt{d} \left|\frac{\sigma_1}{\sigma_2} - \frac{\sigma_2}{\sigma_1}\right|\\
        &=  \frac{\sqrt{d}(\sigma_1 - \sigma_2)(\sigma_1 + \sigma_2)}{\sigma_1\sigma_2} \leq \frac{2\sqrt{d}|\sigma_1 - \sigma_2|}{\sigma_2}.
        \end{aligned}
        $$
\end{proof}

Now we are ready to control the $L^1$ entropy of the sieve from Lemma \ref{lem:dp-mixture-sieve-entropy}.

\begin{proof}[Proof of Lemma \ref{lem:dp-mixture-sieve-entropy}]
    Note first that given any $F = \sum_{h=1}^\infty w_h \delta_{(\mu_h, \sigma_h^2)}$ and $F' = \sum_{h=1}^\infty w_h' \delta_{(\mu_h', \sigma_h'^2)}$ the triangle inequality and $\|f\|_1  = 1$ for $f \in \DenX$ yield
    $$
    \begin{aligned}
    &\|f_F - f_{F'}\|_1 = \left\|\sum_{h=1}^\infty w_h \phi_d(\cdot; \mu_h, \sigma_h^2 I_d) - \sum_{h=1}^\infty w_h' \phi_d(\cdot; \mu_h', \sigma_h'^2 I_d) \right\|_1 \\
    &\leq \sum_{h=1}^H w_h\|\phi_d(\cdot; \mu_h, \sigma_h^2 I_d) - \phi_d(\cdot; \mu_h', \sigma_h'^2 I_d)\|_1 + \sum_{h = 1}^H |w_h - w_h'| + \sum_{h > H} \{w_h + w_h'\}\\
    &\leq \max_{h \in \{1, \ldots, H\}} \left(\frac{\|\mu_h - \mu'_h\|}{\max(\sigma_h, \sigma'_h)} + \frac{2\sqrt{d}|\sigma_h - \sigma'_h|}{\min(\sigma_h, \sigma'_h)} \right) + \sum_{h = 1}^H |w_h - w_h'| + \sum_{h > H} \{w_h + w_h'\},
    \end{aligned}
    $$
    where we have used Lemma \ref{lem:gaussian-kernel-lipschitz} in the last line.

    Let $\varepsilon > 0$ be as given. We now construct a $(6\varepsilon)$-net $\Fmc \subseteq \DenX[\R^d]$ of $\Fms$ in the $L^1$-norm.  For this we define
    \begin{enumerate}
        \item $\Amc \subseteq \R^d$, an $s\varepsilon$-net of the set $\{\mu \in \R^d : \|\mu\| \leq a\}$ of the smallest size. Then $|\Amc| \leq (\frac{3a}{s\varepsilon})^d$ (e.g.~\cite[Proposition~C.2]{ghosal2017fundamentals}).
        \item $\Wmc \subseteq \Delta_{H}$ be a largest $\varepsilon$-dispersed subset of the $H$-dimensional unit simplex $\Delta_{H} = \{(w_1, \ldots, w_H) : w_i \geq 0, \sum_{i=1}^H w_i = 1 \}$ in the $\ell^1$-norm. Then $|\Wmc| \leq \left(\frac{5}{\varepsilon}\right)^{H-1}$ (e.g.~\cite[Proposition~C.1]{ghosal2017fundamentals}).
        \item $\Smc = \{\sigma_1, \ldots, \sigma_M\}$ with $\sigma_m = s(1+\varepsilon / \sqrt{d})^{m-1}$ for each $m \in \{1, \ldots, M\}$.
    \end{enumerate}
    We can use this to define
    $$
    \begin{aligned}
    \Fmc &\doteq \left\{ f_{\tilde{F}} : \tilde{F} = \sum_{h=1}^H \tilde{w}_h \delta_{(\tilde{\mu}_h, \tilde{\sigma}_h^2)} \text { with } (\tilde{w}_h)_{h=1}^H \in \Wmc, \text{ and}   \right.\\
&\qquad \qquad \left. \tilde{\mu}_h \in \Amc, \tilde{\sigma}_h \in \Smc \text{ for each }  h \in\{1, \ldots, H\} \right\}.
    \end{aligned}
    $$
    The proof will be complete once we show that $\Fmc$ is a $(6\varepsilon)$-net of $\Fms$ in the $L^1$-norm since:
    $$
    \begin{aligned}
    |\Fmc| &\leq |\Wmc| \cdot |\Amc|^H \cdot |\Smc|^H \leq \left(\frac{5}{\varepsilon}\right)^{H-1} \left(\frac{3a}{s\varepsilon}\right)^{dH} M^H\\
    &\leq \exp \left\{H \log \frac{5}{\varepsilon} + d H 
    \log \frac{3a}{s\varepsilon} +  H \log M\right\}.
    \end{aligned}
    $$
%
    To this end, pick an arbitrary $f_F \in \Fms$ with $F = \sum_{h=1}^{\infty} w_h \delta_{(\mu_h, \sigma_h^2)}$ satisfying
    \begin{enumerate}
        \item $\|\mu_h\| \leq a$ for each $h \in \{1, \ldots, H\}$,
        \item $s \leq \sigma_h \leq s(1+\varepsilon / \sqrt{d})^M$ for each $h \in \{1, \ldots, H\}$, and
        \item $\sum_{h > H} w_h \leq \varepsilon$.
    \end{enumerate} 
    For each $h \in \{1, \ldots, H\}$, we can find $\tilde{\mu}_h \in \Amc$ and $\tilde{\sigma}_h^2 \in \Smc$ such that
    $$
    \|\mu_h - \tilde{\mu}_h\| \leq s \varepsilon, \quad \text{and } 1 \leq \frac{\sigma_h}{\tilde{\sigma}_h} \leq  1 + \frac{\varepsilon}{\sqrt{d}}.
    $$
    Further since $\bar{w}_h \doteq w_h / \sum_{j=1}^H w_j$ for $h \in \{1, \ldots, H\}$ defines a probability vector $\bar{w} = (\bar{w}_1, \ldots, \bar{w}_H) \in \Delta_H$, we can find a $\tilde{w} = (\tilde{w}_1, \ldots, \tilde{w}_H) \in \Wmc$ such that $\|\bar{w} - \tilde{w}\|_1 \leq \varepsilon$. This means that:
    $$
    \begin{aligned}
    \sum_{h=1}^H |w_h - \tilde{w}_h| &\leq \sum_{h=1}^H \left|w_h - \bar{w}_h\right| + \sum_{h=1}^H \left|\bar{w}_h - \tilde{w}_h\right|\\
    &= \sum_{h=1}^H w_h \left|1 - \frac{1}{\sum_{j=1}^H w_j}\right| + \sum_{h=1}^H \left|\bar{w}_h - \tilde{w}_h\right|\leq 2 \varepsilon.
    \end{aligned}
    $$
    Thus defining $\tilde{F} \doteq \sum_{h=1}^\infty \tilde{w}_h \delta_{(\tilde{\mu}_h, \tilde{\sigma}_h^2)}$ with $\tilde{w}_h = 0$ for $h > H$ and noting $f_{\tilde{F}} \in \Fmc$, we can use the above bounds along with our initial inequality to note that
    $$    \begin{aligned}
    \|f_F - f_{\tilde{F}}\|_1 &\leq \max_{h \in \{1, \ldots, H\}} \left(\frac{\|\mu_h - \tilde{\mu}_h\|}{\max(\sigma_h, \tilde{\sigma}_h)} + \frac{2\sqrt{d}|\sigma_h - \tilde{\sigma}_h|}{\min(\sigma_h, \tilde{\sigma}_h)} \right) + \sum_{h = 1}^H |w_h - \tilde{w}_h| + \sum_{h > H} \{w_h + \tilde{w}_h\}\\
    &\leq \frac{s \varepsilon}{s} + 2\sqrt{d} \left((1 + \frac{\varepsilon}{\sqrt{d}}) - 1\right) + 2 \varepsilon + \varepsilon = 6 \varepsilon.
    \end{aligned}
    $$
\end{proof}

\begin{proof}[Proof of Lemma \ref{lem:dp-mixture-sieve-prior-mass}] For the sieve $\Fms$ defined in Lemma \ref{lem:dp-mixture-sieve-entropy}, note from \eqref{eq:stick-breaking-dp-mixture} that 
    \begin{equation}
\begin{aligned}
&\widetilde{\Pi}\left(\DenX[\R^d] \setminus \Fms\right)\\
&\leq \operatorname{Pr}\left\{\sum_{h>H} W_h>\varepsilon\right\}+ \sum_{h=1}^H \left\{ \Pr\left(\sigma_h <s\right)+\Pr\left(\sigma_h >s(1+\varepsilon / \sqrt{d})^{M}\right) +\Pr\left(\|\mu_h\|>a\right)\right\},
\end{aligned} \nonumber
\end{equation}
where $W_h = V_h \prod_{j < h} (1-V_j)$ with $V_1, V_2, \ldots \stackrel{iid}{\sim} \textrm{Beta}(1, \alpha)$ denote the  weights in the stick-breaking representation $F=\sum_{h=1}^\infty W_h \delta_{(\mu_h, \sigma^2_h)}$ of the Dirichlet process, with independent draws $\mu_1,\mu_2, \ldots \iid \nu$ and $\sigma^2_1, \sigma^2_2, \ldots \iid \xi$.

    There is a standard trick (e.g.~\cite[Proposition 2]{shen2013adaptive} or \cite[Theorem 7.15]{ghosal2017fundamentals}) to bound the first term, which shows:
    \begin{align*}
    \Pr\left(\sum_{h > H} W_h \geq \varepsilon\right) \leq \left(\frac{e \alpha \log \varepsilon^{-1}}{H}\right)^H.
    \end{align*}

    The remaining terms can now be bounded using Assumption \ref{a:dp-base-measure}. Namely, as long as $s < h_0$ and $a > t_0$, we have
    $$
    \Pr(\sigma_h <s) \leq e^{-c/s^2}, \quad \Pr\left(\sigma_h >s(1+\varepsilon / \sqrt{d})^{M}\right) \leq \frac{K}{s^2(1+\varepsilon / \sqrt{d})^{2M}}, \quad \Pr(\|\mu_h\|>a) \leq e^{-C a}.
    $$
    For the choices $s^{-2} = M = a = n$ and $H = \frac{\delta n}{\log n}$, this leads to the bound
    $$
    \begin{aligned}
    \widetilde{\Pi}\left(\DenX[\R^d] \setminus \Fms_{n}\right) &\leq \left(\frac{e \alpha \log \varepsilon^{-1}}{H}\right)^H + H e^{-c/s^2} + \frac{K H}{s^2(1+\varepsilon / \sqrt{d})^{2M}} + H e^{-C a}\\
    &\leq \left(\frac{e \alpha \log \varepsilon^{-1} \log n}{\delta n}\right)^{\frac{\delta n}{\log n}} + \frac{\delta n}{\log n} e^{-c n} + \frac{K \delta n^2}{\log n (1+\varepsilon / \sqrt{d})^{2n}} + \frac{\delta n}{\log n} e^{-C n}\\
    &= e^{ - \delta n + o(n)} + e^{-c n + o(n)} + e^{-2 n \log (1+\varepsilon / \sqrt{d}) + o(n)} + e^{-C n + o(n)},
    \end{aligned}
    $$
    which completes the proof.
\end{proof}

\section{Data analysis}
\subsection{Tipping point analysis}
When borrowing information for analyses submitted to regulators, it has become common to perform a tipping point analysis to understand the sensitivity of the study conclusions to the amount of borrowing \citep{Best2021Bayesian}. Using the LEAP, it is quite easy to perform this type of analysis by fixing a value for the proportion of exchangeability ($\gamma$) over a range of values. We perform such a tipping point analysis for $\gamma \in \{ 0.0, 0.1, 0.2, \ldots, 1.0 \}$. The results are presented in Figure~\ref{fig:tipping_point}, where we show the posterior mean and 95\% credible interval for the difference in survival probabilities at 12 months. The 95\% credible interval excludes $0$ at $\gamma \in \{0.9, 1.0\}$.
%
\begin{figure}
    \centering
    \includegraphics[width=0.9\linewidth]{figures/tipping_point.eps}
    \caption{Tipping point analysis for difference in 12-month survival probabilities. The points depict the posterior mean and the bars depict the 95\% credible interval.}
    \label{fig:tipping_point}
\end{figure}
%

Interestingly, the lower tail of the CI is quite robust to the degree of borrowing while the upper tail shifts more dramatically particularly when $\gamma > 0.5$. We conjecture that this is because, as shown in Figure~\ref{fig:km_curve}, there is some separation between the historical control data in a neighborhood less than 12 months while congruence is observed in a neighborhood greater than 12 months. Indeed, changing the estimand to 24 months (data not shown) shows that the lower bound also changes.




\begin{longtable}{>{\raggedright\arraybackslash}p{4.4cm}
                  >{\raggedright\arraybackslash}p{6.6cm}
                  >{\raggedright\arraybackslash}p{3.5cm}}
\caption{Notation used in the paper. Sections~E--F use deliberately
         different notation for the consistency proof; those symbols are
         listed separately at the end.}
\label{tab:notation}\\
\toprule
\textbf{Symbol} & \textbf{Description} & \textbf{First used} \\
\midrule
\endfirsthead
\multicolumn{3}{c}{\tablename~\thetable{} (continued)}\\
\toprule
\textbf{Symbol} & \textbf{Description} & \textbf{First used} \\
\midrule
\endhead
\midrule
\multicolumn{3}{r}{\textit{Continued on next page}}\\
\endfoot
\bottomrule
\endlastfoot

\multicolumn{3}{l}{\textit{Data}}\\
\midrule
$n$ & Number of current-trial individuals & \S 2 \\
$n_0$ & Number of historical individuals & \S 2 \\
$y_i$ & $i$-th current-trial outcome, $i=1,\ldots,n$ & \S 2 \\
$y_{0j}$ & $j$-th historical outcome, $j=1,\ldots,n_0$ & \S 2 \\
$\bm{y},\,\bm{y}_0$ & Vectors of current and historical outcomes & \S 2 \\

\addlinespace[4pt]
\multicolumn{3}{l}{\textit{Latent variables}}\\
\midrule
$\epsilon_{0j}\in\{0,1\}$ &
    Exchangeability indicator for historical individual $j$:
    $\epsilon_{0j}=1$ (exchangeable with current data),
    $\epsilon_{0j}=0$ (nonexchangeable);
    $\epsilon_{0j}\overset{\mathrm{ind}}{\sim}\mathrm{Bernoulli}(\gamma)$ &
    \S 2 \\
$z_i\in\{1,2,\ldots\}$ &
    Cluster assignment for current-data individual $i$;
    always an exchangeable cluster & \S 2 \\
$z_{0j}\in\{1,2,\ldots\}$ &
    Cluster assignment for historical individual $j$;
    indexes exchangeable clusters $k$ if $\epsilon_{0j}=1$,
    nonexchangeable clusters $l$ if $\epsilon_{0j}=0$ & \S 2 \\

\addlinespace[4pt]
\multicolumn{3}{l}{\textit{Mixture weight and prior on $\gamma$}}\\
\midrule
$\gamma\in[0,1]$ &
    Probability that a historical individual is exchangeable
    with the current data & \S 2 \\
$p_0\in[0,1]$ &
    Prior point-mass weight at $\gamma=1$;
    $\pi(\gamma)=p_0\,\delta_1(\gamma)+(1-p_0)\,
    \mathrm{Beta}(\gamma\mid c_1,c_0)$ & \S 2 \\
$c_1,\,c_0>0$ &
    Beta hyperparameters for the continuous component of $\pi(\gamma)$;
    $c_1$ governs tendency toward full exchangeability ($\gamma=1$),
    $c_0$ toward no exchangeability ($\gamma=0$);
    $c_1=c_0=1$ gives a uniform prior & \S 2 \\

\addlinespace[4pt]
\multicolumn{3}{l}{\textit{Exchangeable component (cluster index $k$)}}\\
\midrule
$f(y\mid\bm{\theta})$ &
    Kernel density; $\bm{\theta}$ collects cluster parameters & \S 2 \\
$\mu\sim\mathrm{DP}(\alpha,P_0)$ &
    Random mixing measure for the exchangeable component & \S 2 \\
$f_\mu(y)=\int f(y\mid\bm{\theta})\,\mu(d\bm{\theta})$ &
    Induced exchangeable mixture density & \S 2 \\
$\alpha>0$ &
    Concentration parameter of the exchangeable DP & \S 2 \\
$P_0$ &
    Base measure of the exchangeable DP;
    prior on each $\bm{\theta}_k$ & \S 2 \\
$v_k\overset{\mathrm{iid}}{\sim}\mathrm{Beta}(1,\alpha)$ &
    Stick-breaking variables for the exchangeable weights,
    $k=1,\ldots,K-1$ & \S 2 \\
$w_k=v_k\prod_{m<k}(1-v_m)$ &
    Weights for the exchangeable mixture;
    $\bm{w}\sim\mathrm{GEM}(\alpha)$ & \S 2 \\
$\bm{\theta}_k\overset{\mathrm{iid}}{\sim}P_0$ &
    Parameters for exchangeable cluster $k$ & \S 2 \\
$K$ &
    Number of occupied exchangeable clusters
    (random; updated each MCMC step) & Supp.~\S\S B--C \\
$a_\alpha,\,b_\alpha$ &
    Shape and rate of the Gamma hyperprior on $\alpha$ & Supp.~\S C \\

\addlinespace[4pt]
\multicolumn{3}{l}{\textit{Nonexchangeable component (cluster index $l$)}}\\
\midrule
$g(y\mid\bm{\lambda})$ &
    Kernel density; $\bm{\lambda}$ collects cluster parameters & \S 2 \\
$\nu\sim\mathrm{DP}(\eta,Q_0)$ &
    Random mixing measure for the nonexchangeable component & \S 2 \\
$g_\nu(y)=\int g(y\mid\bm{\lambda})\,\nu(d\bm{\lambda})$ &
    Induced nonexchangeable mixture density & \S 2 \\
$\eta>0$ &
    Concentration parameter of the nonexchangeable DP & \S 2 \\
$Q_0$ &
    Base measure of the nonexchangeable DP;
    prior on each $\bm{\lambda}_l$ & \S 2 \\
$\psi_l\overset{\mathrm{iid}}{\sim}\mathrm{Beta}(1,\eta)$ &
    Stick-breaking variables for the nonexchangeable weights,
    $l=1,\ldots,L-1$ & \S 2 \\
$\rho_l=\psi_l\prod_{m<l}(1-\psi_m)$ &
    Weights for the nonexchangeable mixture;
    $\bm{\rho}\sim\mathrm{GEM}(\eta)$ & \S 2 \\
$\bm{\lambda}_l\overset{\mathrm{iid}}{\sim}Q_0$ &
    Parameters for nonexchangeable cluster $l$ & \S 2 \\
$L$ &
    Number of occupied nonexchangeable clusters
    (random; updated each MCMC step) & Supp.~\S\S B--C \\
$a_\eta,\,b_\eta$ &
    Shape and rate of the Gamma hyperprior on $\eta$ & Supp.~\S C \\

\addlinespace[4pt]
\multicolumn{3}{l}{\textit{Class-level counts}}\\
\midrule
$N_0=\sum_{j=1}^{n_0}\epsilon_{0j}$ &
    Number of exchangeable historical individuals & Supp.~\S\S B--C \\
$M_0=\sum_{j=1}^{n_0}(1-\epsilon_{0j})$ &
    Number of nonexchangeable historical individuals;
    $N_0+M_0=n_0$ & Supp.~\S\S B--C \\

\addlinespace[4pt]
\multicolumn{3}{l}{\textit{Cluster-level counts}}\\
\midrule
$n_k=\sum_{i=1}^{n}I(z_i=k)$ &
    Current-data individuals in exchangeable cluster $k$ &
    Supp.~\S B \\
$N_{0k}=\sum_{j=1}^{n_0}I(\epsilon_{0j}=1,\,z_{0j}=k)$ &
    Exchangeable historical individuals in cluster $k$ &
    Supp.~\S B \\
$\tilde{N}_k=n_k+N_{0k}$ &
    Total individuals (current + exchangeable historical)
    in exchangeable cluster $k$ & Supp.~\S C \\
$M_{0l}=\sum_{j=1}^{n_0}I(\epsilon_{0j}=0,\,z_{0j}=l)$ &
    Nonexchangeable historical individuals in cluster $l$ &
    Supp.~\S B \\

\addlinespace[4pt]
\multicolumn{3}{l}{\textit{Cluster datasets}}\\
\midrule
$D^{\mathrm{exch}}_k$ &
    $\{y_i:z_i=k\}\cup\{y_{0j}:\epsilon_{0j}=1,\,z_{0j}=k\}$;
    data for exchangeable cluster $k$ & Supp.~\S C \\
$D^{\mathrm{unexch}}_l$ &
    $\{y_{0j}:\epsilon_{0j}=0,\,z_{0j}=l\}$;
    data for nonexchangeable cluster $l$ & Supp.~\S C \\

\addlinespace[4pt]
\multicolumn{3}{l}{\textit{ANOVA DDP instantiation (\S 2.7 and Supp.~\S C.8)}}\\
\midrule
$y_i=\min\{\log t_i,\log c_i\}$ &
    Log observed time (event or censoring) & \S 2.7 \\
$a_i\in\{0,1\}$ &
    Treatment arm indicator ($0=$ control, $1=$ treatment) & \S 2.7 \\
$\bm{\theta}_k=(\bm{\beta}_k',\tau_k)'$ &
    Exchangeable cluster parameters: arm-specific log-normal
    intercepts $\bm{\beta}_k$ and precision $\tau_k$ & \S 2.7 \\
$\bm{\mu}_0,\,\bm{\Sigma}_0$ &
    Mean and covariance of the Normal base measure $P_0$
    for $\bm{\beta}_k$; elicited from the log-normal AFT MLE & Supp.~\S C.8 \\
$\delta_\alpha,\,\kappa_\alpha$ &
    Shape and rate of the Gamma base measure $P_0$ for $\tau_k$ &
    Supp.~\S C.8 \\

\addlinespace[6pt]
\multicolumn{3}{l}{\textit{Sections~E--F only: posterior consistency proof
    (deliberately different notation)}}\\
\midrule
$\theta=(f,g,p)\in\Theta$ &
    Abstract LEAP parameter: current density $f$,
    nonexchangeable density $g$, mixture proportion $p$.
    Here $f,g$ are \emph{densities}, not the kernels
    $f(\cdot\mid\bm{\theta})$, $g(\cdot\mid\bm{\lambda})$
    of Sections~A--D & Supp.~\S E \\
$\Theta=\mathscr{D}(\mathcal{X})^2\times[0,1]$ &
    Parameter space for the consistency proof & Supp.~\S E \\
$d_H(f,g)$ & Hellinger distance between densities & Supp.~\S E \\
$d_{\Omega,H}(\theta_1,\theta_2)$ &
    Weighted pseudo-metric on $\Theta$; $\Omega\in[0,1]$ is
    the limiting fraction of current observations & Supp.~\S E \\
$\iota_N\doteq n/N,\;N=n+n_0$ &
    Fraction of total observations that are current;
    $\iota_N\to\Omega$ as $N\to\infty$ & Supp.~\S E \\
$\Pi=\Pi_1\otimes\Pi_2\otimes\Pi_3$ &
    Product prior on $\Theta$ & Supp.~\S E \\
$\theta_0=(f_0,g_0,p_0)$ &
    True data-generating parameter & Supp.~\S E \\

\addlinespace[6pt]
\multicolumn{3}{l}{\textit{Disambiguation notes}}\\
\midrule
\multicolumn{3}{p{14.5cm}}{%
\textbf{(1) Index letter determines component throughout
Sections~A--D.}
The letter $k$ always indexes exchangeable clusters and the letter
$l$ always indexes nonexchangeable clusters.
Count symbols follow the same rule: $n_k$, $N_{0k}$, $\tilde{N}_k$
are always exchangeable-component quantities;
$M_{0l}$ is always a nonexchangeable-component quantity.
No $F$/$G$ subscripts on count quantities are needed.
}\\[3pt]
\multicolumn{3}{p{14.5cm}}{%
\textbf{(2) $c_1$ and $c_0$ mirror $\epsilon_{0j}$.}
The Beta hyperparameters $c_1$ (exchangeable) and $c_0$ (nonexchangeable)
are subscripted by the values of $\epsilon_{0j}$ they govern,
making the prior $\pi(\gamma)\propto\gamma^{c_1-1}(1-\gamma)^{c_0-1}$
immediately readable.
}\\[3pt]
\multicolumn{3}{p{14.5cm}}{%
\textbf{(3) $P_0$ and $Q_0$ are base measures, not CDFs.}
$P_0$ is the base measure of the exchangeable DP (the prior on
each $\bm{\theta}_k$) and $Q_0$ is the base measure of the
nonexchangeable DP (the prior on each $\bm{\lambda}_l$).
Neither is the abstract CDF $F$ or $G$ from Proposition~1 /
Supplement~\S A.
}\\[3pt]
\multicolumn{3}{p{14.5cm}}{%
\textbf{(4) Section~E uses $\iota_N$ for the current-data fraction.}
The symbol $\iota_N=n/(n+n_0)$ in Sections~E--F plays the role
that the ratio $n/N$ plays in the proof of posterior consistency.
It is distinct from all parameters in Sections~A--D.
}\\
\end{longtable}

\bibliography{03_refs}